\documentclass[a4paper,UKenglish,cleveref, autoref, thm-restate]{lipics-v2021}
\nolinenumbers
%This is a template for producing LIPIcs articles. 
%See lipics-v2021-authors-guidelines.pdf for further information.
%for A4 paper format use option "a4paper", for US-letter use option "letterpaper"
%for british hyphenation rules use option "UKenglish", for american hyphenation rules use option "USenglish"
%for section-numbered lemmas etc., use "numberwithinsect"
%for enabling cleveref support, use "cleveref"
%for enabling autoref support, use "autoref"
%for anonymousing the authors (e.g. for double-blind review), add "anonymous"
%for enabling thm-restate support, use "thm-restate"
%for enabling a two-column layout for the author/affilation part (only applicable for > 6 authors), use "authorcolumns"
%for producing a PDF according the PDF/A standard, add "pdfa"

\usepackage[linesnumbered,titlenotnumbered,ruled]{algorithm2e}
\usepackage{tikz,pgffor}
\usetikzlibrary{arrows,backgrounds}
\usepackage{tkz-euclide}
\usetikzlibrary{arrows}
\usetikzlibrary{shapes}
\usetikzlibrary{calc}
\usetikzlibrary{automata}
\usetikzlibrary{positioning}
\usepackage{listings}
\usepackage{transparent}
\lstset{language=Python,basicstyle=\small,keywordstyle=\bfseries,
    backgroundcolor=\color{white},commentstyle=\ttfamily,morekeywords={while,if,then,else,do,done,choose}}

%,escapeinside={(*}{*)},keywordstyle=\color{blue},
%	commentstyle=\color{commentcolor}\itshape\ttfamily\footnotesize,
%	breaklines=true,
%	basicstyle=\small\rmfamily,
%	keywordstyle=\color{keycolor}\bfseries,
%	morekeywords={and,or,to,in,repeat,until,output,procedure},
%	emph={compute_complexity,iteration_scheme_algorithm,IterationSchemeAlgorithm,Alg,Lc,LargeCounters},%
%	emphstyle=\color{blue},
%	emph={[2],True,False,0,1,2,3,4,5,6,7,8,9},emphstyle={[2]\color{violet}},
%	numbers=left, numberstyle=\tiny\color{numbercolor},numbersep=0pt
%	%frame=single,
%	%framexleftmargin=12pt
%}
\usepackage{calrsfs}
\DeclareMathAlphabet{\pazocal}{OMS}{zplm}{m}{n}

%\usepackage{thm-restate}

%\graphicspath{{./graphics/}}%helpful if your graphic files are in another directory
\newcommand{\N}{\mathbb{N}}

\newcommand{\Z}{\mathbb{Z}}
\newcommand{\A}{\pazocal{A}}
\newcommand{\C}{\pazocal{C}}
\newcommand{\D}{\pazocal{D}}

\newcommand{\B}{\pazocal{B}}

\newcommand{\U}{\pazocal{U}}
\newcommand{\V}{\pazocal{V}}
\newcommand{\calG}{\pazocal{G}}
\newcommand{\calC}{\pazocal{C}}
\newcommand{\calL}{\pazocal{L}}
\newcommand{\cL}{\mathcal{L}}
\newcommand{\calO}{\pazocal{O}}
\newcommand{\calE}{\pazocal{E}}

\newcommand{\calP}{\pazocal{P}}

\newcommand{\ce}[1]{\left(#1 \right)}

\newcommand{\eff}[1]{\Delta(#1)}

\newcommand{\Tval}{\mathit{Tval}}
\newcommand{\Cval}{\mathit{Cval}}

\newcommand{\Tran}{\textit{Tran}}
\newcommand{\Comp}{\mathit{Comp}}

\newcommand{\Aux}{\mathit{Aux}}
\newcommand{\Degree}{\textit{Deg}}
\newcommand{\len}{\textit{len}}
\newcommand{\mmax}{\textit{max}}
\newcommand{\Vect}{\textit{Vectors}}
\newcommand{\Pre}{\textit{Pre}}

\newcommand{\Gad}{\textit{Gadget}}

\newcommand{\bu}{\mathbf{u}}
\newcommand{\bv}{\mathbf{v}}
\newcommand{\bw}{\mathbf{w}}

\newcommand{\PTIME}{\mathbf{P}}
\newcommand{\AP}{\mathbf{AP}}
\newcommand{\NP}{\mathbf{NP}}
\newcommand{\coNP}{\mathbf{coNP}}
\newcommand{\DP}{\mathbf{DP}}
\newcommand{\PSPACE}{\mathbf{PSPACE}}
\renewcommand{\leftarrow}{\texttt{\,+=\,}}
\newcommand{\tran}[1]{\xrightarrow{#1}}

\tikzstyle{label}=[shape=circle,draw,inner sep=0pt,minimum size=5mm]
\tikzstyle{line} = [draw, color=black, -latex]
\tikzstyle{max}=[thick,draw,color=mybrown,minimum size=1.4em,inner sep=0em]
\tikzstyle{min}=[diamond,thick,draw,color=mybrown,minimum size=1.4em,%
inner sep=0em]
\tikzstyle{stoch}=[circle,thick,draw,minimum size=1.5em,%
inner sep=0em]
\tikzstyle{ran}=[circle,thick,draw,minimum size=1.4em,%
inner sep=0em]
\tikzstyle{mc}=[rounded corners,thick,draw,minimum size=1.4em,%
inner sep=.5ex]
\tikzstyle{tran}=[thick,draw,->,>=stealth,rounded corners]
\tikzstyle{loop left}=[tran, to path={.. controls +(150:.8) 
	and +(210:.8) .. (\tikztotarget) \tikztonodes}]
\tikzstyle{loop right}=[tran, to path={.. controls +(30:.8) 
	and +(330:.8) .. (\tikztotarget) \tikztonodes}]
\tikzstyle{loop above}=[tran, to path={.. controls +(60:.5) 
	and +(120:.5) .. (\tikztotarget) \tikztonodes}]
\tikzstyle{loop below}=[tran, to path={.. controls +(240:.8) 
	and +(300:.8) .. (\tikztotarget) \tikztonodes}]
\tikzstyle{bigstoch}=[circle,draw,minimum size=8ex,inner sep=0pt,font=\Large, very thick,text centered, fill=blue!20, fill opacity=0.2, draw=black!80, text opacity=1]
\tikzstyle{bigstoch2}=[circle,draw,minimum size=8ex,inner sep=0pt,font=\Large, very thick,text centered, fill=blue!20, fill opacity=0.2, draw=black!80, text opacity=1, double]
\tikzstyle{bigmin}=[diamond,draw,minimum size=8ex,inner sep=0pt,font=\Large, very thick,text centered, fill=blue!20, fill opacity=0.2, draw=black!80, text opacity=1]
\tikzstyle{bigtran}=[very thick,draw,-angle 60,font=\scriptsize, inner sep = 6pt]

\def\Inf{\operatornamewithlimits{inf\vphantom{p}}}

\bibliographystyle{plainurl}% the mandatory bibstyle

\title{Deciding Polynomial Termination Complexity for VASS Programs} %TODO Please add

\titlerunning{Polynomial VASS Termination} %TODO optional, please use if title is longer than one line

\author{Michal Ajdar{\'{o}}w}{Masaryk University, Czechia \and \url{https://www.muni.cz/lide/422654-michal-ajdarow} }{}{}{}%TODO mandatory, please use full name; only 1 author per \author macro; first two parameters are mandatory, other parameters can be empty. Please provide at least the name of the affiliation and the country. The full address is optional

\author{Anton\'{\i}n Ku\v{c}era}{Masaryk University, Czechia \and \url{https://www.fi.muni.cz/usr/kucera/}}{tony@fi.muni.cz}{https://orcid.org/0000-0002-6602-8028}{}

\authorrunning{M.~Ajdar{\'{o}}w and A.~Ku\v{c}era} %TODO mandatory. First: Use abbreviated first/middle names. Second (only in severe cases): Use first author plus 'et al.'

\Copyright{Michal Ajdar{\'{o}}w and Anton\'{\i}n Ku\v{c}era} %TODO mandatory, please use full first names. LIPIcs license is "CC-BY";  http://creativecommons.org/licenses/by/3.0/

\ccsdesc[100]{Theory of computation~Models of computation} %TODO mandatory: Please choose ACM 2012 classifications from https://dl.acm.org/ccs/ccs_flat.cfm 

\keywords{Termination complexity, vector addition systems} %TODO mandatory; please add comma-separated list of keywords

\category{} %optional, e.g. invited paper
%\relatedversion{}\relatedversiondetails[cite=AK:VASS-polynomial-termination-arxiv]{Full Version}{https://arxiv.org/abs/2102.06889}
%\relatedversion{A full version of the paper is available at %\cite{AK:VASS-polynomial-termination-arxiv},
%\url{https://arxiv.org/abs/2102.06889}.} %optional, e.g. full version hosted on arXiv, HAL, or other respository/website
%\relatedversiondetails[linktext={opt. text shown instead of the URL}, cite=DBLP:books/mk/GrayR93]{Classification (e.g. Full Version, Extended Version, Previous Version}{URL to related version} %linktext and cite are optional

%\supplement{}%optional, e.g. related research data, source code, ... hosted on a repository like zenodo, figshare, GitHub, ...
%\supplementdetails[linktext={opt. text shown instead of the URL}, cite=DBLP:books/mk/GrayR93, subcategory={Description, Subcategory}, swhid={Software Heritage Identifier}]{General Classification (e.g. Software, Dataset, Model, ...)}{URL to related version} %linktext, cite, and subcategory are optional

\funding{The work is supported by the Czech Science Foundation, Grant No.~21-24711S.}%optional, to capture a funding statement, which applies to all authors. Please enter author specific funding statements as fifth argument of the \author macro.

%\acknowledgements{The work is supported by the Czech Science Foundation, Grant No.~21-24711S.}%optional

%\nolinenumbers %uncomment to disable line numbering

%\hideLIPIcs  %uncomment to remove references to LIPIcs series (logo, DOI, ...), e.g. when preparing a pre-final version to be uploaded to arXiv or another public repository

%Editor-only macros:: begin (do not touch as author)%%%%%%%%%%%%%%%%%%%%%%%%%%%%%%%%%%
\EventEditors{Serge Haddad and Daniele Varacca}
\EventNoEds{2}
\EventLongTitle{32nd International Conference on Concurrency Theory (CONCUR 2021)}
\EventShortTitle{CONCUR 2021}
\EventAcronym{CONCUR}
\EventYear{2021}
\EventDate{August 23--27, 2021}
\EventLocation{Virtual Conference}
\EventLogo{}
\SeriesVolume{203}
\ArticleNo{24}
%%%%%%%%%%%%%%%%%%%%%%%%%%%%%%%%%%%%%%%%%%%%%%%%%%%%%%

\begin{document}

\maketitle

%TODO mandatory: add short abstract of the document
\begin{abstract}
We show that for every fixed degree $k\geq 3$, the problem whether the termination/counter complexity of a given demonic VASS is $\calO(n^k)$, $\Omega(n^{k})$, and $\Theta(n^{k})$ is $\coNP$-complete, $\NP$-complete, and \mbox{$\DP$-complete}, respectively. We also classify the complexity of these problems for $k\leq 2$. This shows that the polynomial-time algorithm designed for strongly connected demonic VASS in previous works cannot be extended to the general case. Then, we prove that the same problems for VASS games are $\PSPACE$-complete. Again, we classify the complexity also for $k\leq 2$. %This solves an open problem formulated in previous works. 
Tractable subclasses of demonic VASS and VASS games are obtained by bounding certain structural parameters, which opens the way to applications in program analysis despite the presented lower complexity bounds.

\end{abstract}

\section{Introduction}
\label{sec-intro}

Vector addition systems with states (VASS) are a generic formalism expressively equivalent to Petri nets. In program analysis, VASS are used to model programs with unbounded integer variables, parameterized systems, etc. Thus, various problems about such systems reduce to the corresponding questions about VASS. This approach's main bottleneck is that interesting questions about VASS tend to have high computational complexity (see, e.g., \cite{CLLLM:VASS-reach-nonelem,Lipton:PN-Reachability,MM:containment-Petri}). Surprisingly, recent results (see below) have revealed computational tractability of problems related to \emph{asymptotic complexity} of VASS computations, allowing to answer questions like ``Does the program terminate in time polynomial in~$n$ for all inputs of size $n$?'', or ``Is the maximal value of a given variable bounded by $\calO(n^4)$ for all inputs of size~$n$?''. These results are encouraging and may enhance the existing software tools for asymptotic program analysis such as SPEED~\cite{SPEED2}, COSTA~\cite{DBLP:journals/entcs/AlbertAGGPRRZ09},
RAML~\cite{journals/toplas/0002AH12},
Rank~\cite{ADFG10:lexicographic-ranking-flowcharts},
Loopus~\cite{SZV14,journals/jar/SinnZV17},
AProVE~\cite{journals/jar/GieslABEFFHOPSS17},
CoFloCo~\cite{conf/aplas/Flores-MontoyaH14},
C4B~\cite{conf/cav/Carbonneaux0RS17}, and others. In this paper, we give a full classification of the computational complexity of deciding polynomial termination/counter complexity for demonic VASS and VASS games, and solve open problems formulated in previous works. Furthermore, we identify structural parameters making the asymptotic VASS analysis computationally hard. Since these parameters are often small in VASS program abstractions, this opens the way to applications in program analysis despite the established lower complexity bounds.

The \emph{termination complexity} of a given VASS $\A$ is a function $\calL : \N \rightarrow \N_\infty$ assigning to every $n$ the maximal length of a computation initiated in a configuration with all counters initialized to~$n$. Similarly, the \emph{counter complexity} of a given counter $c$ in $\A$ is a function $\calC[c] : \N \rightarrow \N_\infty$ such that $\calC[c](n)$ is the maximal value of $c$ along a computation initiated in a configuration with all counters set to~$n$. So far, three types of VASS models have been investigated in previous works.
\begin{itemize}
	\item \emph{Demonic VASS}, where the non-determinism is resolved by an adversarial environment aiming to increase the complexity.
	\item \emph{VASS Games}, where every control state is declared as \emph{angelic} or \emph{demonic}, and the non-determinism is resolved by the controller or by the environment aiming to lower and increase the complexity, respectively.
	\item \emph{VASS MDPs}, where the states are either non-deterministic or stochastic. The non-determinism is usually resolved in the ``demonic'' way.
\end{itemize}
Let us note that the ``angelic'' and ``demonic'' non-determinism are standard concepts in program analysis  \cite{BW:nondet-languages} applicable to arbitrary computational devices including VASS. The use of VASS termination/counter complexity analysis is illustrated in the next example.

%In this paper, we consider the problems of deciding polynomial complexity for demonic VASS and VASS games. Before summarizing the existing results and our contribution, we present two simple examples demonstrating the relevance of these problems. The first example shows how to create a demonic VASS model for a simple imperative program and analyze its asymptotic running time. The second example shows how to construct an asymptotically optimal controller using VASS games.

\begin{example}
	\label{exa-demonic}
	Consider the program skeleton of Fig.~\ref{fig-VASS-model}~(left). Since a VASS cannot directly model the assignment \texttt{j:=i} and cannot test a counter for zero, the skeleton is first transformed into an equivalent program of Fig.~\ref{fig-VASS-model}~(middle), where the assignment \texttt{j:=i} is implemented using an auxiliary variable \texttt{Aux} and two \texttt{while} loops. Clearly, the execution of the transformed program is only longer than the execution of the original skeleton (for all inputs). For the transformed program, an over-approximating demonic VASS model is obtained by replacing conditionals with non-determinism, see Fig.~\ref{fig-VASS-model}~(right). When all counters are initialized to~$n$, the VASS terminates after $\calO(n^2)$ transitions. Hence, the same upper bound is valid also for the original program skeleton. Actually, the run-time complexity of the skeleton is $\Theta(n^2)$ where $n$ is the initial value of~\texttt{i}, so the obtained upper bound is asymptotically optimal.
\end{example}

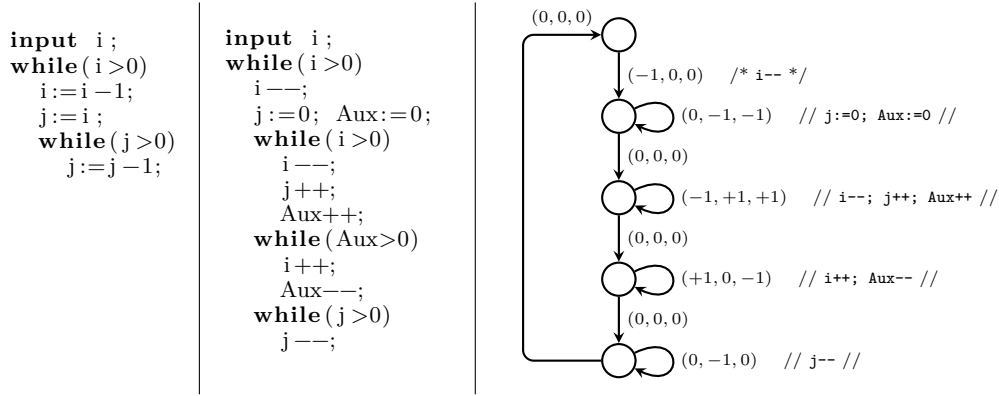
\begin{figure}[t]
	\lstset{basicstyle=\footnotesize}
	\begin{tabular}{l|l|l}
		\begin{minipage}{2.5cm}
			\begin{lstlisting}
input i;
while(i>0) 
  i:=i-1;
  j:=i;
  while(j>0)
    j:=j-1;
			\end{lstlisting}
			\vspace*{7em}
		\end{minipage}
		&
		\begin{minipage}{3.3cm}
			\begin{lstlisting}
input i;
while(i>0) 
  i--;
  j:=0; Aux:=0;
  while(i>0)
    i--; 
    j++;
    Aux++;
  while(Aux>0)
    i++;
    Aux--;
  while(j>0)
    j--;
			\end{lstlisting}
		\end{minipage}
		&
		\begin{minipage}{4cm}\hspace*{1em}
			\begin{tikzpicture}[scale=.9, every node/.style={scale=0.9}, x=1.4cm, y=1.2cm, font=\scriptsize]
				\foreach \x in {0,1,2,3,4}{ 
					\node [stoch] (t\x) at (0,-\x)  {};
				}
				\foreach \x/\y in {1/2,2/3,3/4}{
					\draw [tran] (t\x) to node[right] {$(0,0,0)$} (t\y);
				}
				\draw [tran] (t0) to node[right] {$(-1,0,0)$ \quad /* \texttt{i{-}-} */ } (t1);   
				\draw [tran,rounded corners] (t4) -- +(-1,0) --   +(-1,4) -- node[above] {$(0,0,0)$} (t0);   
				\draw [tran,loop right] (t1) to node[right] {$(0,-1,-1)$  \quad // \texttt{j:=0; Aux:=0} // } (t1); 
				\draw [tran,loop right] (t2) to node[right] {$(-1,+1,+1)$ \quad // \texttt{i{-}-; j++; Aux++} //} (t2); 
				\draw [tran,loop right] (t3) to node[right] {$(+1,0,-1)$ \quad // \texttt{i++; Aux{-}-} //} (t3); 
				\draw [tran,loop right] (t4) to node[right] {$(0,-1,0)$ \quad // \texttt{j{-}-} //} (t4); 
			\end{tikzpicture}
		\end{minipage}
	\end{tabular}
	\caption{A skeleton of a simple imperative program (left) and its VASS model (right).}
	\label{fig-VASS-model}
\end{figure}

%Now we give a summary of the existing results and explain the contribution of this paper. 

\noindent
\textbf{Existing results.}
In \cite{BCKNVZ:VASS-linear-termination}, it is shown that the problem whether $\calL \in \calO(n)$ for a given demonic VASS is solvable in polynomial time, and a complete proof method based on linear ranking functions is designed. The polynomiality of termination complexity for a given demonic VASS is also decidable in polynomial time, and if $\calL \not\in \calO(n^k)$ for any $k \in \N$, then $\calL \in 2^{\Omega(n)}$ \cite{Leroux:Polynomial-termination-VASS}. The same results hold for counter complexity. In \cite{Zuleger:VASS-polynomial}, a polynomial time algorithm computing the \emph{least} $k \in \N$ such that $\calL \in \calO(n^k)$ for a given demonic VASS is presented (the algorithm first checks if such a $k$ exists). It is also shown that if $\calL \not\in \calO(n^k)$, then $\calL \in \Omega(n^{k+1})$. Again, the same results hold also for counter complexity. The proof is actually given only for \emph{strongly connected demonic VASS}, and it is conjectured that a generalization to unrestricted demonic VASS can be obtained by extending the presented construction (see the Introduction of \cite{Zuleger:VASS-polynomial}). In \cite{KLV:VASS-Grzegorczyk}, it was shown that the problem whether the termination/counter complexity of a given demonic VASS belongs to a given level of Grzegorczyk hierarchy is solvable in polynomial time, and the same problem for VASS games is shown $\NP$-complete. The $\NP$ upper bound follows by observing that player Angel can safely commit to a  \emph{countreless}\footnote{A strategy is \emph{counterless} if the decision depends just on the control state of the configuration currently visited.} strategy when minimizing the complexity level in the Grzegorczyk hierarchy. Intuitively, this is because Grzegorczyk classes are closed under function composition (unlike the classes $\Theta(n^k)$).  Furthermore, the problem whether $\calL \in \calO(n^2)$ for a given VASS game is shown $\PSPACE$ hard, but the decidability of this problem is left open. As for VASS MDPs, the only existing result is \cite{BCKNV:probVASS-linear-termination}, where it is shown that the linearity of termination complexity is solvable in polynomial time for VASS MDPs with a tree-like MEC decomposition.
\smallskip

\noindent
\textbf{Our contribution.}  For demonic VASS, we \emph{refute} the conjecture of \cite{Zuleger:VASS-polynomial} and prove that for general (not necessarily strongly connected) demonic VASS, the problem whether
\begin{itemize}
	\item $\calL \in \calO(n^k)$ is in $\PTIME$ for $k = 1$, and $\coNP$-complete for $k \geq 2$;
	\item $\calL \in \Omega(n^{k})$ is in $\PTIME$ for $k \leq 2$, and $\NP$-complete for $k \geq 3$;
	\item $\calL \in \Theta(n^{k})$ is in $\PTIME$ for $k =1$, $\coNP$-complete for $k=2$, and $\DP$-complete for $k \geq 3$;
	\item $\calC[c] \in \calO(n^k)$ is $\coNP$-complete;
	\item $\calC[c] \in \Omega(n^{k})$ is in $\PTIME$ for $k = 1$, and $\NP$-complete for $k \geq 2$;
	\item $\calC[c] \in \Theta(n^{k})$ is $\coNP$-complete for $k=1$, and $\DP$-complete for $k \geq 2$.
\end{itemize}

Since the demonic VASS constructed in our proofs are relatively complicated, we write them in a simple imperative language with a precisely defined VASS semantics. This allows to present the overall proof idea clearly and justify technical correctness by following the control flow of the VASS program, examining possible side effects of the underlying ``gadgets'', and verifying that the Demon does not gain anything by deviating from the ideal execution scenario.

When proving the upper bounds, 
%we rely on the procedure for strongly connected VASS designed in \cite{Zuleger:VASS-polynomial}. Using a slightly generalized version of the procedure, 
we show that every path in the DAG of strongly connected components can be associated with the (unique) vector describing the maximal simultaneous increase of the counters. Here, the counters pumpable to exponential (or even larger) values require special treatment. 
%\footnote{This idea is sketched also in the Introduction of \cite{Zuleger:VASS-polynomial} as a possible direction towards obtaining a polynomial time algorithm applicable to general demonic VASS. We refer to Section~\ref{sec-demonic} for more details.} 
We show that this vector is computable in polynomial time. Hence, the complexity of a given counter $c$ is  $\Omega(n^k)$ iff there is a path in the DAG such that the associated maximal increase of $c$ is $\Omega(n^k)$. Thus, we obtain the $\NP$ upper bound, and the other upper bounds follow similarly. The crucial parameter characterizing hard-to-analyze instances is the number of different paths from a root to a leaf in the DAG decomposition, and tractable subclasses of demonic VASS are obtained by bounding this parameter. We refer to Section~\ref{sec-demonic} for more details.
% and the connections of our work to \cite{Zuleger:VASS-polynomial}). 

Then, we turn our attention to VASS games, where the problem of polynomial termination/counter complexity analysis requires completely new ideas. In \cite{KLV:VASS-Grzegorczyk}, it was observed that the information about the ``asymptotic counter increase performed so far'' must be taken into account by player Angel when minimizing the complexity level in the polynomial hierarchy, and counterless strategies are therefore insufficient. However, it is not clear what information is needed to make optimal decisions, and whether this information is finitely representable. We show that player Angel can safely commit to a so-called \emph{locking} strategy. A strategy for player Angel is \emph{locking} if whenever a new angelic state $p$ is visited, one of its outgoing transition is chosen and ``locked'' so that 
when $p$ is revisited, the same locked transition is used. The locked transition choice may depend on the computational history and the transitions locked in previously visited angelic states. Then, we define a \emph{locking decomposition} of a given VASS that plays a role similar to the DAG decomposition for demonic VASS. Using the locking decomposition, the existence of a suitable locking strategy for player Angel is decided by an alternating polynomial time algorithm (and hence in polynomial space). Thus, we obtain the following: For every VASS game and a counter \(c \), we have that $\calL, \calC[c]$ are either in $\calO(n^k)$ or in $\Omega(n^{k+1})$. Furthermore, the problem whether
\begin{itemize}
	\item $\calL \in \calO(n^k)$ is $\NP$-complete for $k {=} 1$ and $\PSPACE$-complete for $k{\geq} 2$;
	\item $\calL \in \Omega(n^{k})$ is in $\PTIME$ for $k {=} 1$, $\coNP$-complete for $k{=}2$, and $\PSPACE$-complete for $k{\geq} 3$;
	\item $\calL \in \Theta(n^{k})$ is $\NP$-complete for $k{=}1$ and $\PSPACE$-complete for $k{\geq} 2$;
	\item $\calC[c] \in \calO(n^k)$ is $\PSPACE$-complete;
	\item $\calC[c] \in \Omega(n^{k})$ is in $\PTIME$ for $k {=} 1$, and $\PSPACE$-complete for $k{\geq} 2$;
	\item $\calC[c] \in \Theta(n^{k})$ is $\PSPACE$-complete.
\end{itemize}
Similarly to demonic VASS, tractable subclasses of VASS games are obtained by bounding the number of different paths in the locking decomposition. 

The VASS model constructed in Example~\ref{exa-demonic} is purely demonic. The use of VASS \emph{games} in program analysis/synthesis is illustrated in the next example.

\begin{example}
	Consider the program of Fig.~\ref{fig-VASS-game-model}. The \texttt{condition} at line~3 is resolved by the environment in a demonic way. The two branches of $\texttt{if-then-else}$ execute a code modifying the variables \texttt{j} and \texttt{k}. After that, the controller can choose one of the two \textbf{while}-loops at lines~8, 9 with the aim of keeping the value of \texttt{z} small. The question is how the size of \texttt{z} grows with the size of input if the controller makes optimal decisions.
	A closer look reveals that when the variable \texttt{i} is assigned $n$ at line~1, then 
	\begin{itemize}
		\item the values of \texttt{j} and \texttt{k} are $\Theta(n)$ and $\Theta(n^2)$ when the \texttt{condition} is evaluated to \textit{true};
		% and the \textbf{while}-loop at line~4 is executed;
		\item the values of \texttt{j} and \texttt{k} are $\Theta(n^2)$ and $\Theta(n)$ when the \texttt{condition} is evaluated to \textit{false}.
		% and the instuctions at lines~5,6 are executed.
	\end{itemize}
	Hence, the controller can keep $\texttt{z}$ in $\Theta(n)$ if an optimal decision is taken.
	%Note that the controller does not have to access the internal variables of the program, it suffices to monitor the way or resolving the demonic choice (if the internal variables can be accessed, the controller may choose line~8 or line~9 depending on whether $j\leq k$ or not; the leads to the same asymptotic bound on~$z$). 
	Constructing a VASS game model for the program of Fig.~\ref{fig-VASS-game-model} is straightforward (the required gadgets are given in Fig.~\ref{fig-gadgets}). Using the results of this paper, the above analysis can be performed \emph{fully automatically}.
\end{example}

\begin{figure}
	\lstset{numbers=left,basicstyle=\footnotesize}
	\begin{lstlisting}
input i;
j:=0; k:=0; z:=0;
if condition   // demonic choice //
then while (i>0) do j++; k:=k+i; i--; done
else j:=i*i; k:=i;
while (i>0) do j:=j+k; i--; done
choose:        // angelic choice //
while (j>0) do j--; z++ done
or: while (k>0) do k--; z++ done
	\end{lstlisting}
	\caption{A simple program with both demonic and angelic non-determinism.}
	\label{fig-VASS-game-model}
\end{figure}

\section{Preliminaries}
\label{sec-prelim}

The sets of integers and non-negative integers are denoted by $\Z$ and $\N$, respectively, and we use $\N_\infty$ to denote $\N \cup \{\infty\}$.
% where $\infty$ is
%treated according to the standard conventions. For a given rational $r$, we use $\lfloor r \rfloor$ to denote the lower integer part of~$r$. 
The vectors of $\Z^d$ where $d \geq 1$ are denoted by $\bv,\bu,\ldots$, and the vector $(n,\ldots,n)$ is denoted by $\vec{n}$. 
%We use bold letters such as $\bv,\bu,\ldots$ to denote the vectors of $\Z^d$ where $d \geq 1$. For every $n \in \Z$, the  The $i$-th component of $\bv$ is denoted by $\bv(i)$, and we use $\bv^{-}$  to denote the set of all $i \in \{1,\ldots,d\}$ such that $\bv(i) < 0$. A vector $\bv$ is \emph{non-negative} if $\bv^{-} = \emptyset$.

%Let $f,g : \N \rightarrow \N_\infty$. We write $f \leq g$ if $f(n) \leq g(n)$ for all $n \in \N$. We say that $f$ is \emph{beyond} $g$ if there exist $c \geq 1$ and $n_0 \in \N$ such that $f(n) \geq g(\lfloor n/c \rfloor)$ for all $n \geq n_0$.
%The cardinality of a given set $M$ is denoted by $|M|$. When no confusion arises, we also use $|c|$ to denote the absolute value of a given $c \in \R$.

%\subsection{Vector Addition Systems with States (VASS)}
%\label{sec-VASS-def}

%A vector addition system with states (VASS) is a finite directed graph where every transition is labelled by a vector of counter changes.

\begin{definition}[VASS]
	\label{def-VASS}
	Let $d \geq 1$. A \emph{$d$-dimensional vector addition system with states
		(VASS)} is a pair $\A = \ce{Q,\Tran}$, where $Q \neq \emptyset$ is a finite set of
	\emph{states} and  $\Tran \subseteq Q \times \Z^d \times Q$ is a finite set of
	\emph{transitions} such that for every $q \in Q$ there exist $p \in Q$
	and $\bu \in \Z^d$ such that $(q,\bu,p) \in \Tran$.
\end{definition}

The set $Q$ is split into two disjoint subsets $Q_A$ and $Q_D$ of \emph{angelic} and \emph{demonic} states controlled by the players Angel and Demon, respectively. A \emph{configuration} of $\A$ is a pair $p \bv \in Q \times \N^d$, where $\bv$ is the vector of counter values. We often refer to counters by their symbolic names. For example, when we say that $\A$ has three counters $x,y,z$ and the value of $x$ in a configuration $p\bv$ is $8$, we mean that $d=3$ and $\bv_i = 8$ where $i$ is the index associated to $x$. When the mapping between a counter name and its index is essential, we use $c_i$ to denote the counter with index~$i$.  

A \emph{finite path} in $\A$ of length $m$ is a finite sequence $\varrho = p_1,\bu_1,p_2,\bu_2, \ldots,p_m$ such that $(p_i,\bu_{i},p_{i+1}) \in Tran$ for all \mbox{$1\leq i<m$}. We use $\eff{\varrho}$ to denote the \emph{effect} of $\varrho$, defined as $\sum_{i = 1}^m \bu_i$. An \emph{infinite path} in $\A$ is an infinite sequence $\alpha = p_1,\bu_1,p_2,\bu_2, \ldots$ such that every finite prefix $p_1,\bu_1,\ldots,p_m$ of $\alpha$ is a finite path in~$\A$.

%A configuration $p \bv$  is \emph{terminal} if for some/every transition $(p,\bu,q) \in \Tran$ we have that $(\bv {+} \bu)^- \neq \emptyset$, depending on whether $p$ is angelic/demonic, respectively.
%
A \emph{computation} of $\A$ is a sequence of configurations $\alpha = p_1\bv_1,p_2\bv_2,\ldots$ of length $m \in \N_\infty$ such that for every $1 \leq i < m$ there is a transition $(p_i,\bu_i,p_{i+1})$ satisfying $\bv_{i+1} = \bv_i + \bu_i$. Note that every computation determines its associated path in the natural way.

\smallskip

\noindent
\textbf{VASS Termination Complexity.}
A \emph{strategy} for Angel (or Demon) in $\A$ is a function $\eta$ assigning to every finite computation $p_1 \bv_1,\ldots, p_m \bv_{m}$ where $p_m \in Q_A$ (or $p_m \in Q_D$) a transition $(p_m,\bu, q)$. %A strategy is \emph{positional} if it depends only on the configuration $p_m \bv_{m}$, and \emph{counterless} if it depends only on the control state $p_m$. 
Every pair of strategies $(\sigma,\pi)$ for Angel/Demon and every initial configuration $p\bv$ determine the unique \emph{maximal} computation $\Comp^{\sigma,\pi}(p\bv)$ initiated in $p\bv$. The maximality means that the computation cannot be prolonged without making some counter negative. For a given counter $c$, we use $\mmax[c](\Comp^{\sigma,\pi}(p\bv))$ to denote the supremum of the $c$'s values in all configurations visited along $\Comp^{\sigma,\pi}(p\bv)$. Furthermore, we use $\len(\Comp^{\sigma,\pi}(p\bv))$ to denote the length of $\Comp^{\sigma,\pi}(p\bv)$. Note that $\mmax[c]$ and $\len$ can be infinite for certain computations.  

%(see \cite{AK:VASS-polynomial-termination-arxiv} for an explicit proof)

For every initial configuration $p\bv$, consider a game where the players Angel and Demon aim at minimizing and maximizing the $\mmax[c]$ or $\len$ objective, respectively. By applying standard game-theoretic arguments (see Appendix \ref{app-determinacy}), we obtain
\begin{eqnarray}
	\sup_{\pi} \ \Inf_{\sigma} \ \len(\Comp^{\sigma,\pi}(p\bv))  & = &  \Inf_{\sigma} \ \sup_{\pi}\   \len(\Comp^{\sigma,\pi}(p\bv))\label{eq-len-determinacy}\\[1ex]
	\sup_{\pi} \ \Inf_{\sigma} \ \mmax[c](\Comp^{\sigma,\pi}(p\bv))  & = &  \Inf_{\sigma} \ \sup_{\pi}\   \mmax[c](\Comp^{\sigma,\pi}(p\bv))
	\label{eq-c-determinacy}
\end{eqnarray}
where $\sigma$ and $\pi$ range over all strategies for Angel and~Demon, respectively.
Hence, there exists a unique \emph{termination value} of $p\bv$, denoted by $\Tval(p\bv)$, defined by~\eqref{eq-len-determinacy}. Similarly, for every counter $c$ there exists a unique \emph{maximal counter value}, denoted by $\Cval[c](p\bv)$, defined by~\eqref{eq-c-determinacy}. Furthermore, both players have \emph{optimal} positional strategies $\sigma^*$ and $\pi^*$ achieving the outcome specified by the equilibrium value or better in every configuration $p\bv$ against every strategy of the opponent (here, a \emph{positional} strategy is a strategy depending only on the currently visited configuration). We refer to Appendix \ref{app-determinacy} for details. %We refer to  \cite{AK:VASS-polynomial-termination-arxiv} for details.
%%
%For a given strategy $\pi$ of player~$D$, we use $\Tval^{\pi}(p\bv)$ to denote $\inf_{\sigma}|\Comp^{\sigma,\pi}(p\bv)|$.  Note that for every strategy $\pi$ of player~$D$, there is an \emph{optimal response} of player~$A$, i.e., a strategy $\sigma$ such that $\Tval^{\pi}(p\bv) = |\Comp^{\sigma,\pi}(p\bv)|$ (in general, the optimal strategy $\sigma^*$ may not be an optimal response to every $\pi$).     

The \emph{termination complexity} and \emph{$c$-counter complexity} of $\A$ are functions $\calL,\calC[c]: \N \rightarrow \N_\infty$ where $\calL(n) =  \max \{\Tval(p\vec{n}) \mid p \in Q \}$ and 
$\calC[c](n)  =  \max \{\Cval[c](p\vec{n}) \mid p \in Q \}$.
%\begin{eqnarray*}
%  \calL(n)  & = & \max \{\Tval(p\vec{n}) \mid p \in Q \}\,,\\	
%  \calC[c](n) & = & \max \{\Cval[c](p\vec{n}) \mid p \in Q \}\,.
%\end{eqnarray*}
When the underlying VASS $\A$ is not clear, we write $\calL_\A$ and $\calC_\A[c]$ instead of $\calL$ and $\calC[c]$.

Observe that the asymptotic analysis of termination complexity for a given VASS $\A$ is trivially reducible to the asymptotic analysis of counter complexity in a VASS $\B$ obtained from $\A$ by adding a fresh ``step counter'' \textit{sc} incremented by every transition of $\B$. Clearly, $\calL_\A \in \Theta (\calC_\B[\textit{sc}])$. Therefore, the lower complexity bounds for the considered problems of asymptotic analysis are proven for $\calL$, while the upper bounds are proven for $\calC[c]$.

\section{Demonic VASS}
\label{sec-demonic}

In this section, we classify the computational complexity of polynomial asymptotic analysis for demonic VASS. The following theorem holds regardless whether the counter update vectors are encoded in unary or binary (the lower bounds hold for unary encoding, the upper bounds hold for binary encoding).

\begin{theorem}
	\label{thm-main-demon}
	Let $k \geq 1$. For every demonic VASS $\A$ and counter \(c\)  we have that $\calL$ ($\calC[c]$) is either in $\calO(n^k)$ or in $\Omega(n^{k+1})$. Furthermore, the problem whether
	\begin{itemize}
		\item $\calL \in \calO(n^k)$ is in $\PTIME$ for $k = 1$, and $\coNP$-complete for $k \geq 2$;
		\item $\calL \in \Omega(n^{k})$ is in $\PTIME$ for $k \leq 2$, and $\NP$-complete for $k \geq 3$;
		\item $\calL \in \Theta(n^{k})$ is in $\PTIME$ for $k =1$, $\coNP$-complete for $k=2$, and $\DP$-complete for $k \geq 3$;
		\item $\calC[c] \in \calO(n^k)$ is $\coNP$-complete;
		\item $\calC[c] \in \Omega(n^{k})$ is in $\PTIME$ for $k = 1$, and $\NP$-complete for $k \geq 2$;
		\item $\calC[c] \in \Theta(n^{k})$ is $\coNP$-complete for $k=1$, and $\DP$-complete for $k \geq 2$.
	\end{itemize}
\end{theorem}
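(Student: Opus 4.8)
The plan is to establish the dichotomy ($\calO(n^k)$ versus $\Omega(n^{k+1})$) together with all the complexity bounds by analysing the DAG of strongly connected components (SCCs) of the demonic VASS. First I would set up the key combinatorial object: for each SCC $S$ and each counter $c$, determine the maximal rate at which $c$ can be pumped \emph{within} $S$ when entering with counter values of order $\Theta(n^j)$. The basic fact is that inside a single SCC, the maximal simultaneous increase of the counters is governed by the cone of effects of cycles, which is computable by linear programming; a counter that can be increased at all in $S$ can be increased linearly in the ``time spent'' in $S$, and counters feeding into each other compose. The subtlety flagged in the introduction is that some counters become pumpable to exponential (or larger) values — these must be detected first (this is exactly the $\calL \not\in \calO(n^k)$ for all $k$ situation of \cite{Leroux:Polynomial-termination-VASS}) and the remaining counters stay polynomially bounded along any path of SCCs.

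Next I would define, for every path $\varrho = S_1 \to S_2 \to \cdots \to S_m$ in the DAG, a vector $\mathbf{d}(\varrho) \in \N^d$ recording, for each counter, the exponent of the maximal value reachable by the Demon forcing the computation through exactly these SCCs in order; the step counter's exponent along $\varrho$ is handled the same way. The main lemma is that $\mathbf{d}(\varrho)$ is computable in polynomial time by a forward pass: entering $S_{i+1}$ with counter profile $\mathbf{e}$, the exit profile is obtained by saturating the LP that describes which counters of $S_{i+1}$ can absorb a $\Theta(n^{e_j})$ budget and re-export it, possibly at a higher exponent, accounting for self-amplifying cycles. Then $\calC[c] \in \Omega(n^k)$ iff some DAG path has $\mathbf{d}(\varrho)_c \geq k$, and $\calC[c] \in \calO(n^k)$ iff every DAG path has $\mathbf{d}(\varrho)_c \leq k$. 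The dichotomy $\calO(n^k)$ or $\Omega(n^{k+1})$ falls out because $\mathbf{d}(\varrho)_c$ is always an integer. For termination complexity I apply the step-counter reduction already noted in the excerpt.

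For the complexity classification: the $\NP$ upper bound for $\calC[c] \in \Omega(n^k)$ comes from guessing a DAG path (length at most $|Q|$) and verifying $\mathbf{d}(\varrho)_c \geq k$ in polynomial time; the $\coNP$ upper bound for $\calC[c] \in \calO(n^k)$ is the complement (check all paths); $\Theta$ is the conjunction, giving $\DP$. For termination complexity, the exponent along a path is at least $1$ whenever the path is non-trivial, which is why $\calL \in \Omega(n)$ and $\calL \in \Omega(n^2)$ are polynomial-time decidable (a path achieving exponent $2$ in the step counter is detectable greedily — essentially Zuleger's argument within a single ``ascending chain''), whereas reaching exponent $3$ already requires the Demon to \emph{route} through the right combination of SCCs, which is the source of $\NP$-hardness. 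The matching lower bounds are obtained by the imperative-language VASS constructions announced in the excerpt: encode a SAT instance (or a pair of instances, for $\DP$) so that a satisfying assignment corresponds to a DAG path with the target exponent; here one must verify that the Demon gains nothing by deviating from the intended routing, which the structured presentation via gadgets is designed to make checkable. The $\PTIME$ facts for small $k$ and the $k=1$ base cases are handled separately using \cite{BCKNVZ:VASS-linear-termination,Leroux:Polynomial-termination-VASS}.

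I expect the main obstacle to be the correctness of the forward pass computing $\mathbf{d}(\varrho)$ in the presence of counters of differing growth orders interacting within one SCC — in particular, proving that the LP-saturation correctly captures the Demon's optimal behaviour, that exponents never need to be ``re-examined'' after later SCCs, and that the exponentially-pumpable counters can be cleanly quotiented out without affecting the polynomial part. Getting the hardness reductions to be faithful under \emph{demonic} semantics (no beneficial deviation) is the other delicate point, but it is routine once the gadget semantics are fixed.
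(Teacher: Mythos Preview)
Your proposal is correct and follows essentially the same approach as the paper: a DAG-of-SCCs decomposition, a per-path exponent vector computed by a polynomial-time forward pass (the paper's $\V_\eta$), $\NP$/$\coNP$/$\DP$ upper bounds from guessing or universally quantifying over DAG paths, and $\textsc{Sat}$-based gadget reductions for the lower bounds. The one place where the paper is more concrete than your sketch is the within-SCC step with non-uniform initial exponents: rather than an ad hoc LP saturation, the paper reduces the case of initial values $(n^{\bv(1)},\dots,n^{\bv(d)})$ back to Zuleger's black-box for $\vec n$-initial values by prepending a repeated-squaring pumping gadget (Proposition~\ref{prop-poly-init}), which cleanly dispatches exactly the obstacle you flag in your last paragraph.
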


The next theorem identifies the crucial parameter influencing the complexity of polynomial asymptotic analysis for demonic VASS. Let $\D(\A)$ be the standard DAG of strongly connected components of~$\A$. For every leaf (bottom SCC) $\eta$ of $\D(\A)$, let $\Degree(\eta)$ be the total number of all paths from a root of $\D(\A)$ to $\eta$.

%Recall the definition of $\D(\A)$ presented in Section~\ref{sec-VASS-def}. A \emph{root}/\emph{leaf} of $\D(\A)$ is a vertex with no immediate predecessor/successor. Since $\A$ is demonic, $\D(\A)$ is a DAG, and hence for every leaf $\eta$ there are only finitely many different paths from a root to~$\eta$. Let $\Degree(\eta)$ be the total number of all such paths. 

\begin{theorem}
	\label{thm-demon-tractable}
	Let $\Lambda$ be a class of demonic VASS such that for every $\A \in \Lambda$ and every leaf $\eta$ of $\D(\A)$ we have that $\Degree(\eta)$ is bounded by a fixed constant depending only on $\Lambda$. 
	
	Then, the problems whether $\calL_\A \in \calO(n^k)$, $\calL_\A \in \Omega(n^{k})$, $\calL_\A \in \Theta(n^{k})$ for given $\A \in \Lambda$ and $k \in \N$, are solvable in polynomial time (where the $k$ is written in binary). The same results hold also for $\calC[c]$ (for a given counter $c$ of $\A$).
\end{theorem}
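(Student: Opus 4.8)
The plan is to build on the structural analysis underlying Theorem~\ref{thm-main-demon}: for a demonic VASS $\A$, the asymptotic behavior of $\calL$ and $\calC[c]$ is governed by the DAG $\D(\A)$ of strongly connected components, and in particular by associating to each path $\varrho$ from a root of $\D(\A)$ to a leaf a vector $\bv_\varrho$ describing the maximal simultaneous polynomial increase of the counters achievable along $\varrho$ (with the usual special handling of counters pumpable to exponential values). As sketched in the paper's contribution paragraph, each such $\bv_\varrho$ is computable in polynomial time, and the relevant questions reduce to whether \emph{some} path $\varrho$ witnesses the required growth rate: e.g.\ $\calC[c] \in \Omega(n^k)$ iff some root-to-leaf path $\varrho$ has the $c$-component of its associated growth exponent at least~$k$, and $\calC[c] \in \calO(n^k)$ iff \emph{every} such path stays below $n^{k+1}$. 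The whole source of intractability in Theorem~\ref{thm-main-demon} is the exponential number of root-to-leaf paths in $\D(\A)$, which forces an NP/coNP guess-and-check over paths.

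First I would make precise the per-path invariant: define, for a root-to-leaf path $\varrho = \eta_0 \eta_1 \cdots \eta_m$ of $\D(\A)$, the vector $\bv_\varrho \in \N_\infty^d$ obtained by processing the SCCs $\eta_0,\dots,\eta_m$ in order, where at each SCC we update $\bv$ according to the (polynomial) pumping analysis of that SCC applied to the counter profile inherited from the prefix, exactly as done in the proof of the upper bounds of Theorem~\ref{thm-main-demon}. I would cite from that proof the two facts I need: (i) given a starting counter profile for $\eta_i$, the one-step update producing the profile for $\eta_{i+1}$ (including promotion of counters to the ``exponential'' regime) is computable in polynomial time in $|\A|$; and (ii) the final complexity value is the coordinatewise maximum over all root-to-leaf paths of (a simple function of) $\bv_\varrho$ — so $\calC[c] \in \Omega(n^k) \iff \max_\varrho (\bv_\varrho)_c \ge k$ and $\calC[c] \in \calO(n^k) \iff \max_\varrho (\bv_\varrho)_c \le k$ (with the $\Omega(n^{k+1})$ gap coming from the fact that these exponents are integers, as already guaranteed by Theorem~\ref{thm-main-demon}). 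The dichotomy of Theorem~\ref{thm-main-demon} means the $\Theta(n^k)$ question is just the conjunction of the $\calO$ and $\Omega$ questions.

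The key algorithmic step is then a dynamic program over $\D(\A)$ that, instead of enumerating the exponentially many paths, propagates only the \emph{relevant} information. Here the hypothesis $\Degree(\eta) \le C$ for every leaf $\eta$ enters. Topologically sort $\D(\A)$; for each node $\eta$ I want to maintain the set $R(\eta)$ of all distinct counter profiles $\bv$ arising as the prefix-profile at $\eta$ along \emph{some} root-to-$\eta$ path. The crucial combinatorial observation is that the number of root-to-$\eta$ paths, summed appropriately, is controlled by $\sum_{\text{leaf }\eta'} \Degree(\eta')$ — more precisely, every root-to-$\eta$ path extends to at least one root-to-leaf path, and distinct such prefixes that reach the same leaf remain distinct, so $|R(\eta)| \le \sum_{\eta' \text{ leaf reachable from }\eta} \Degree(\eta') \le |Q| \cdot C$, which is polynomial. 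Thus $R(\eta)$ has polynomially many elements; computing $R(\eta)$ from the $R$-sets of the predecessors of $\eta$ costs one polynomial-time profile-update per (predecessor, profile) pair, and the whole DP runs in polynomial time. Note $k$ being in binary is harmless: we never iterate over $k$; we compute the integer exponents $(\bv_\varrho)_c$ (each of size polynomial in $|\A|$) and compare with $k$ once at the end. For $\calL$ rather than $\calC[c]$, apply the same DP to the step-counter construction $\B$ from the preliminaries, noting $\calL_\A \in \Theta(\calC_\B[\mathit{sc}])$.

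The main obstacle, and where I would spend the most care, is proving the bound $|R(\eta)| \le \mathrm{poly}(|\A|)$ and, relatedly, arguing that it suffices to track \emph{profiles} rather than full paths — i.e.\ that the one-step SCC update depends on the prefix only through its associated profile $\bv$, not through further path details. The first point could fail if two root-to-$\eta$ paths produced the same profile at $\eta$ yet the count of extensions to leaves were not additive; I would handle this by defining $R(\eta)$ as a \emph{set} and separately charging each element of $R(\eta)$ to a distinct root-to-leaf path witnessing it, which gives the $\Degree$-based bound directly (and collapsing duplicate profiles only helps). The second point is essentially a restatement of the locality of the SCC pumping analysis from the proof of Theorem~\ref{thm-main-demon}, which I would invoke rather than redo; the one subtlety to verify is that the ``which counters have been pumped to exponential values'' information is itself part of the profile, so that the Markovian propagation is legitimate. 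Once these are in place, the polynomial-time claims for $\calO$, $\Omega$, and $\Theta$ (and for both $\calL$ and $\calC[c]$) all follow uniformly.
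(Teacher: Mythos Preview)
Your proposal is correct and follows essentially the same approach as the paper: your set $R(\eta)$ is exactly the paper's $\Vect_\A(\eta)$, your dynamic program over $\D(\A)$ matches the paper's Algorithm~\ref{alg-vectors}, and your use of the $\Degree$ bound to control $|R(\eta)|$ is the same mechanism the paper invokes (though the paper leaves that bound implicit). Your treatment is in fact slightly more explicit than the paper's on two points---the Markovian dependence of the per-SCC update on the profile alone, and the charging argument bounding $|R(\eta)|$ by root-to-leaf path counts---and your bound $|R(\eta)|\le |Q|\cdot C$ is looser than necessary (a single reachable leaf already gives $|R(\eta)|\le C$), but this is harmless.
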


The degree of the polynomial bounding the running time of the decision algorithm for the three problems of Theorem~\ref{thm-demon-tractable} increases with the increasing size of the constant bounding $\Degree(\eta)$. %A proof of Theorem~\ref{thm-demon-tractable} is presented at the end of this section. 
From the point of view of program analysis, Theorem~\ref{thm-demon-tractable} has a clear intuitive meaning.
If $\A$ is an abstraction of a program $\calP$, then the instructions in $\calP$ increasing the complexity of the asymptotic analysis of $\A$ are \emph{branching instructions} such as \textbf{if-then-else} that are \emph{not embedded within loops}. If $\calP$ executes many such constructs in a sequence, a termination point can be reached in many ways (``zigzags'' in the $\calP$'s control-flow graph).
This increases $\Degree(\eta)$, where $\eta$ is a leaf of $\D(\A)$ containing the control state modeling the termination point of~$\calP$. 

%Theorem~\ref{thm-main-demon} and~\ref{thm-demon-tractable} are proven in the next subsections.

\subsection{Lower bounds}

%Since the asymptotic analysis of $\calL$ is trivially reducible to the asymptotic analysis of $\calC[c]$ (see Section~\ref{sec-prelim}), all lower complexity bounds of Theorem~\ref{thm-main-demon} follow directly from the next two lemmata.

\SetAlgorithmName{VASS Program}{vassprog}{VASS Programs}
\SetAlFnt{\footnotesize}
\SetInd{0ex}{1.55ex}
\begin{algorithm}[tb]
	\SetAlgoLined
	\DontPrintSemicolon
	\SetKwInOut{Parameter}{parameter}\SetKwInOut{Input}{input}\SetKwInOut{Output}{output}
	\SetKwData{C}{C}\SetKwData{D}{D}\SetKwData{MX}{M}\SetKwData{f}{f}
	\SetKw{Choose}{choose:}	
	\SetKw{Or}{or}
	\SetKw{And}{and}
	%	\Input{A patrolling graph $G$, a regular strategy $\eta$}
	%	\Output{A regular strategy $\eta'$}
	\BlankLine
	% \tcp{shift the density function back}
	%	compute $\nabla \AU[v,m,t]$ for all $(v,m) \in \source(E_\sigma)$\;
	%   $i  \texttt{ = } 0$; \ \  $\sigma' \texttt{ = } \eta$ \; 
	$d_2 \leftarrow d_1 * e_1;\ \ d_3 \leftarrow d_2 * e_2; \ \cdots \ ; \ \ d_k \leftarrow d_{k-1} * e_{k-1};$ \;\label{dk-init}
	\ForEach{$i = 1,\ldots,v$\label{asign-start}}{
		$\Choose$\ \ $x_i  \leftarrow d_k$ \Or  $\bar{x}_i  \leftarrow d_k;$\;
	}\label{asign-end}
	$s_0  \leftarrow d_k;$\;\label{line-s0}
	\ForEach{$i = 1,\ldots,m$}{
		$\Choose$\ \ $s_i \leftarrow \min(\ell^i_1,s_{i-1})$
		\Or $s_i \leftarrow \min(\ell^i_2,s_{i-1})$
		\Or $s_i \leftarrow \min(\ell^i_3,s_{i-1});$\;
	}\label{line-end-loop}
	$f \leftarrow s_m * n$\label{line-fin}
	\caption{$\A_\varphi$}
	\label{alg-Aphi}
\end{algorithm}

\begin{figure}[tb]\centering
	\begin{tikzpicture}[scale=.85, every node/.style={scale=0.85}, x=2.1cm, y=2.1cm, font=\footnotesize]
		\node[stoch] (n1) at (0,0) {in};
		\node[stoch] (n2) at (1,0) {};
		\node[stoch] (n3) at (.5,-.7) {out};
		\draw [tran,loop above] (n1) to node[above] {\parbox{3em}{\centering -- $x$\\+ $\alpha$\\+ $z$}} (n1); 
		\draw [tran,loop above] (n2) to node[above] {\parbox{3em}{\centering + $x$\\-- $\alpha$\\+ $z$}} (n2); 
		\draw [tran,<->] (n1) to node[above] {-- $y$} (n2);  
		\draw [tran] (n1) -- (n3);
		\draw [tran] (n2) -- (n3);
		\node at (.5,-1.1) {$z \leftarrow x * y$};  
		\begin{scope}[shift={(2.5,0)}]
			\node[stoch] (n1) at (0,0) {in};
			\node[stoch] (n2) at (1,0) {out};
			\draw [tran,loop above] (n1) to node[above] {\parbox{3em}{\centering -- $y$\\+ $x$\\+ $\alpha$}} (n1); 
			\draw [tran,loop above] (n2) to node[above] {\parbox{3em}{\centering + $y$\\-- $\alpha$}} (n2);
			\draw [tran] (n1) to (n2);
			\node at (.5,-.3) {$x \leftarrow y$};
		\end{scope}
		\begin{scope}[shift={(5,0)}]
			\node[stoch] (n1) at (0,0) {in};
			\node[stoch] (n2) at (1,0) {out};
			\draw [tran,loop above] (n1) to node[above] 
			{\parbox{3em}{\centering -- $s_{i-1}$\\-- $\ell$\\+ $s_{i}$\\+ $\alpha$}} (n1); 
			\draw [tran,loop above] (n2) to node[above] 
			{\parbox{3em}{\centering + $\ell$\\-- $\alpha$}} (n2); 
			\draw [tran] (n1) to (n2);
			\node at (.5,-.3) {$s_i \leftarrow \min(\ell,s_{i-1})$};
			%    \end{scope}
			%    \begin{scope}[shift={(2.5,-1.5)}]
			%	\node[stoch] (n1) at (0,0) {in};
			%	\node[stoch] (n2) at (1,0) {out};
			%	\draw [tran,loop above] (n1) to node[above] {\parbox{2em}{\centering -- $x$\\ + $\alpha$\\ + $z$}} (n1); 
			%	\draw [tran,loop above] (n2) to node[above] {\parbox{4em}{\centering + $x$\\ -- $\alpha$ \\ +$z$}} (n2); 
			%	\draw [tran,<->] (n1) to node[above] {-- $y$} (n2);  
			%	\node at (.5,-.3) {$z \leftarrow x * y$};  
		\end{scope}
		\begin{scope}[shift={(0,-1.9)}]
			\node[stoch] (n1) at (0,0)    {in};
			\node[stoch] (n2) at (0.5,0)  {in};
			\node[stoch] (n3) at (1.25,0) {out};
			\node[stoch] (n4) at (1.75,0) {in};
			\node[stoch] (n5) at (2.5,0)  {out};
			\node[stoch] (n6) at (3,0)    {out};
			\draw [tran] (n1) to (n2);
			\draw [tran] (n3) to (n4);
			\draw [tran] (n5) to (n6);
			\draw [rounded corners,draw] (.25,-.3) rectangle (1.45,.2); 
			\draw [rounded corners,draw] (1.55,-.3) rectangle (2.7,.2); 
			\node at (0.9,-.2) {$\textit{ins}_1$};
			\node at (2.15,-.2) {$\textit{ins}_2$};
			\node at (1.5,-.6) {$\textit{ins}_1;\ \textit{ins}_2$};
		\end{scope}
		\begin{scope}[shift={(4.5,-1.6)}]
			\node[stoch] (n1) at (0,0)    {in};
			\node[stoch] (n2) at (0.5,.5)  {in};
			\node[stoch] (n3) at (1.25,.5) {out};
			\node[stoch] (n4) at (0.5,-.5) {in};
			\node[stoch] (n5) at (1.25,-.5){out};
			\node[stoch] (n6) at (1.75,0)  {out};
			\draw [tran] (n1) to (n2);
			\draw [tran] (n1) to (n4);
			\draw [tran] (n3) to (n6);
			\draw [tran] (n5) to (n6);
			\draw [dotted,thick] (.9,-.1) -- (.9,.1);
			\draw [rounded corners,draw] (.25,.2)  rectangle (1.5,.8); 
			\draw [rounded corners,draw] (.25,-.8) rectangle (1.5,-.2); 
			\node at (0.9,.3)  {$\textit{ins}_1$};
			\node at (0.9,-.7) {$\textit{ins}_j$};
			\node at (.9,-1)  {$\textbf{choose: } \textit{ins}_1; \textbf{ or } \cdots  \textbf{ or } \textit{ins}_j$};
		\end{scope}
	\end{tikzpicture}
	\caption{The gadgets of $\A_\varphi$.}
	\label{fig-gadgets}
\end{figure}

\begin{lemma}
	\label{lem-hardness1}
	Let $k \geq 2$. For every propositional formula $\varphi$ in 3-CNF there exists a demonic VASS $\A_\varphi$, with counter \(c \), constructible in time polynomial in $|\varphi|$ such that 
	\begin{itemize}
		\item if $\varphi$ is satisfiable, then $\calL_{\A_\varphi} \in \Theta(n^{k+1})$ and $\calC_{\A_\varphi}[c] \in \Theta(n^k)$;
		\item if $\varphi$ is not satisfiable, then $\calL_{\A_\varphi} \in \Theta(n^{k})$ and $\calC_{\A_\varphi}[c] \in \Theta(n)$.
	\end{itemize}
\end{lemma}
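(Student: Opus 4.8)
The plan is to realize the reduction via the VASS program $\A_\varphi$ of VASS Program~\ref{alg-Aphi}, translated into a demonic VASS using the gadgets of Fig.~\ref{fig-gadgets}, and to argue that the Demon's optimal play corresponds exactly to choosing a truth assignment together with, for each clause, a literal that is supposed to satisfy it. First I would explain the intended ``ideal'' execution: starting from counters initialized to $\vec{n}$, the cascade in line~\ref{dk-init} multiplies $d_1$ up through $d_k$, producing $d_k = \Theta(n^k)$ (this uses the multiplication gadget $z \leftarrow x*y$, whose net effect when entered with $x=a, y=b$ is to produce $z$-value $\Theta(ab)$ while consuming $\Theta(ab)$ steps and leaving a ``large'' counter value around $ab$); the \textbf{choose} loop of lines~\ref{asign-start}--\ref{asign-end} has the Demon pick, for each variable $x_i$, whether to load $d_k$ into the literal-counter $x_i$ or into $\bar x_i$ — this encodes a truth assignment $\nu$, with the chosen literal carrying value $\Theta(n^k)$ and its complement carrying $0$; line~\ref{line-s0} sets $s_0 = d_k = \Theta(n^k)$; then for each clause $i$ the Demon picks one of its three literals $\ell^i_j$ and sets $s_i \leftarrow \min(\ell^i_j, s_{i-1})$ via the $\min$-gadget; finally line~\ref{line-fin} computes $f \leftarrow s_m * n$, contributing $\Theta(n \cdot s_m)$ steps and a counter value $\Theta(n\cdot s_m)$. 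I would designate $c$ to be the counter holding the $d_k$/$s$-values (or a dedicated counter that mirrors $s_m$), so that $\Cval[c]$ tracks $s_m$ and $\Len$ tracks the total step count, which is dominated by the $f \leftarrow s_m * n$ phase (plus the $\Theta(n^k)$ already spent in the cascade).

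The crux of the soundness argument is the behaviour of $s_m$. If $\nu$ satisfies $\varphi$, then for every clause the Demon can pick a literal $\ell^i_j$ made true by $\nu$, i.e.\ one whose counter currently holds $\Theta(n^k)$; since $s_{i-1}$ is also $\Theta(n^k)$, the $\min$ is $\Theta(n^k)$, and inductively $s_m = \Theta(n^k)$, giving $\Cval[c] = \Theta(n^k)$ and a final phase of $\Theta(n^{k+1})$ steps, hence $\calL \in \Theta(n^{k+1})$. If $\nu$ does \emph{not} satisfy $\varphi$, then for \emph{every} choice of a literal per clause, at least one chosen literal is false under $\nu$, so its counter is $0$ (in the ideal run) — more carefully, $0$ up to lower-order ``leakage'' that I must bound by the gadget side-effect analysis — forcing $\min(\ell^i_j, s_{i-1}) = O(n)$ at that step, and since $\min$ is monotone, $s_m = O(n)$; thus $\Cval[c] = \Theta(n)$ (the $\Theta$ lower bound because counters start at $n$) and $\calL \in \Theta(n^k)$ (the cascade still costs $\Theta(n^k)$, and the $f$-phase only adds $\Theta(n^2)$, which is dominated once $k \geq 2$). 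Since the Demon maximizes, $\calL_{\A_\varphi} \in \Theta(n^{k+1})$ iff some assignment satisfies $\varphi$, and otherwise $\calL_{\A_\varphi} \in \Theta(n^k)$; similarly for $\calC[c]$.

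The main obstacle — and where the bulk of the technical work lies — is verifying that the Demon gains nothing by \emph{deviating} from this ideal scenario. Each gadget has side effects: the multiplication and $\min$ gadgets use auxiliary counters $\alpha$ and leave residual values, and the Demon controls how many times each internal loop fires and in what order the gadgets' ``in''/``out'' transitions interleave. I would argue, gadget by gadget, that (i) over-pumping an internal loop is either impossible (blocked by a counter hitting $0$) or does not help because it cannot raise the relevant counter above the value it would have in the ideal run, and (ii) the auxiliary counters $\alpha$, $z$, etc.\ are reset/consumed between phases so that no counter can be ``secretly'' carried forward to inflate a later $\min$. In particular, for the unsatisfiable case I must show that even with adversarial scheduling, the counter of a literal false under the chosen assignment stays $O(n)$ throughout the clause loop — this is the delicate point, since it requires that nothing from the multiplication cascade or from earlier clauses can leak $\omega(n)$ into it. I would package these facts as a sequence of invariants maintained along any computation (phrased in terms of the ``VASS program'' control flow, following the paper's stated methodology), and then the asymptotic bounds on $\Len$ and $\mmax[c]$ follow by summing the per-phase contributions. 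The polynomial-time constructibility of $\A_\varphi$ is immediate since each line of $\A_\varphi$ expands to a gadget of constant size and the program has $O(|\varphi|)$ lines.
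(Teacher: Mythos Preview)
Your proposal is correct and follows the same approach as the paper: build $\A_\varphi$ from VASS Program~\ref{alg-Aphi} via the gadgets of Fig.~\ref{fig-gadgets}, identify $c$ with $s_m$, and argue that the Demon's optimal play encodes an assignment plus a witness literal per clause. Two small sharpenings align you exactly with the paper's argument: the unchosen literal counter carries value $n$ (its initial value), not $0$ with ``leakage'', so the false-literal bound is simply $\Theta(n)$ from the start; and the reason no value can be ``secretly carried forward'' is that each gadget instance has its \emph{own} fresh auxiliary counter $\alpha$ (never reused elsewhere), which is the single structural fact that makes the no-deviation analysis go through without needing the interleaving invariants you sketch.
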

\begin{proof}
	Let $\varphi \ \equiv \ C_1 \wedge \cdots \wedge C_m$ be a propositional formula where every $C_i \equiv \ell_1^i \vee \ell_2^i \vee \ell_3^i$ is a clause with three literals over propositional variables $X_1,\ldots,X_v$ (a literal is a propositional variable or its negation). We construct a VASS $\A_\varphi$ with the counters 
	\begin{itemize}
		\item $x_1,\cdots,x_v$, $\bar{x}_1,\cdots,\bar{x}_v$ used to encode an assignment of truth values to $X_1,\ldots,X_v$. In the following, we identify literals $\ell_j^i$ of $\varphi$ with their corresponding counters (i.e., if $\ell_j^i \equiv X_u$, the corresponding counter is $x_u$; and if $\ell_j^i \equiv \neg X_u$, the corresponding counter is $\bar{x}_u$).
		\item $s_0,\ldots,s_m=c$ used to encode the validity of clauses under the chosen assignment,
		\item $f$ used to encode the (in)validity of $\varphi$ under the chosen assignment,
		\item $d_1,\ldots,d_k$ and $e_1,\ldots,e_{k-1}$ used to compute $n^k$,
		\item and some auxiliary counters used in gadgets.
	\end{itemize}
	The structure of $\A_\varphi$ is shown in VASS Program~\ref{alg-Aphi}. The basic instructions are implemented by the gadgets of Fig.~\ref{fig-gadgets}~(top). Counter changes associated to a given transition are indicated by the corresponding labels, where $-c$ and $+c$ mean decrementing and incrementing a given counter by one (the other counters are unchanged). Hence, the empty label represents no counter change, i.e., the associated counter update vector is $\vec{0}$. The auxiliary counter $\alpha$ is \emph{unique for every instance} of these gadgets and it is not modified anywhere else.   
	
	The constructs  $\textit{ins}_1;\, \textit{ins}_2$ and 
	$\textbf{choose: } \textit{ins}_1; \textbf{ or } \cdots  \textbf{ or } \textit{ins}_j$ are implemented by connecting the underlying gadgets as shown in Fig.~\ref{fig-gadgets}~(bottom). The \textbf{foreach} statements are just concise representations of the corresponding sequences of instructions connected by `;'.
	
	Now suppose that the computation of VASS Program~\ref{alg-Aphi} is executed from line~1 where all counters are initialized to~$n$. One can easily verify that all gadgets implement the operations associated to their labels up to some ``asymptotically irrelevant side effects''. More precisely,  
	\begin{itemize}
		\item the $z \leftarrow x * y$ gadget ensures that the Demon can increase the value of counter $z$ by $\textit{val}(x) + \textit{val}(y) \cdot (\textit{val}(x) +n)$ (but not more) if he plays optimally, where $\textit{val}(x)$ and $\textit{val}(y)$ are the values stored in $x$ and $y$ when initiating the gadget. Recall that the counter $\alpha$ is unique for the gadget, and its initial value is~$n$. Also note that the value of $y$ is decreased to $0$ when the Demon strives to maximally increase the value of~$z$.
		\item The $x \leftarrow y$ gadget ensures that the Demon can add $\textit{val}(y)$ to the counter $x$ and then reset $y$ to the value $\textit{val}(y) + n$ (but not more) if he plays optimally.  Again, note that $\alpha$ is a unique counter for the gadget with initial value~$n$.
		\item The $s_i \leftarrow \min(\ell,s_{i-1})$ gadget allows the Demon to increase $s_i$ by the minimum of $\textit{val}(\ell)$ and $\textit{val}(s_{i-1})$, and then restore $\ell$ to  $\textit{val}(\ell) + n$ (but not more). 
	\end{itemize}
	Now, the VASS Program~\ref{alg-Aphi} is easy to follow. We describe its execution under the assumption that the Demon plays \emph{optimally}. It is easy to verify that the Demon cannot gain anything by deviating from the below described scenario where certain counters are pumped to their \emph{maximal} values (in particular, the auxiliary counters are never re-used outside their gadgets, hence the Demon is not motivated to leave any positive values in them). 
	
	By executing line~\ref{dk-init}, the Demon pumps the counter $d_k$ to the value $\Theta(n^k)$. Then, the Demon determines a truth assignment for every $X_i$, where $i \in \{1,\ldots,v\}$, by pumping either the counter $x_i$ or the counter $\bar{x}_i$ to the value $\Theta(n^k)$.
	A key observation is that when the chosen assignment makes $\varphi$ true, then every clause contains a literal such that the value of its associated counter is $\Theta(n^k)$. Otherwise, there is a clause $C_i$ such that all of the three counters corresponding to $\ell_1^i$, $\ell_2^i$, $\ell_3^i$ have the value~$n$.
	The Demon continues by pumping $s_0$ to the value $\Theta(n^k)$ at line~\ref{line-s0}. Then, for every $i = 1,\ldots,m$, he selects a literal $\ell_j^i$ of $C_i$ and pumps $s_i$ to the minimum of $\textit{val}(s_{i-1})$ and $\textit{val}(\ell_j^i)$. Observe that $\textit{val}(s_{i-1})$ is either $\Theta(n)$ or $\Theta(n^k)$, and the same holds for $\textit{val}(s_i)$ after executing the instruction. Hence, $s_m$ is pumped either to $\Theta(n^k)$ or  $\Theta(n)$, depending on whether the chosen assignment sets every clause to true or not, respectively. Note that the length of the whole computation up to line~\ref{line-fin} is $\Theta(n^k)$, regardless whether the chosen assignment sets the formula $\varphi$ to true or false. If $s_m$ was pumped to $\Theta(n^k)$, then the last instruction at line~\ref{line-fin} can pump the counter $f$ to $\Theta(n^{k+1})$ in $\Theta(n^{k+1})$ transitions. Hence, if $\varphi$ is satisfiable, the Demon can schedule a computation of length $\Theta(n^{k+1})$ and along the way pump \(c=s_m \) to \(\Theta(n^k)\). Otherwise, the length of the longest computation is $\Theta(n^{k})$ and counter \(c=s_m \) is bounded by $\Theta(n)$. Also observe that if the Demon starts executing $\A_{\varphi}$ in some other control state (i.e., not in the first instruction of line~1), the maximal length of a computation as well as the maximal value of \(c\) is only smaller.   
\end{proof}

Recall that the class $\DP$ consists of problems that are intersections of one problem in $\NP$ and another problem in $\coNP$. The class $\DP$ is expected to be somewhat larger than $\NP \cup \coNP$, and it is contained in the $\PTIME^{\NP}$ level of the polynomial hierarchy. The standard $\DP$-complete problem is \textsc{Sat-Unsat}, where an instance is a pair $\varphi,\psi$ of propositional formulae and the question is whether $\varphi$ is satisfiable and $\psi$ is unsatisfiable \cite{Papadimitriou:book}. Hence, the $\DP$ lower bounds of Theorem~\ref{thm-main-demon} follow directly from the next lemmas (a proof can be found in Appendix \ref{app-demonic}).

% (a proof can be found in \cite{AK:VASS-polynomial-termination-arxiv}).

\begin{restatable}{lemma}{LemDPhard}
	\label{lem-hardness2}	
	Let $k \geq 3$. For every pair $\varphi,\psi$ of propositional formulae in 3-CNF there exists a demonic VASS $\A_{\varphi,\psi}$ such that $\calL_{\A_{\varphi,\psi}} \in \Theta(n^k)$ iff $\varphi$ is satisfiable and $\psi$ is unsatisfiable.
\end{restatable}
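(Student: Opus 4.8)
The plan is to assemble $\A_{\varphi,\psi}$ from two blocks composed in sequence. Since $\A_{\varphi,\psi}$ is purely demonic, the Demon drives both blocks, so the maximal length of the whole computation is asymptotically the sum — hence the maximum — of the maximal lengths of the two blocks, provided they share no counter except a single value passed from the first block to the second. Both blocks reuse the gadgets of Fig.~\ref{fig-gadgets} and the bookkeeping of VASS Program~\ref{alg-Aphi}; concretely, I would present $\A_{\varphi,\psi}$ as one VASS Program of the same style so that correctness reduces to following its control flow, exactly as in Lemma~\ref{lem-hardness1}.

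The first block $\B_\varphi$ is VASS Program~\ref{alg-Aphi} instantiated for $\varphi$ with degree parameter $k-1$ (legitimate because $k\geq 3$) and with the final multiplication line~\ref{line-fin} deleted. By the analysis of Lemma~\ref{lem-hardness1}, an optimally playing Demon pumps $d_{k-1}$ to $\Theta(n^{k-1})$, commits to a truth assignment, and finishes with the counter $s_{m}$ — which I rename $c$ — equal to $\Theta(n^{k-1})$ if $\varphi$ is satisfiable and to $\Theta(n)$ otherwise; in either case $\calL_{\B_\varphi}\in\Theta(n^{k-1})$. The second block $\B_\psi$ then runs the \emph{same} clause-checking construction for $\psi$, but at ``scale $c$'' rather than scale~$n$: it uses $c$ wherever VASS Program~\ref{alg-Aphi} uses $d_k$, i.e. it guesses an assignment by pumping each $x_i$ or $\bar x_i$ to $c$, sets $s_0\leftarrow c$, takes the $m_\psi$ minima, and finally performs one multiplication $f\leftarrow c*s_{m_\psi}$ (with a fresh $n$-valued counter, and spare copies of $c$ in the unlikely event a gadget consumes that operand). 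Because the untouched literal counters retain their initial value $n$ and $c=\Omega(n)$, the counter $s_{m_\psi}$ is $\Theta(c)$ when the Demon can satisfy $\psi$ and $\Theta(n)$ otherwise; hence the Demon can run $\B_\psi$ for $\Theta(c\cdot s_{m_\psi})$ transitions, i.e. $\Theta(c^2)$ if $\psi$ is satisfiable and $\Theta(c\cdot n)$ if it is not.

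Combining the blocks yields $\calL_{\A_{\varphi,\psi}}\in\Theta(\max\{\calL_{\B_\varphi},\calL_{\B_\psi}\})$, and a four-case analysis finishes the proof. If $\varphi$ is satisfiable and $\psi$ is not, then $c=\Theta(n^{k-1})$ and the bound is $\Theta(\max\{n^{k-1},\,n^{k-1}\cdot n\})=\Theta(n^{k})$. If both are satisfiable, the Demon reaches $\Theta(c^2)=\Theta(n^{2k-2})$, and since $2k-2>k$ for $k\geq 3$ the complexity is not in $\calO(n^k)$. If $\varphi$ is unsatisfiable, then $c=\Theta(n)$ regardless of $\psi$, so the bound is $\Theta(\max\{n^{k-1},n^2\})=\Theta(n^{k-1})$, which is not $\Theta(n^k)$ because $k-1<k$. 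Hence $\calL_{\A_{\varphi,\psi}}\in\Theta(n^k)$ iff $\varphi$ is satisfiable and $\psi$ is unsatisfiable; polynomial-time constructibility and validity under unary encoding are inherited from Lemma~\ref{lem-hardness1}.

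The main obstacle, as in Lemma~\ref{lem-hardness1}, is checking that an adversarial Demon cannot exploit the ``asymptotically irrelevant side effects'' of the copying and multiplication gadgets — for instance leaving residual values in auxiliary counters, or inflating the $n$-valued counters fed into the multiplications (each such counter grows by only an additive $\Theta(n)$ per use, and there are boundedly many uses, so it stays $\Theta(n)$). The point genuinely specific to this lemma is that the quadratic blow-up $\Theta(c^2)$ must be produced \emph{lazily}, i.e. only when $\psi$ is satisfiable: this is why $\B_\psi$ does a scale-$c$ clause check followed by a single multiplication rather than squaring $c$ outright. An eager square would cost $\Theta(c^2)=\Theta(n^{2k-2})$ already for unsatisfiable $\psi$ and would destroy the $\Theta(n^k)$ target in the satisfiable-$\varphi$/unsatisfiable-$\psi$ case, so the ordering of the gadgets inside $\B_\psi$ is what carries the whole argument.
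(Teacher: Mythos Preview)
Your construction is correct and the four-case analysis goes through, but it is a genuinely different route from the paper's. The paper runs the two SAT-checkers \emph{independently}: it executes lines~\ref{dk-init}--\ref{line-end-loop} of VASS~Program~\ref{alg-Aphi} for $\varphi$ and for $\psi$ (both with degree parameter $k-1$, on disjoint counter sets), obtaining $s(\A_\varphi)$ and $s(\A_\psi)$, and then appends two final multiplications, namely $s(\A_\varphi)\cdot s(\A_\psi)$ and $s(\A_\psi)\cdot s(\A_\psi)$. The product hits $\Theta(n^k)$ exactly when $\varphi$ is satisfiable and $\psi$ is not, while the square of $s(\A_\psi)$ acts as a ``penalty'' that blows up to $\Theta(n^{2k-2})$ whenever $\psi$ is satisfiable, regardless of $\varphi$. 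By contrast, you thread the output $c$ of the $\varphi$-checker as the \emph{scale} of the $\psi$-checker and finish with a single multiplication $c\cdot s_{m_\psi}$; your $c^2$ blow-up in the both-satisfiable case plays the same role as the paper's explicit square. The paper's version is slightly more modular (the two blocks share no counters at all, and no gadget is re-run at a data-dependent scale), which keeps the side-effect bookkeeping trivial; your version is a little more economical in the endgame (one multiplication instead of two) but pays for it with the scale-$c$ reuse of~$c$, which---as you note---requires checking that repeated passes through the $x\leftarrow y$ gadget only perturb $c$ by an additive $\Theta(n)$.
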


\begin{restatable}{lemma}{LemDPhardcounter}
	\label{lem-hardness3}	
	Let $k \geq 2$. For every pair $\varphi,\psi$ of propositional formulae in 3-CNF there exists a demonic VASS $\A_{\varphi,\psi}$ along with a counter \( c\) such that $\calC_{\A_{\varphi,\psi}}[c] \in \Theta(n^{k})$ iff $\varphi$ is satisfiable and $\psi$ is unsatisfiable.
\end{restatable}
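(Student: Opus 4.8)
The plan is to compose two independent copies of the construction from Lemma~\ref{lem-hardness1} and accumulate their outputs into a single designated counter. Let $\A_\varphi$ be the demonic VASS produced by VASS Program~\ref{alg-Aphi} for $\varphi$ with parameter $k$ (admissible since $k\geq 2$), with its counter $s_m$ renamed $s^\varphi_m$; and let $\A_\psi$ be the VASS produced by the same construction for $\psi$ with parameter $k+1$, with $s_m$ renamed $s^\psi_m$ and \emph{every} counter of $\A_\psi$ --- in particular each per-gadget auxiliary counter $\alpha$ --- renamed so that $\A_\varphi$ and $\A_\psi$ have disjoint counter sets. We may drop line~\ref{line-fin} in both copies, since the counters $f$ are never inspected. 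Then $\A_{\varphi,\psi}$ runs, in sequence: (i) $\A_\varphi$; (ii) a transfer loop --- a single state whose self-loop decrements $s^\varphi_m$ and increments a fresh counter $c$, and which also has an exit transition of effect $\vec{0}$ --- letting the Demon move the whole value of $s^\varphi_m$ into $c$; (iii) $\A_\psi$; (iv) an analogous transfer loop moving $s^\psi_m$ into $c$. All counters are initialised to $n$, and $c$ is the counter named in the statement.

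Next I would run the gadget analysis from the proof of Lemma~\ref{lem-hardness1} on each copy in isolation. It gives: the Demon can pump $s^\varphi_m$ to a maximum $M_\varphi(n)$ equal to $\Theta(n^k)$ when $\varphi$ is satisfiable and to $\Theta(n)$ otherwise, and $s^\psi_m$ to a maximum $M_\psi(n)$ equal to $\Theta(n^{k+1})$ when $\psi$ is satisfiable and to $\Theta(n)$ otherwise, with no Demon strategy exceeding these bounds and with the Demon not motivated to leave positive values in auxiliary counters. Since $c$ is fresh, is only ever incremented, and steps (ii) and (iv) let the Demon transfer all of $s^\varphi_m$ and all of $s^\psi_m$ into it, the largest value of $c$ reachable from the initial control state is $\Theta(M_\varphi(n)+M_\psi(n))$; starting from any other control state is no better (exactly as in the proof of Lemma~\ref{lem-hardness1}: then $\A_\varphi$, or $\A_\psi$, contributes at most its nominal maximum and $c$ still starts at $n\leq M_\varphi(n)$). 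Hence $\calC_{\A_{\varphi,\psi}}[c]$ is, up to a constant factor, $M_\varphi(n)+M_\psi(n)$, and the four cases give $\Theta(n^k)$ when $\varphi$ is satisfiable and $\psi$ unsatisfiable, $\Theta(n^{k+1})$ whenever $\psi$ is satisfiable (regardless of $\varphi$), and $\Theta(n)$ when both are unsatisfiable. Because $k\geq 2$, the last two are not in $\Theta(n^k)$, so $\calC_{\A_{\varphi,\psi}}[c]\in\Theta(n^k)$ iff $\varphi$ is satisfiable and $\psi$ is unsatisfiable. The whole construction is plainly polynomial in $|\varphi|+|\psi|$, which is what is needed for the $\DP$ lower bound.

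The step I expect to need the most care is the upper bound $\calC_{\A_{\varphi,\psi}}[c]=\calO(M_\varphi(n)+M_\psi(n))$: one must check that the Demon gains nothing by deviating from the intended scenario --- it cannot usefully interleave the two copies (the control flow of $\A_{\varphi,\psi}$ forbids it), it cannot carry leftover values from one copy into the other (their counter sets are disjoint, and $c$ is written only by the two monotone transfer loops), and it cannot do better by transferring only a proper part of $s^\varphi_m$ or $s^\psi_m$. Each of these reduces to a claim already established for a single copy in the proof of Lemma~\ref{lem-hardness1}, so this step is careful bookkeeping along the control flow of $\A_{\varphi,\psi}$ rather than a new argument; I expect it to be the most tedious but least conceptually demanding part of the proof.
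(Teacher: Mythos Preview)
Your construction is correct and is a genuinely different (and arguably cleaner) route than the paper's. The paper runs both $\A_\varphi$ and $\A_\psi$ for the \emph{same} parameter $k{+}1$, then builds an auxiliary value $b \leftarrow s(\A_\psi) * d_{k-1}$ (where $d_{k-1}$ is pumped to $\Theta(n^{k-1})$) and sets the designated counter to $\min(s(\A_\varphi),b)$. The four cases then give $\min(\Theta(n^{k+1}),\Theta(n^k))=\Theta(n^k)$, $\min(\Theta(n^{k+1}),\Theta(n^{2k}))=\Theta(n^{k+1})$, $\min(\Theta(n),\Theta(n^k))=\Theta(n)$, and $\min(\Theta(n),\Theta(n^{2k}))=\Theta(n)$. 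You instead break the symmetry at the level of the parameters ($k$ for $\varphi$, $k{+}1$ for $\psi$) and combine by \emph{addition} via two monotone transfer loops, obtaining $\Theta(n^k)$, $\Theta(n^{k+1})$, $\Theta(n)$, $\Theta(n^{k+1})$ in the four cases. Both separate the ``$\varphi$ sat, $\psi$ unsat'' case from the other three when $k\geq 2$.

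Your approach avoids the extra multiplication and $\min$ gadgets in the combining stage, and the upper-bound analysis is easier because the only writes to $c$ are increments, so the Demon's optimal behaviour (maximise each transfer) is immediate rather than requiring a side-effect check on the $\min$ gadget. The paper's approach is more uniform with its proof of Lemma~\ref{lem-hardness2} (same parameter in both copies, combine with arithmetic gadgets), but that is a stylistic rather than a technical advantage. Your bookkeeping paragraph on the upper bound is accurate: disjoint counter sets, sequential control flow, and monotonicity of the transfers reduce everything to the single-copy analysis already done in Lemma~\ref{lem-hardness1}.
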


\subsection{Upper bounds}
\label{sec-upper}

The upper complexity bounds of Theorem~\ref{thm-main-demon} are proven for $\calC[c]$. For the sake of clarity, we first sketch the main idea and then continue with developing a formal proof.
\smallskip

\noindent
\textbf{Intuition.}
For a given demonic VASS $\A$, we compute its SCC decomposition and proceed by analyzing the individual SCCs in the way indicated in Fig.~\ref{fig-decomp}. We start in a top SCC with all counters initialized to~$n$. Here, we can directly apply the results of~\cite{Zuleger:VASS-polynomial,Leroux:Polynomial-termination-VASS} and decide in polynomial time  whether $\calC[c] \in \Theta(n^k)$ for some $k \in \N$ or $\calC[c] \in 2^{\Omega(n)}$ (in the first case, we can also determine the $k$). We perform this analysis for every counter $c$ and thus obtain the vector describing the maximal asymptotic growth of the counters (such as $(n^2,n,2^{\Omega(n)})$ in Fig.~\ref{fig-decomp}). Observe that
\begin{itemize}
	\item although the asymptotic growth of $\calC[c]$ has been analyzed for each counter independently, all counters can be pumped to their associated asymptotic values \emph{simultaneously}. Intuitively, this is achieved by considering the ``pumping computations'' for the smaller vector $(\lfloor n/d \rfloor ,\ldots,\lfloor n/d \rfloor)$ of initial counter values ($d$ is the dimension of $\A$), and then simply ``concatenating'' these computations in a configuration with all counters initialized to~$n$; 
	\item if the asymptotic growth of $\calC[c]$ is $\Theta(n)$, the computation simultaneously pumping the counters to their asymptotic values may actually \emph{decrease} the value of $c$ (the computation can be arranged so that the resulting value of $c$ stays above  $\lfloor n/d \rfloor$ for all sufficiently large~$n$). For example, the top SCC of Fig.~\ref{fig-decomp} achieves the simultaneous asymptotic growth of all counters from $(n,n,n)$ to $(n^2,n,2^{\Omega(n)})$, but this does \emph{not} imply the counters can be simultaneously increased above the original value~$n$ (nevertheless, the simultaneous increase in the first and the third counter above $n$ is certainly possible for all sufficiently large~$n$).  
\end{itemize}
A natural idea how to proceed with next SCCs  is to perform a similar analysis for larger vectors of initial counter values. Since we are interested just in the asymptotic growth of the counters, we can safely set the initial value of a counter previously pumped to $\Theta(n^k)$ to \emph{precisely} $n^k$. However, it is not immediately clear how to treat the counters previously pumped to $2^{\Omega(n)}$. We show that  the length of a computation ``pumping'' the counters to their new asymptotic values in the considered SCC $\calC$ is at most exponential in $n$. Consequently, the ``pumping computation'' in $\calC$ can be constructed so that the resulting value of the ``large'' counters stays above one half of their original value. This means the value of ``large'' counters is still in $2^{\Omega(n)}$ after completing the computation in $\calC$. Furthermore, the large counters can be treated as if their initial value was infinite when analyzing $\calC$. This ``infinite'' initial value is implemented simply by modifying every counter update vector $\bu$ in $\calC$ so that $\bu[j] = 0$ for every ``large'' counter $c_j$. This adjustment in the structure of $\calC$ is denoted by putting ``$\infty$'' into the corresponding component of the initial counter value vector (see Fig.~\ref{fig-decomp}). This procedure is continued until processing all SCCs. Note that the same SCC may be processed \emph{multiple times} for different vectors of initial counter values corresponding to different paths from a top SCC. In Fig.~\ref{fig-decomp}, the bottom SCC is processed for the initial vectors $(n^5,n,\infty)$ and $(n^2,\infty,\infty)$ corresponding to the two paths from the top SCC. The number of such initial vectors can be \emph{exponential} in the size of $\A$, as witnessed by the VASS constructed in the proof of  Lemma~\ref{lem-hardness1}.

%
%Let $\A$ be a demonic VASS with $d$ counters. For a given counter $x$ and a computation $\alpha$, we use $\max[c,\alpha]$ to denote the supremum of all values stored in the counter $x$ in the configurations visited along $\alpha$. Note that $\max[c,\alpha] = \infty$ if $\alpha$ is infinite and $x$ is unbounded along $\alpha$. 
%
\begin{figure}
	\centering
	\begin{tikzpicture}[x=1.5cm, y=1.5cm, font=\footnotesize]
		\node (A) at (0,0) [draw,thick,minimum width=4cm,minimum height=1.5cm, rounded corners] {};
		\node (B) at (1.5,1.5) [draw,thick,minimum width=3cm,minimum height=1.5cm, rounded corners] {};
		\node (C) at (-1.5,1.5) [draw,thick,minimum width=3cm,minimum height=1.5cm, rounded corners] {};
		\node (D) at (0,3) [draw,thick,minimum width=3cm,minimum height=1.5cm, rounded corners] {};
		\draw [->, thick] (B) -- (A);
		\draw [->, thick] (C) -- (A);
		\draw [->, thick] (D) -- (B);
		\draw [->, thick] (D) -- (C);
		\node (a) at (0,3.3) {$(n,n,n)$};
		\node (b) at (0,2.7) {$(n^2,n,2^{\Omega(n)})$};
		\node (c) at (1.5,1.85) {$(n^2,n,\infty)$};
		\node (d) at (1.5,1.15)    {$(n^2,2^{\Omega(n)},\infty)$};
		\node (c) at (-1.5,1.85) {$(n^2,n,\infty)$};
		\node (d) at (-1.5,1.15)    {$(n^5,n,\infty)$};
		\node (c) at (-.6,.35) {$(n^5,n,\infty)$};
		\node (d) at (-.6,-.3)    {$(n^5,n^5,\infty)$};
		\node (c) at (.6,.35) {$(n^2,\infty,\infty)$};
		\node (d) at (.6,-.3)    {$(n^2,\infty,\infty)$};
		%    \node (c) at (.5,.5) {$\A$};
	\end{tikzpicture}
	\caption{Analyzing $\calC[c]$ in a demonic VASS by SCC decomposition.}
	\label{fig-decomp}
\end{figure}
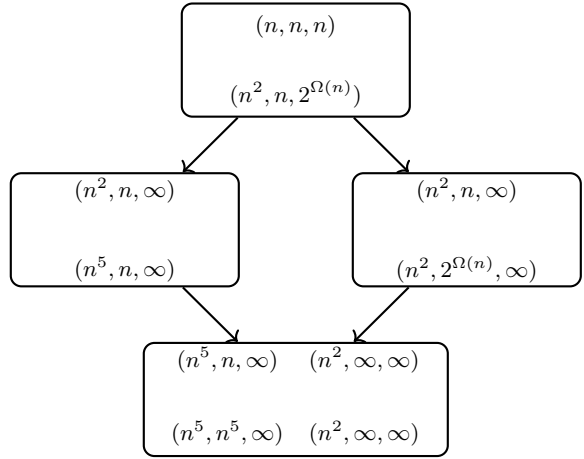

\smallskip

Now we give a formal proof. Let $\A$ be a demonic VASS with $d$ counters. For every counter $c$ and every $\bv \in \N^d$, we define the function $\C[c,\bv] : 
\N \rightarrow \N_\infty$ where $\C[c,\bv](n)$ is the maximum of all $\Cval[c](p\bu)$ where $p\in Q$ and $\bu = (n^{\bv(1)},\ldots,n^{\bv(d)})$.
%\[
%   \C[c,\bv](n) \ = \ \sup\, \left\{ \mmax[c](\alpha) ~\mid~ \alpha \mbox{ is a computation initiated in } p\bu \mbox{ s.t.\ } p\in Q \mbox{ and }\bu_i = n^{\bv(i)} \right\} 
%\]
%\[
%\C[c,\bv](n) \ = \ \max \left\{ \Cval[c](p\bu) \mid p\in Q \mbox{ and } \bu = (n^{\bv(1)},\ldots,n^{\bv(d)}) \right\} 
%\]
%For example, $\C[x,\vec{2}](n)$ is the maximal value of the counter $x$ along a computation initiated in a configuration where all counters are set to $n^2$. 
%
%The main tool for proving the upper complexity bounds of this section is the following proposition:

\begin{restatable}{proposition}{PropPoly}
	\label{prop-poly-init}
	Let $\A$ be a strongly connected demonic VASS with $d$ counters, and let $\bv \in \N^d$ such that $\bv(i) \leq 2^{j\cdot d}$ for every $i \leq d$, where $j < |Q|$. For every counter $c$, we have that either $\C[c,\bv] \in \Theta(n^k)$ for some $1 \leq k \leq 2^{(j{+}1)\cdot d}$, or $\C[c,\bv] \in \Omega(2^n)$. It is decidable in polynomial time which of the two possibilities holds. In the first case, the value of $k$ is computable in polynomial time. 
\end{restatable}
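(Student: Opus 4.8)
The plan is to reduce the statement, via a rescaling of the initial counter vector, to the already-known dichotomy for strongly connected demonic VASS initialized at $\vec{n}$, as established in \cite{Zuleger:VASS-polynomial,Leroux:Polynomial-termination-VASS}. Concretely, fix $\A$ and $\bv$ as in the statement, and write $N = n^{\bv}$ for the initial vector $(n^{\bv(1)},\dots,n^{\bv(d)})$. First I would observe that $\C[c,\bv](n) = \max_{p} \Cval[c](p\,N)$ is, up to the fixed choice of starting state, just the $c$-counter complexity of $\A$ when the $i$-th counter starts at $n^{\bv(i)}$ rather than at $n$. The key point is that $\A$ is strongly connected, so all counters can be raised to a common polynomial value before any ``real'' pumping begins: from any configuration with all counters at least $1$, a fixed cycle through all states and all transitions can be iterated to bring every counter up to at least $n^{b}$ (where $b = \max_i \bv(i)$) in $\calO(n^{b})$ steps, and symmetrically one never needs more than a polynomial-in-$N$ truncation argument to see that starting from $n^{\bv(i)}$ versus starting from $n^{b}$ in every coordinate changes $\C[c,\bv]$ only by a $\Theta(\cdot)$-factor in the exponent-normalized sense. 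Thus $\C[c,\bv](n)$ is $\Theta$-equivalent (as a function of $n$, in the sense of the polynomial-degree/exponential dichotomy) to $\calC_{\A}[c](m)$ evaluated at $m = n^{b}$ — i.e. we have reduced to the \emph{uniform} initialization $\vec{m}$ handled by prior work.

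Next I would invoke \cite{Zuleger:VASS-polynomial} (for strongly connected demonic VASS): either $\calC_{\A}[c](m) \in \Theta(m^{k'})$ for a computable $k' \geq 1$, or $\calC_{\A}[c] \notin \calO(m^{k'})$ for any $k'$, in which case by \cite{Leroux:Polynomial-termination-VASS} $\calC_{\A}[c] \in 2^{\Omega(m)}$. Substituting $m = n^{b}$: in the first case $\C[c,\bv] \in \Theta(n^{b\cdot k'})$, so we may take $k = b\cdot k'$; in the second case $\C[c,\bv] \in 2^{\Omega(n^{b})} \subseteq 2^{\Omega(n)}$, which gives the stated $\Omega(2^n)$ bound. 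Polynomial-time decidability and computability of $k'$ are exactly the content of the cited algorithm, and the arithmetic $k = b\cdot k'$ is trivially computable; so it remains only to pin down the numerical bound $k \leq 2^{(j+1)\cdot d}$. For this I would combine two facts: $b = \max_i \bv(i) \leq 2^{j\cdot d}$ by hypothesis, and the degree $k'$ produced by the strongly-connected analysis of a $d$-dimensional VASS is at most $2^{\calO(d)}$ — more precisely, since along any simple pumping scheme each of the $d$ counters can raise the polynomial degree of at most $d$ others by a bounded factor, one gets $k' \leq 2^{d}$ (this is the quantitative form of the ``either polynomial of bounded degree or exponential'' statement for strongly connected VASS). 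Multiplying, $k = b\cdot k' \leq 2^{j\cdot d}\cdot 2^{d} = 2^{(j+1)\cdot d}$, as required.

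The main obstacle I anticipate is the second paragraph's numerical bookkeeping rather than any conceptual difficulty: one must verify that feeding a non-uniform initial vector $n^{\bv}$ into the strongly-connected machinery of \cite{Zuleger:VASS-polynomial} genuinely only shifts the degree by the multiplicative factor $b$, and that the degree bound $k' \leq 2^{d}$ is the right quantitative extraction from that paper's argument (the cited results are usually stated qualitatively). A careful version of the rescaling argument needs the strong connectivity in two places — to equalize the counters up to $n^{b}$ cheaply at the start, and to ensure the $2^{\Omega(n^b)} \subseteq 2^{\Omega(n)}$ step does not lose the exponential witness when $b \geq 1$ — and I would spell those out explicitly; once that is done, everything else is a direct substitution into known theorems.
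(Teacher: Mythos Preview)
Your rescaling argument has a genuine gap. The claim that $\C[c,\bv](n)$ is $\Theta$-equivalent to $\calC_\A[c](n^b)$ with $b=\max_i\bv(i)$ is false, and the step ``a fixed cycle through all states and all transitions can be iterated to bring every counter up to at least $n^b$'' does not hold in a general strongly connected VASS. Take the one-state two-counter VASS with the single transition $(q,(0,-1),q)$. From the uniform start $(m,m)$ we get $\calC_\A[c_1](m)=m$, so $k'=1$. With $\bv=(1,2)$, starting at $(n,n^2)$, counter $c_1$ never moves and $\C[c_1,\bv](n)=n$; your formula $k=b\cdot k'=2$ predicts $\Theta(n^2)$, which is wrong. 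The obstruction is structural: a counter whose complexity is $\Theta(m)$ under uniform initialization \emph{cannot} be pumped above its initial value, so there is no way to equalize all counters to $n^b$ before invoking the black-box result. Non-uniform polynomial initial values interact with the VASS dynamics in a way that is not captured by a single substitution $m=n^b$.

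The paper avoids this by going in the opposite direction: rather than rescaling the input to $\A$, it builds a new strongly connected VASS $\U$ that \emph{contains} $\A$ together with a pumping gadget $\B$ (using repeated squaring and fresh auxiliary counters) which, from the uniform start $\vec{n}$, raises each $c_i$ to $\Theta(n^{\bv(i)})$ before control passes to the $\A$-part. One then argues that the optimal Demon strategy in $\U$ is to run $\B$ once and then play in $\A$, so $\C_\U[c_i](n)=\Theta(\C_\A[c_i,\bv](n))$, and Zuleger's dichotomy applies directly to $\U$. The point is that the non-uniformity of $\bv$ is encoded in the \emph{structure} of $\U$, not in a change of variable; this is what makes the reduction sound.
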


In~\cite{Zuleger:VASS-polynomial}, a special variant of Proposition~\ref{prop-poly-init} covering the subcase when $\bv = \vec{1}$ is proven. In the introduction part of \cite{Zuleger:VASS-polynomial}, it is mentioned that a  generalization of this result (equivalent to Proposition~\ref{prop-poly-init}) can be obtained by modifying the techniques presented in \cite{Zuleger:VASS-polynomial}. Although no explicit proof is given, the modification appears feasible. We give a simple explicit proof of Proposition~\ref{prop-poly-init}, using the algorithm of \cite{Zuleger:VASS-polynomial} for the $\bv = \vec{1}$ subcase as a ``black-box procedure''.  We refer to Appendix \ref{app-demonic} for details.%\cite{AK:VASS-polynomial-termination-arxiv}

Now we extend the function $\C[c,\bv]$ so that $\bv \in \N_{\infty}^d$. Here, the $\infty$ components of $\bv$ correspond to counters that have already been pumped to ``very large'' values and do not constrain the computations in $\A$. As we shall see, ``very large'' actually means ``at least singly exponential in~$n$''.

Let $\bv \in \N_{\infty}^d$, and let $\A_{\bv}$ be the VASS obtained from $\A$ by modifying every counter update vector $\bu$ into $\bu'$, where $\bu'(i) = \bu(i)$ if $\bv(i) \neq \infty$, otherwise $\bu'(i) = 0$. Hence, the counters set to $\infty$ in $\bv$ are never changed in $\A_{\bv}$. Furthermore, let $\bv'$ be the vector obtained from $\bv$ by changing all $\infty$ components into~$1$. We put $\C_{\A}[c,\bv] = \C_{\A_{\bv}}[c,\bv']$. 

For a given $\bv \in \N_{\infty}^d$, we say that $F : \N \rightarrow \N^d$ is \emph{$\bv$-consistent} if for every $i \in \{1,\ldots,d\}$ we have that the projection $F_i : \N \rightarrow \N$ is either $\Theta(n^k)$ if $\bv_i = k$, or $2^{\Omega(n)}$ if  $\bv_i = \infty$. Intuitively, a $\bv$-consistent function assigns to every $n \in \N$ a vector $F(n)$ of initial counter values growing consistently with $\bv$.  

Given $\bv \in \N_{\infty}^d$, a control state $p \in Q$, a $\bv$-consistent function $F$, an infinite family  $\Pi = \pi_1,\pi_2,\ldots$ of Demon's strategies in $\A$, a function $S : \N \rightarrow \N$, and $n \in \N$, we use  $\beta_n[\bv,p,F,\Pi,S]$ to denote the computation of $\A$ starting at $p F(n)$ obtained by applying $\pi_n$  until a maximal computation is produced or $S(n)$ transitions are executed.

%For all $\bv \in \N_{\infty}^d$ and $n \in \N$, let $\bu_{\bv,n} \in \N^d$ be a vector such that $\bu_{\bv,n}(i) = n^k$ if $\bv(i) = k$, and $\bu_{\bv,n}(i) = 2^n$ if $\bv(i) = \infty$. Let $\Pi = \pi_1,\pi_2,\ldots$ be an infinite family of Demon's strategies, and let $S : \N \rightarrow \N$. For all $p \in Q$ and $n \in \N$, we use $\beta[p,\pi_n,S]$ to denote the computation starting at $p\, \bu_{\bv,n}$ obtained by applying $\pi_n$  until a maximal computation is produced or $S(n)$ transitions are executed. 
%For every counter $c$, let $\C_{\A}^{\Pi,S}[c,\bv] : \N \rightarrow \N_\infty$ be a function where
%\[
%\C_\A^{\Pi,S}[c,\bv](n) \ = \ \sup\, \left\{ \mmax[c](\beta[p,\pi_n,S]) ~\mid~ p\in Q  \right\} \,.
%\]

The next lemma says that if $\A$ is strongly connected, then all counters can be pumped \emph{simultaneously} to the values asymptotically equivalent to $\C_{\A}[c,\bv]$ so that the counters previously pumped to exponential values stay exponential.   

\begin{restatable}{lemma}{LemSimult}
	\label{lem-simult}
	Let $\A$ be a strongly connected demonic VASS with $d$ counters. Let $\bv \in \N_{\infty}^d$, and let $F$ be a $\bv$-consistent function. Then for every counter $c_i$ such that $\bv_i \neq \infty$ and $\C_{\A}[c_i,\bv] \in \Theta(n^k)$ we have that $\Cval[c_i](pF(n)) \in \Theta(n^k)$ for every $p \in Q$. 
	Furthermore, there exist $p \in Q$, an infinite family $\Pi$ of Demon's strategies, and a function $S \in 2^{\calO(n)}$ such that for every $c_i$, the value of $c_i$ in the last configuration of $\beta_n[\bv,p,F,\Pi,S]$ is
	\begin{itemize}
		\item $\Theta(n^k)$ if $\C_{\A}[c_i,\bv] \in \Theta(n^k)$;
		\item $2^{\Omega(n)}$ if $\bv_i = \infty$ or $\C_\A[c_i,\bv] \in 2^{\Omega(n)}$.
	\end{itemize}
\end{restatable}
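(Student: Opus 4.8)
The plan is to establish the two assertions separately, using strong connectivity to relocate between control states and the ``net effect'' bound for legal computations to keep side effects under control. Throughout, recall that by Proposition~\ref{prop-poly-init} (applied to $\A_{\bv}$ with the vector $\bv'$) every counter $c_j$ is of one of two kinds: either $\C_{\A}[c_j,\bv]\in\Theta(n^{k_j})$ for some $k_j$, or $\C_{\A}[c_j,\bv]\in 2^{\Omega(n)}$. Write $F'(m)=(m^{\bv'(1)},\dots,m^{\bv'(d)})$ for the canonical initial vector used in the definition of $\C$; also note that every finite $\bv_j$ may be assumed to be at least $1$, as the exponent vectors produced by the SCC decomposition start at $\vec{1}$.

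\emph{First assertion.} Fix $c_i$ with $\bv_i$ finite and $\C_{\A}[c_i,\bv]\in\Theta(n^k)$. For the upper bound, observe that a computation of $\A$ from $pF(n)$ is in particular a computation of $\A_{\bv}$ from $pF(n)$ --- zeroing the $\infty$-components of the update vectors only drops the ``$\ge 0$'' constraint on the frozen counters and changes no other counter's trajectory --- so $\Cval[c_i](pF(n))\le\Cval[c_i]^{\A_{\bv}}(pF(n))$. Since $F$ is $\bv$-consistent there is $m=\Theta(n)$ with $F(n)\le F'(m)$ on all non-$\infty$ coordinates, and as $\Cval$ is monotone in the initial configuration (in a demonic VASS the Demon controls the whole run) this gives $\Cval[c_i]^{\A_{\bv}}(pF(n))\le\C_{\A}[c_i,\bv](m)\in\Theta(n^k)$. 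For the lower bound, take a control state $q$ and a computation $\rho$ of $\A_{\bv}$ from $qF'(m^{*})$, for a suitable $m^{*}=\Theta(n)$, witnessing $\C_{\A}[c_i,\bv](m^{*})\in\Theta(m^{*k})$; by the structural analysis of polynomially bounded counters in strongly connected VASS from \cite{Zuleger:VASS-polynomial,Leroux:Polynomial-termination-VASS}, $\rho$ may be chosen of length polynomial in $m^{*}$, hence in $n$. Since $\A$ is strongly connected and every $F_j(n)\to\infty$, from $pF(n)$ a fixed path leads to $q$ for all large $n$, ending in a configuration that dominates $qF'(m^{*})$ on the non-$\infty$ coordinates and still has every $\infty$-counter in $2^{\Omega(n)}$; replaying $\rho$ there inside $\A$ is legal, because the only extra requirement over $\A_{\bv}$ is that the $\infty$-counters remain nonnegative, and over polynomially many steps with bounded decrements a $2^{\Omega(n)}$-valued counter is not exhausted. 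This yields $\Cval[c_i](pF(n))\in\Omega(n^k)$, and together with the upper bound the claim holds for every $p$.

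\emph{Second assertion.} Fix a large constant $K$ depending only on $\A$ and on the constants witnessing $\bv$-consistency of $F$, in particular on an exponent $a>0$ with $F_j(n)\ge 2^{an}$ for every $\infty$-coordinate $j$. I would build the computation as a concatenation of at most $d$ ``relocate-then-pump'' blocks, one per counter, in which all counters destined to end in $2^{\Omega(n)}$ (the $\infty$-coordinates and the $c_j$ with $\C_{\A}[c_j,\bv]\in 2^{\Omega(n)}$) are processed \emph{before} the polynomially bounded ones. Each block first moves, by a fixed path, to the state where the relevant pumping cycle structure is based, and then runs that cycle structure \emph{designed for the scaled-down budget vector} obtained by dividing the current counter values by $K$. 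By the net-effect bound such a block decreases each counter by at most a $1/K$-fraction of its current value, so after all blocks every non-$\infty$ counter keeps a positive constant fraction of its value --- namely $\Theta(n^{\bv_j})$ if not yet pumped, $\Theta(n^{k_j})$ once pumped as a polynomial counter, and $2^{\Omega(n)}$ once pumped as an exponential one --- while the scaled cycle structures still attain their asymptotic targets up to constants. A block pumping a polynomial counter has length $\mathrm{poly}(n)$, exactly as in the first assertion; a block pumping a counter with exponential $\C_{\A}[\cdot,\bv]$ only needs to reach $2^{\delta n}$ for a sufficiently small fixed $\delta>0$, which by the length bounds of \cite{Zuleger:VASS-polynomial,Leroux:Polynomial-termination-VASS} (stopping the exponential loop early) costs time $2^{\calO(\delta n)}$. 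Hence the total length $S$ lies in $2^{\calO(n)}$, and for $\delta$ small its exponent is below $a$, so every $\infty$-counter stays positive --- and in $2^{\Omega(n)}$ --- throughout, making the replay inside $\A$ legal. Taking $p$ to be the state of the first block, $\Pi$ the induced family of Demon strategies (extended arbitrarily after the last block), and this $S$, the last configuration of $\beta_n[\bv,p,F,\Pi,S]$ has exactly the stated counter values.

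\emph{Main obstacle.} The delicate point is the mutual interference of the pumping blocks: pumping a counter consumes others as resources, and pumping a counter to an exponential value costs exponentially many steps, so a naive concatenation could both starve a counter needed by a later block and wipe out a counter already pumped to $\Theta(n^{k})$. The three ingredients above are designed precisely to defeat this --- ordering exponential blocks before polynomial ones keeps the long phases away from the fragile polynomial counters; the scaled-budget construction together with the net-effect bound guarantees each block spends only a $1/K$-fraction of every counter; and pumping exponential counters just far enough to exceed $2^{\delta n}$, with $\delta$ below the growth rate of the $\infty$-counters, keeps every phase short enough that no $\infty$-counter is driven below zero. Confirming that the scaled cycle structures still reach the asymptotically correct targets, and that the length estimates for partial exponential pumping can be pushed below the exponent $a$, is where the constructions of \cite{Zuleger:VASS-polynomial} and \cite{Leroux:Polynomial-termination-VASS} do the real work.
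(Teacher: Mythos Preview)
Your approach is essentially the paper's: concatenate one pumping phase per counter and control interference via a scaled-budget (``box'') argument --- the paper divides $F(n)$ into $d$ boxes of size $\lfloor F(n)/d\rfloor$ and designs \emph{every} phase for that fixed initial budget, so each phase's net decrease on any $c_j$ is bounded by $\lfloor F_j(n)/d\rfloor\in\Theta(n^{\bv_j})$ independently of what earlier phases achieved, which in particular makes your exponential-before-polynomial ordering unnecessary (an exponentially long phase designed for the initial budget still removes only $\Theta(n^{\bv_j})$ from a counter already pumped to $\Theta(n^{k_j})$). Your explicit tuning of $\delta$ so that the total length's exponent stays below the growth rate $a$ of the $\infty$-coordinates is, on the other hand, more careful than the paper's appendix: the paper asserts that for $\bv_i=\infty$ ``the $i$-th box remains untouched'', which is literally true only in $\A_{\bv}$ and not in $\A$ where those counters are updated, and something like your $\delta$-argument (equivalently, choosing the rescaling parameter small enough) is what actually makes the replay inside $\A$ legal.
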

\bigskip

A proof of Lemma~\ref{lem-simult} uses the result of \cite{Leroux:Polynomial-termination-VASS} saying that counters pumpable to exponential values can be simultaneously pumped by a computation of exponential length from a configuration where all counters are set to~$n$ (the same holds for polynomially bounded counters, where the length of the computation can be bounded even by a polynomial). Using the construction of  Proposition~\ref{prop-poly-init}, these results are extended to our setting with $\bv$-consistent initial counter values. Then, the initial counter values are virtually ``split into $d$~boxes'' of size $\lfloor F(n)/d\rfloor$. The computations pumping the individual counters are then run ``each in its own box'' for these smaller initial vectors and concatenated. As the computation of one ``box'' cannot affect any other ``box'', no computation can undo the effects of previous computations. The details can be found in Appendix \ref{sec-simult-proof} % \cite{AK:VASS-polynomial-termination-arxiv}.

%A proof of Lemma~\ref{lem-simult} uses the result of \cite{Leroux:Polynomial-termination-VASS} saying that counters pumpable to exponential values can be simultaneously pumped by a computation of exponential length from a configuration where all counters are set to~$n$ (the same holds for polynomially bounded counters, where the length of the computation can be bounded even by a polynomial). Using the construction of  Proposition~\ref{prop-poly-init}, these results are extended to our setting with $\bv$-consistent initial counter values. Then, the initial counter values are virtually ``split into $d$~boxes'' of size $\lfloor F(n)/d\rfloor$. The computations pumping the individual counters are then run for these smaller initial vectors and concatenated. The details are in Appendix~\ref{sec-simult-proof}. 

Let $\V_{\A} : \N_{\infty}^d \rightarrow \N_{\infty}^d$ be a function such that, for every $\bv \in \N_{\infty}^d$,  %$\V_\A(\bv)$ is a vector defined as follows:
\[
\V_\A(\bv)(i) = \begin{cases}
	k & \mbox{if $\bv_i \neq \infty$ and $\C_{\A}[c_i,\bv] \in \Theta(n^k)$,}\\
	\infty & \mbox{otherwise.}  
\end{cases}  
\]
% is a vector whose $i$-th component is either the $k$ such that $\C_{\A}[c_i,\bv] \in \Theta(n^k)$ if such a $k$ exists and $\bv_i \neq \infty$, or~$\infty$ (otherwise). 
%According to Lemma~\ref{prop-poly-init}, $\V_{\A}(\bv)$ is computable in polynomial time for every $\bv \in \N_\infty^d$ such that every \emph{finite} component of $\bv$ is bounded by $2^{j\cdot d}$ for some $j < |Q|$. 

Note that every SCC (vertex) $\eta$ of $\D(\A)$ can be seen as a strongly connected demonic VASS after deleting all transitions leading from/to the states outside $\eta$. If the counters are simultaneously pumped to $\bv$-consistent values before entering $\eta$, then $\eta$ can further pump the counters to $\V_{\eta}(\bv)$-consistent values (see Lemma~\ref{lem-simult}). According to Lemma~\ref{prop-poly-init}, $\V_{\eta}(\bv)$ is computable in polynomial time for every $\bv \in \N_\infty^d$ where every \emph{finite}  $\bv_i$ is bounded by $2^{j\cdot d}$ for some $j < |Q|$.

%Note that the asymptotic bounds of Lemma~\ref{lem-simult} are still achievable if every counter $c_i$ where $\bv_i = \infty$ is pumped ``only'' to $2^{\Omega(n)}$ value (instead of $2^n$) before entering $\eta$. Technically, the computation of $\eta$ must be chosen so that its length is $2^{\Omega(n)}$, but the value of $c_i$ \emph{remains} $2^{\Omega(n)}$ after finishing the computation. For example, the length can be chosen so that the value of every $c_i$ where $\bv_i = \infty$ is about $1/2$ of the original value of $c_i$. Note that in the successor SCCs of $\eta$, the exponentially large counters can be ``divided by $2$'' again. The total number of such divisions is bounded by $|Q|$ and hence independent of~$n$.

%Thus, for every $\eta$ we can compute the set $\Vect_\A(\eta)$ of \emph{all} $\bu$ such that the counters can be simultaneously pumped to the asymptotic bounds described by $\bu$ after leaving~$\eta$, assuming that the computation is initiated in a root of $\D_\A$ with all counters set to~$n$. This is achieved by Algorithm~\ref{alg-vectors}. In particular, at lines~\ref{root-start}--\ref{root-end},  $\Vect_\A(\eta)$ is set to $\V_{\eta}(\vec{1})$ for every root $\eta$ of $\D(\A)$. The algorithm then follows the top-down acyclic structure of $\D(\A)$ and computes $\Vect_\A(\eta)$ for the remaining components.

%Since all control states of $\A$ are reachable in at most $|Q|$ transitions from a control state belonging to a root SCC of $\D_\A$, we can safely assume that a computation is initiated in a root SCC of $\D_\A$ when estimating $\C_\A[c]$. 

Observe that \emph{all} computations of $\A$ can be divided into finitely many pairwise disjoint classes according to their corresponding paths in $\D_\A$ (i.e., the sequence of visited SCCs of $\D_\A$).
For each such sequence $\eta_1,\ldots,\eta_m$, the vectors $\bv_0,\ldots,\bv_m$ where $\bv_0 = \vec{1}$ and $\bv_{i} = \V_{\eta_i}(\bv_{i-1})$ are computable in time polynomial in $|\A|$ (note that $m \leq |Q|$). The asymptotic growth of the counters achievable by computations following the path $\eta_1,\ldots,\eta_m$ is then given by $\bv_m$. Hence, $\C_\A[c_i] \in \Omega(n^k)$ iff there is a path $\eta_1,\ldots,\eta_m$ in $\D_\A$ such that $\bv_m(i) \geq k$. Similarly, $\C_\A[c_i] \in \calO(n^k)$ iff for every path $\eta_1,\ldots,\eta_m$ in $\D_\A$ we have that $\bv_m(i) \leq k$. From this we immediately obtain the upper complexity bounds of Theorem~\ref{thm-main-demon}.

Furthermore, for every SCC $\eta$ of $\D_\A$, we can compute the set $\Vect_\A(\eta)$ of \emph{all} $\bu$ such that there is a path $\eta_1,\ldots,\eta_m$ where $\eta_1$ is a root of $\D_\A$, $\eta_m = \eta$, and $\bu = \bv_m$. A full description of the algorithm is given in Appendix \ref{app-alg-vect}. If $\Degree(\eta)$ is bounded by a fixed constant independent of $\A$ for every leaf $\eta$ of $\D_{\A}$, then the algorithm terminates in polynomial time, which proves Theorem~\ref{thm-demon-tractable}.

% \cite{AK:VASS-polynomial-termination-arxiv}

%A proof of Lemma~\ref{lem-simult} is essentially the same as the proof of Claim~1 in \cite{KLV:VASS-Grzegorczyk}, i.e., we construct $\pi_n$ by ``concatenating'' the paths corresponding to computations pumping the individual counters from configuration 

\section{VASS Games}
\label{sec-VASS-games}

The computational complexity of polynomial asymptotic analysis for VASS games is classified in our next theorem. The parameter characterizing hard instances is identified at the end of this section. 
\begin{theorem}
	\label{thm-main-games}
	Let $k \geq 1$. For every VASS game $\A$ and counter \(c\) we have that $\calL$ ($\calC[c]$) is either in $\calO(n^k)$ or in $\Omega(n^{k+1})$. Furthermore, the problem whether
	\begin{itemize}
		\item $\calL \in \calO(n^k)$ is $\NP$-complete for $k {=} 1$ and $\PSPACE$-complete for $k{\geq} 2$;
		\item $\calL \in \Omega(n^{k})$ is in $\PTIME$ for $k {=} 1$, $\coNP$-complete for $k{=}2$, and $\PSPACE$-complete for $k{\geq} 3$;
		\item $\calL \in \Theta(n^{k})$ is $\NP$-complete for $k{=}1$ and $\PSPACE$-complete for $k{\geq} 2$;
		\item $\calC[c] \in \calO(n^k)$ is $\PSPACE$-complete;
		\item $\calC[c] \in \Omega(n^{k})$ is in $\PTIME$ for $k {=} 1$, and $\PSPACE$-complete for $k{\geq} 2$;
		\item $\calC[c] \in \Theta(n^{k})$ is $\PSPACE$-complete.
	\end{itemize}
\end{theorem}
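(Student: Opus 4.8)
The plan is to mirror the development of Section~\ref{sec-demonic}, replacing the DAG of strongly connected components by the \emph{locking decomposition} $\D(\A)$ sketched above. The first and central step is the structural lemma announced in the introduction: when player Angel aims to minimize the asymptotic growth of $\calC[c]$ (equivalently $\calL$, via the step-counter reduction), it suffices to consider \emph{locking} strategies, i.e.\ strategies that, upon first entering a fresh angelic state $p$, fix one outgoing transition of $p$ — possibly as a function of the computational history so far and of the transitions already locked in previously visited angelic states — and reuse it on every later visit to $p$. The justification is that the only feature of the past relevant to the $\Theta(n^k)$-level analysis is the vector recording, for each counter, its current polynomial degree (or the symbol $\infty$ for counters already pumped to exponential values); there are only finitely many such degree profiles, so committing at the first visit loses nothing at this resolution. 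Once the locks are fixed, the set of states reachable under a given lock choice behaves as a \emph{demonic} VASS, which lets us reuse Proposition~\ref{prop-poly-init} and Lemma~\ref{lem-simult} verbatim inside each piece.

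Next I would set up $\D(\A)$ precisely (following the commented definition): its vertices are the locking components $[p]_L$ induced by locking sets $L=(M,l)$ and the equivalence $\sim_L$, and an edge either stays inside one locking set and connects two distinct components via a transition, or moves from $L=(M,l)$ to $L'=(M\cup\{p\},l')$ by locking a newly encountered angelic state $p\notin M$. When $\A$ is demonic this coincides with the SCC-DAG, so the construction is conservative. Each locking component is, with its locked angelic states frozen, a strongly connected demonic VASS, so the value-vector map $\V$ of Section~\ref{sec-upper} applies: along a path $\eta_1,\dots,\eta_m$ in $\D(\A)$ we propagate $\bv_0=\vec 1$, $\bv_i=\V_{\eta_i}(\bv_{i-1})$, and the achievable asymptotic counter profile of a computation following that path is $\bv_m$. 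As for demonic VASS, the finite entries of every $\bv_i$ stay bounded by $2^{|Q|\cdot d}$, so each $\V_{\eta_i}(\bv_{i-1})$ is computable in polynomial time, and the locking-step edges branch over at most $|\Tran|$ locks while the total depth is at most $|Q|$.

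Now $\D(\A)$ carries a two-player structure: locking-step edges belong to Angel, edges that leave a component through a demonic choice (or through the choice of which continuation to enter) belong to Demon. The predicate ``$\calC[c]\in\calO(n^k)$'' becomes ``Angel has a locking strategy such that every Demon-consistent path $\eta_1,\dots,\eta_m$ satisfies $\bv_m(i)\le k$'', and dually for ``$\calC[c]\in\Omega(n^k)$''. This is a reachability game on a graph of exponential size but polynomial depth in which each move, and the current value vector, is computable in polynomial time; hence it is decidable by an alternating polynomial-time algorithm, i.e.\ in $\PSPACE$, giving the $\PSPACE$ upper bounds and (by intersection) the $\Theta$ case. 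The dichotomy ``$\calO(n^k)$ or $\Omega(n^{k+1})$'' is then immediate: outside the exponential regime the game value is exactly the integer $\max_{\text{Demon}}\min_{\text{locks}}\bv_m(i)$, so the growth is precisely $\Theta(n^k)$. For $k=1$ the picture degenerates: $\calC[c]\in\Omega(n)$ and $\calL\in\Omega(n)$ hold trivially; and to test $\calL\in\calO(n)$ Angel only needs to guess one lock per angelic state and then run the linear-ranking procedure of \cite{BCKNVZ:VASS-linear-termination}, yielding the $\NP$ bound, while $\calL\in\Omega(n^2)$ reduces to a $\coNP$ check.

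For the lower bounds I would reduce from \textsc{Tqbf} (and, for the PTIME-boundary cases, from \textsc{Sat}/\textsc{Unsat}), encoding quantifier alternation by alternating angelic states — whose locks select existentially chosen truth values — with demonic gadgets that implement universal choices and clause verification, exactly in the style of VASS~Program~\ref{alg-Aphi}; a final multiplication gadget converts ``formula satisfied'' versus ``formula falsified'' into an $\Theta(n^{k+1})$ versus $\Theta(n^{k})$ separation for $\calL$ and $\Theta(n^{k})$ versus $\Theta(n)$ for $\calC[c]$, as in Lemma~\ref{lem-hardness1}; the $\PSPACE$-hardness of ``$\calL\in\calO(n^2)$'' from \cite{KLV:VASS-Grzegorczyk} serves as a template that we propagate to all degrees $k\ge 2$ and to the counter version. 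The main obstacle is the first structural lemma — showing that locking strategies are asymptotically optimal for Angel and, in particular, that letting the lock of $p$ depend only on the history up to the \emph{first} visit of $p$ is harmless, even though Angel might intuitively wish to re-decide once more counters have grown; one has to argue that the finitely many degree profiles already capture everything Angel could learn. A secondary subtlety is engineering $\D(\A)$ so that it simultaneously reduces to the SCC-DAG in the demonic case, keeps every component demonic and strongly connected so that Section~\ref{sec-upper} applies unchanged, and has polynomial depth so that the induced game lies in $\PSPACE$ rather than merely in exponential space.
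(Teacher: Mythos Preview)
Your proposal follows essentially the paper's architecture: locking strategies for Angel, a locking decomposition playing the role of the SCC-DAG, value-vector propagation via $\V$ along its paths, and an alternating polynomial-time traversal yielding the $\PSPACE$ upper bounds. Two points deserve correction, however.

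First, in the QBF reduction you have the player roles reversed. Demon is the \emph{maximizer} and wants the formula to come out \emph{true} (so that $s_m$ reaches $\Theta(n^k)$ and $\calL\in\Theta(n^{k+1})$); hence Demon must control the \emph{existentially} quantified variables and Angel the \emph{universally} quantified ones. The paper's VASS Program~\ref{alg-Apsi} accordingly uses A-\textbf{choose} for the universal $x_i$ and D-\textbf{choose} for the existential $y_i$ (and for the clause-literal selection). With your assignment, Angel picking an existential witness would only help Demon, and the reduction collapses.

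Second, the ``main obstacle'' you correctly isolate---that locking is asymptotically optimal for Angel---is not settled by the observation that only finitely many degree profiles exist. What the paper actually does (Proposition~\ref{prop-upper-games}, angelic case) is construct, for an angelic $p$ with outgoing transitions to $q_1,\ldots,q_m$, a single Demon strategy $\pi_\bv$ that \emph{interleaves} the inductively obtained optimal Demon responses $\pi^1_\bv,\ldots,\pi^m_\bv$ to each possible lock: whenever Angel re-enters $p$ and exits via $(p,\vec 0,q_j)$, Demon switches to simulating $\pi^j_\bv$, each simulation conceptually running in its own $\lfloor F(n)/m\rfloor$-sized box of the counter budget. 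At least one of these $m$ simulations runs to completion inside $\Comp_{\A_L}^{\sigma,\pi_\bv}(p\,F(n))$, so \emph{every} Angel strategy---locking or not---incurs growth at least $\Omega(n^{k_j})$ for some $j$, hence at least $\Omega(n^{\min_j k_j})$. That matches what Angel achieves by simply locking the best $j$, closing the gap. Your sketch does not supply this interleaving argument, and the degree-profile intuition alone does not rule out that Angel could profit from adaptively re-choosing at $p$ after Demon has committed resources elsewhere.
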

Furthermore, we show that for every VASS game $\A$, either $\calL \in \calO(n^{2^{d|Q|}})$ or $\calL \in 2^{\Omega(n)}$. In the first case, the $k$ such that $\calL \in \Theta(n^k)$ can be computed in polynomial space. The same results hold for $\calC[c]$.

In \cite{KLV:VASS-Grzegorczyk}, it has been shown that the problem whether $\calL \in \calO(n)$ is $\NP$-complete, and if $\calL \not\in \calO(n)$, then $\calL \in \Omega(n^2)$. This yields the $\NP$ and $\coNP$ bounds of Theorem~\ref{thm-main-games} for $k=1,2$. Furthermore, it has been shown that the problem whether  $\calL \in \calO(n^2)$ is $\PSPACE$-hard, and this proof can be trivially generalized to obtain all $\PSPACE$ lower bounds of Theorem~\ref{thm-main-games}. For the sake of completeness, we sketch the arguments in Appendix~\ref{app-games-lower}.

% The details are given in \cite{AK:VASS-polynomial-termination-arxiv}.

The key insight behind the proof of Theorem~\ref{thm-main-games} is that player Angel can safely commit to a \emph{simple locking} strategy when minimizing the counter complexity. We start by introducing locking strategies.  

\begin{definition}
	Let $\A$ be a VASS game. We say that a strategy $\sigma$ for player Angel is \emph{locking} if for every computation $p_1\bv_1,\dots,p_m\bv_m$ where $p_m \in Q_A$ and for every $k<m$ such that $p_k = p_m$ we have that $\sigma(p_1\bv_1,\dots,p_k\bv_k) = \sigma(p_1\bv_1,\dots,p_m\bv_m)$.
\end{definition}

In other words, when an angelic control state $p$ is visited for the first time, a locking strategy selects and ``locks'' an outgoing transition of $p$ so that whenever $p$ is revisited, the previously locked transition is taken. Observe that the choice of a ``locked'' transition may depend on the whole history of a computation. 

Since a ``locked'' control state has only one outgoing transition, it can be seen as \emph{demonic}. Hence, as more and more control states are locked along a computation, the  VASS game $\A$ becomes ``more and more demonic''. We capture these changes as a finite acyclic graph $\calG_\A$ called \emph{the locking decomposition of $\A$}. Then, we say that a locking strategy is \emph{simple} if the choice of a locked transition after performing a given history depends only on the finite path in $\calG_\A$ associated to the history. We show that Angel can achieve an asymptotically optimal termination/counter complexity by using only simple locking strategies. Since the height of $\calG_\A$ is polynomial in $|\A|$, the existence of an appropriate simple locking strategy for Angel can be decided by an alternating polynomial-time algorithm. As $\AP = \PSPACE$, this proves the $\PSPACE$ upper bounds of Theorem~\ref{thm-main-games}. Furthermore, our construction identifies the structural parameters of $\calG_\A$ making the polynomial asymptotic analysis of VASS games hard. When these parameters are bounded by fixed constants, the problems of  Theorem~\ref{thm-main-games} are solvable in polynomial time. %Again, the parameters have a clear intuitive interpretation.

\subsection{Locking sets and the locking decomposition of $\A$}

Let $\A$ be a VASS game. A \emph{Demonic decomposition} of $\A$ is a finite directed graph $\D_\A$ defined as follows. Let ${\sim} \subseteq Q \times Q$ be an equivalence where $p \sim q$ iff either $p=q$, or both $p,q$ are demonic and mutually reachable from each other via a finite path leading only through demonic control states. The vertices of $\D_\A$ are the equivalence classes $Q/{\sim}$, and $[p] \rightarrow [q]$ iff $[p] \neq [q]$ and $(p,\bu,q) \in \Tran$ for some $\bu$. For demonic VASS, $\D_\A$ becomes the standard DAG decomposition. For VASS games, $\D_\A$ is \emph{not} necessarily acyclic.

%, and every compuation or path in $\A$ determines the corresponding path in $\D(\A)$ path on Demonic decomposition in the natural way.

A \emph{locking set} of $\A$ is a set of transitions $L \subseteq \Tran$ such that $(p,\bu,q) \in L$ implies $p \in Q_A$, and $(p,\bu,q),(p',\bu',q') \in L$ implies $p \neq p'$. A control state $p$ is \emph{locked} by $L$ if $L$ contains an outgoing transition of~$p$. We use $\cL$ to denote the set of all locking sets of~$\A$. For every $L \in \cL$, let $\A_L$ be the VASS game obtained from $\A$ by ``locking'' the transitions of $L$. That is, each control state $p$ locked by $L$ becomes demonic in $\A_L$, and the only outgoing transition of $p$ in $\A_L$ is the transition $(p,\bu,q) \in L$.

\begin{definition}
	The \emph{locking decomposition of $\A$} is a finite directed graph $\calG_\A$ where the set of vertices and the set of edges of $\calG_\A$ are the least sets $V$ and $\rightarrow$ satisfying the following conditions:
	\begin{itemize}
		\item All elements of $V$ are pairs $([p],L)$ where $L \in \cL$ and $[p]$ is a vertex of $\D_{\A_L}$. When $p$ is demonic/angelic in $\A_L$, we say that $([p],L)$ is demonic/angelic.
		\item $V$ contains all pairs of the form $([p],\emptyset)$. 	
		\item If $([p],L) \in V$ where  $p$ is demonic in $\A_L$ and $[p] \rightarrow [q]$ is an edge of $\D_{\A_L}$, then $([q],L) \in V$ and $([p],L) \rightarrow ([q],L)$.
		\item If $([p],L) \in V$ where $p$ is angelic in $\A_L$, then for every $(p,\bu,q) \in \Tran$ we have that $([q],L') \in V$ and $([p],L) \rightarrow ([q],L')$, where $L' = L \cup \{(p,\bu,q)\}$.	 
	\end{itemize}
\end{definition}

It is easy to see that $\calG_\A$ is \emph{acyclic} and the length of every path in $\calG_\A$ is bounded by~$|Q|+|Q_A|$, where at most $|Q|$ vertices in the path are demonic. Note that every computation of $\A$ obtained by applying a locking strategy determines its associated path in $\calG_\A$ in the natural way. A locking strategy $\sigma$ is \emph{simple} if the choice of a locked transition depends only on the path in $\calG_\A$ associated to the performed history. (i.e., if two histories determine the same path in $\calG_\A$, then the strategy makes the same choice for both histories).

%It is easy to see that $\calG_\A$ is \emph{acyclic} and the length of every path in $\calG_\A$ is bounded by~$|Q|+|Q_A|$, where at most $|Q|$ vertices in the path are demonic. Note that every computation of $\A$ obtained by applying a locking strategy determines its associated path in $\calG_\A$ in the natural way. A locking strategy $\sigma$ is \emph{simple} if for every finite computation $p_1\bv_1,\dots,p_m\bv_m$ obtained by applying $\sigma$ such that $p_m \in Q_A$ and $p_k \neq p_m$  for all $k <m$ we have that  $\sigma(p_1\bv_1,\dots,p_m\bv_m)$ depends only on the path in $\calG_\A$ associated to $p_1\bv_1,\dots,p_m\bv_m$ (i.e., if two histories determine the same path in $\calG_\A$, then the strategy makes the same choice for both histories).

\subsection{Upper bounds}

Let $\A$ be a VASS game with $d$ counters. For every $p \in Q$ and $\bv \in \N^d$, let $\C_\A^p[c,\bv](n)  = \Cval[c](p\bu)$ where $\bu = (n^{\bv(1)},\ldots,n^{\bv(d)})$. We extend this notation to the vectors $\bv \in \N_{\infty}^d$ in the same way as in Section~\ref{sec-upper}, i.e., for a given $\bv \in \N_{\infty}^d$, we put $\C_{\A}^p[c,\bv] = \C_{\A_{\bv}}^p[c,\bv']$. Recall that $\bv'$ is the vector obtained from $\bv$ by changing all $\infty$ components into~$1$, and $\A_{\bv}$ is the VASS obtained from $\A$ by modifying every counter update vector $\bu$ into $\bu'$, where $\bu'(i) = \bu(i)$ if $\bv(i) \neq \infty$, otherwise $\bu'(i) = 0$. 
The main technical step towards obtaining the $\PSPACE$ upper bounds of Theorem~\ref{thm-main-games} is the next proposition.

\begin{restatable}{proposition}{PropPolyGames}
	\label{prop-upper-games}
	Let $\A$ be a VASS game with $d$ counters. Furthermore, let $([p],L)$ be a vertex of $\calG_\A$, $\bv \in \N_{\infty}^d$, and $c_i$ a counter such that $\bv_i \neq \infty$. Then, one of the following two possibilities holds:
	%Then $\C_{\A_L}^p[c_i,\bv]$ is either in $\Theta(n^k)$ for some $k \in \N$, or in $2^{\Omega(n)}$. Furthermore, 
	\begin{itemize}
		\item there is $k \in \N$ such that 
		for every $\bv$-consistent $F$ there exist a simple locking Angel's strategy $\sigma_\bv$ in $\A_L$ and a Demon's strategy $\pi_\bv$ in $\A_L$ such that $\sigma_\bv$ is independent of $F$ and
		\begin{itemize}
			\item for every Demon's strategy $\pi$ in $\A_L$, we have that   $\mmax[c_i](\Comp_{\A_L}^{\sigma_\bv,\pi}(p\,F(n))) \in \calO(n^k)$;
			\item for every Angel's strategy $\sigma$ in $\A_L$, we have that
			$\mmax[c_i](\Comp_{\A_L}^{\sigma,\pi_\bv}(p\,F(n))) \in \Omega(n^k)$.  
		\end{itemize}
		\item %if $\C_{\A_L}^p[c_i,\bv] \in 2^{\Omega(n)}$, then 
		for every  $\bv$-consistent $F$ there is a Demon's strategy $\pi_\bv$ in $\A_L$ such that for every Angel's strategy $\sigma$ in $\A_L$, we have that 
		$\mmax[c_i](\Comp_{\A_L}^{\sigma,\pi_\bv}(p\,F(n))) \in 2^{\Omega(n)}$.
	\end{itemize}
	%		\item There is a Demon's strategy $\pi$ and a function $S \in 2^{\calO(n)}$ such that for every counter $c_i$, the value of $c_i$ in the last configuration of $\gamma_n[\bv,p,F,\sigma_\bv,\pi,S]$ is
	%        \begin{itemize}
	%	        \item $\Theta(n^k)$ if $\C_{\A_L}^p[c_i,\bv] \in \Theta(n^k)$;
	%	        \item $2^{\Omega(n)}$ if $\bv_i = \infty$ or $\C_{\A_L}^p[c_i,\bv] \in 2^{\Omega(n)}$.
	%        \end{itemize}
\end{restatable}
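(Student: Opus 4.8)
\quad
I would argue by induction on the height of $([p],L)$ in the finite acyclic graph $\calG_\A$, i.e.\ on the length of the longest path in $\calG_\A$ starting in $([p],L)$, while maintaining the invariant that every finite component of $\bv$ is at most $2^{j\cdot d}$ for the relevant $j<|Q|$. The invariant survives the recursion because any path in $\calG_\A$ contains at most $|Q|$ demonic vertices, and by Proposition~\ref{prop-poly-init} each such vertex raises the counter exponents only by a bounded factor. For the base case, a sink of $\calG_\A$ is necessarily demonic and its state-class $[p]$ is a bottom SCC of $\D_{\A_L}$; the claim is then exactly Lemma~\ref{lem-simult} applied to $[p]$ viewed as a strongly connected demonic VASS (the first possibility, with $k$ the $i$-th component of $\V_{[p]}(\bv)$, if that component is finite; otherwise the second possibility, since Demon then pumps $c_i$ to $2^{\Omega(n)}$ without ever leaving $[p]$).

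For the inductive step with $([p],L)$ \emph{demonic}, the component $[p]$ is a strongly connected demonic VASS (locked angelic states behave demonically), so Lemma~\ref{lem-simult} and Proposition~\ref{prop-poly-init} supply the vector $\bv^* = \V_{[p]}(\bv)$ together with a Demon strategy that, by a computation of length $2^{\calO(n)}$ staying inside $[p]$, pumps all counters simultaneously to $\bv^*$-consistent values while keeping every $\infty$-counter above one half of its exponential value. If $\bv^*_i = \infty$, or if the induction hypothesis for some successor $([q],L)$ of $([p],L)$ with vector $\bv^*$ yields the exponential alternative, then $([p],L)$ also falls in the second possibility: Demon pumps inside $[p]$, then walks (using strong connectedness of $[p]$) to the relevant exit and switches to the recursive Demon strategy producing $2^{\Omega(n)}$ for that successor, instantiated for the particular $\bv^*$-consistent function realised by Demon's pumping (legitimate, since the recursive Demon strategy may depend on that function). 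Otherwise all successors give finite exponents $k_q$, and we set $k = \max(\bv^*_i,\max_q k_q)$. For the upper bound, Angel makes no choice inside the demonic $[p]$ --- where $c_i$ reaches only $\calO(n^{\bv^*_i})$ by Lemma~\ref{lem-simult} --- and upon entering any successor $([q],L)$ follows the $F$-independent simple locking strategy furnished by the induction hypothesis for $([q],L)$ and $\bv^*$; monotonicity of $\mmax[c_i]$ in the initial counter values, which is legitimate because simple locking strategies ignore counter values, lets this strategy cope with whatever $\bv^*$-consistent-or-smaller values Demon leaves on entry, so the overall bound is $\calO(n^k)$. For the lower bound, Demon either pumps $c_i$ to $\Theta(n^{\bv^*_i})$ inside $[p]$, or pumps all counters to $\bv^*$-consistent values, walks to the exit towards a successor $q^*$ realising $k = k_{q^*}$, and plays the recursive Demon strategy for $q^*$.

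For the inductive step with $([p],L)$ \emph{angelic}, $[p]=\{p\}$ and every computation begins with Angel locking some transition $(p,\bu,q)\in\Tran$ and continuing as a computation of $\A_{L'}$ from $q(F(n)+\bu)$, where $L'=L\cup\{(p,\bu,q)\}$ and $F+\bu$ is again $\bv$-consistent; the induction hypothesis for each successor $([q],L')$ with vector $\bv$ yields an exponent $k_q\in\N\cup\{\infty\}$. If every $k_q=\infty$, Demon uses the adaptive strategy ``observe which transition Angel locks, then play the corresponding recursive Demon strategy'', which puts $([p],L)$ in the second possibility. Otherwise Angel locks a transition minimising $k_q$ over the finite values, sets $k=\min\{k_q : k_q<\infty\}$, and then follows the recursive simple locking strategy for that successor; since the locked-transition choice depends only on the combinatorial data (not on $F$) and the recursive strategies depend only on the associated path in $\calG_\A$, the composite is again a simple locking strategy independent of $F$, and both bounds follow from the induction hypothesis (for the lower bound Demon again reacts to the lock and plays the corresponding recursive strategy, which yields at least $\Omega(n^k)$ because every $k_q$ is $\geq k$ or equals $\infty$).

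The main obstacle is the faithful treatment of the $\infty$-components of $\bv$ throughout the recursion: the recursive calls reason about the abstracted VASS in which the exponentially large counters are frozen, while the statement concerns the genuine $\A_L$, so one must show that every computation produced by the strategies above has length $2^{\calO(n)}$, whence every counter pumped to $2^{\Omega(n)}$ stays $2^{\Omega(n)}$ and never disables a transition. Propagating the exponential length/value bounds of Lemma~\ref{lem-simult} through the locking decomposition --- and observing that freezing those counters can only enlarge Demon's power, so it never hurts Angel's upper bound --- is the technical heart of the argument; the remaining bookkeeping (that the composite Angel strategy is genuinely simple locking with respect to $\calG_\A$, and that Demon's adaptive reactions are well defined) is routine.
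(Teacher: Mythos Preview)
Your base case and demonic inductive step are essentially the paper's argument and are fine. The gap is in the angelic step, specifically in the construction of the Demon strategy $\pi_\bv$.

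You write that Demon ``reacts to the lock'' and plays the recursive strategy for the chosen successor. But the lower bound in the proposition must hold against \emph{every} Angel strategy in $\A_L$, not only against locking ones. At the angelic vertex $([p],L)$ the state $p$ is still genuinely angelic in $\A_L$; a general Angel strategy may select a different transition $(p,\bu_j,q_j)$ each time $p$ is revisited. Your Demon, having committed to the recursive strategy $\pi^{j}_\bv$ of $\A_{L_j}$ after the first visit, is lost as soon as Angel switches to some $q_{j'}\neq q_j$: the strategy $\pi^{j}_\bv$ lives in $\A_{L_j}$, where $p$ is demonic with the single outgoing transition $(p,\bu_j,q_j)$, so it provides no guidance for the history that actually arises. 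The same objection applies verbatim to your exponential case (``if every $k_q=\infty$'').

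The paper closes this gap with a genuinely different Demon strategy. Demon maintains $m$ parallel simulations, one per outgoing transition of $p$, each instantiated for the scaled initial vector $\lfloor F(n)/m\rfloor$. Along the real play Demon tracks the current \emph{mode}~$j$ (the last transition taken out of $p$) and, while in mode $j$, advances the $j$-th simulation by applying $\pi^j_\bv$ to the $j$-th projection of the history. Because the real computation is an interleaving of the $m$ projections and the counters are ``split into $m$ boxes'', at least one projection is simulated to completion; by induction that projection already forces $c_i$ to $\Omega(n^k)$ (respectively $2^{\Omega(n)}$). This box-splitting trick is the missing idea in your proposal, and without it the proof of the angelic case does not go through.
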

%\smallskip

Proposition~\ref{prop-upper-games} is proven by induction on the height of the subgraph rooted by 
$([p],L)$. The case when $([p],L)$ is demonic (which includes the base case when $([p],L)$ is a leaf) follows from the constructions used in the proof of Proposition~\ref{prop-poly-init}. When the vertex $([p],L)$ is angelic, it has immediate successors of the form $([q_i],L_i)$ where $L_i=L \cup \{(p,\bu_i,q_i)\}$. We show that by locking one of the $(p,\bu_i,q_i)$ transitions in $p$, Angel can minimize the growth of $c_i$ in asymptotically the same way as if he used all of these transitions freely when revisiting~$p$. See Appendix \ref{app-games-upper} for details. % We refer to \cite{AK:VASS-polynomial-termination-arxiv} for details. %The choice of this transition depends only on~$\bv$. 

Observe that every computation in $\A$ where Angel uses some simple locking strategy determines the unique corresponding path in $\calG_\A$ (initiated in a vertex of the form $([p],\emptyset)$) in the natural way. Hence, all such computations can be divided into finitely many pairwise disjoint classes according to their corresponding paths in $\calG_\A$. Let $([p_1],L_1),\ldots,([p_k],L_k)$ be a path in $\calG_\A$ where $L_1 = \emptyset$. Consider the corresponding sequence $\bv_0,\ldots,\bv_k$ where $\bv_0 = \vec{1}$ and $\bv_i$ is equal either to $\V_{[p_i]}(\bv_{i-1})$ or to $\bv_{i-1}$, depending on whether $([p_i],L_i)$ is demonic or angelic, respectively. Here, $\V$ is the function defined in Section~\ref{sec-upper} (observe that the component $[p]$ of $\D_{A_L}$ containing $p$ can be seen as a strongly connected demonic VASS after deleting all transitions from/to the states outside $[p]$). The vector $\bv_k$ describes the maximal asymptotic growth of the counters achievable by Demon when Angel uses the simple locking strategy associated to the path. Furthermore, the sequence $\bv_0,\ldots,\bv_k$ is computable in time polynomial in $|\A|$ and all finite components of $\bv_k$ are bounded by $2^{d\cdot |Q|}$ because the total number of all demonic $([p_i],L_i)$ in the path is bounded by $|Q|$ (cf.{} Proposition~\ref{prop-poly-init}).

The problem whether $\calC[c_i] \in \calO(n^k)$ can be decided by an \emph{alternating} polynomial-time algorithm which selects an initial vertex of the form $([p],\emptyset)$ universally, and then constructs a maximal path in $\calG_\A$ from $([p],\emptyset)$ where the successors of demonic/angelic vertices are chosen universally/existentially, respectively. After obtaining a maximal path $([p_1],L_1),\ldots,([p_k],L_k)$, the vector $\bv_k$ is computed in polynomial time, and the algorithm answers yes/no depending on whether $\bv_k(i) \leq k$ or not, respectively. The problem  whether $\calC[c_i] \in \Omega(n^k)$ is decided similarly, but here the initial vertex is chosen existentially, the successors of demonic/angelic vertices are chosen existentially/universally, and the algorithm answers yes/no depending on whether $\bv_k(i) \geq k$ or not, respectively. This proves the $\PSPACE$ upper bounds of Theorem~\ref{thm-main-games}.

Observe that the crucial parameter influencing the computational hardness of the asymptotic analysis for VASS games is the number of maximal paths in $\calG_\A$. If $|Q_A|$ and $\Degree([p],L)$ are bounded by constants, then the above alternating polynomial time algorithms can be simulated by \emph{deterministic} polynomial time algorithms. Thus, we obtain the following:

\begin{theorem}
	\label{thm-games-tractable}
	Let $\Lambda$ be a class of VASS games such that for every $\A \in \Lambda$ we have that $|Q_A|$ and $\Degree([p],L)$, where $([p],L)$ is a leaf of $\calG_\A$, are bounded by a fixed constant depending only on $\Lambda$. 
	Then, the problems whether $\calL_\A \in \calO(n^k)$, $\calL_\A \in \Omega(n^{k})$, $\calL_\A \in \Theta(n^{k})$ for given $\A \in \Lambda$ and $k \in \N$, are solvable in polynomial time (where the $k$ is written in binary). The same results hold also for $\calC[c]$ (for a given counter $c$ of $\A$).
\end{theorem}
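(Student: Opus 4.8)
The plan is to keep the decision procedures developed above — each query is reduced to evaluating an alternating (AND/OR) branching process over the locking decomposition $\calG_\A$ — and to observe that under the two boundedness hypotheses this process has only polynomial size, so it can be evaluated \emph{deterministically} in polynomial time. Since adding a fresh step counter $\textit{sc}$ does not alter the control structure of $\A$ (hence leaves $\calG_\A$, $|Q_A|$, and all leaf degrees unchanged) and $\calL_\A \in \Theta(\calC_{\A}[\textit{sc}])$, the results for $\calL$ follow from those for $\calC[c]$; and $\calC[c_i] \in \Theta(n^k)$ is the conjunction of the $\calO$ and $\Omega$ queries. So it suffices to decide $\calC[c_i] \in \calO(n^k)$ and $\calC[c_i] \in \Omega(n^k)$ in deterministic polynomial time.

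First I would bound the size of $\calG_\A$. Along any path of $\calG_\A$ the locking set grows by exactly one transition at each angelic vertex and stays fixed at each demonic vertex, and no angelic control state is locked twice on a path; hence every locking set $L$ occurring in $\calG_\A$ has $|L| \le |Q_A|$ and is determined by a choice of at most $|Q_A|$ angelic states together with one outgoing transition for each. When $|Q_A|$ is bounded by a constant $c$, the number of such $L$ is at most $(c{+}1)\,2^{c}\,|\Tran|^{c}$, and for each $L$ there are at most $|Q|$ vertices $([p],L)$ (the equivalence classes of $\D_{\A_L}$); thus $\calG_\A$ has polynomially many vertices, with exponent depending only on $c$, and is constructible in polynomial time.

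Next I would bound the number of paths and the path-dependent data. For any vertex $v$ and any leaf $([p],L)$ reachable from $v$, concatenating root-to-$v$ paths with $v$-to-$([p],L)$ paths yields pairwise distinct root-to-leaf paths through $v$ (by acyclicity of $\calG_\A$), so (number of root-to-$v$ paths) $\le \Degree([p],L)$; hence, when all leaf degrees are bounded by a constant $D$, every vertex of $\calG_\A$ is the endpoint of at most $D$ distinct root-to-$v$ paths, and the total number of maximal paths in $\calG_\A$ is at most $D$ times the number of leaves, hence polynomial. For each maximal path $([p_1],L_1),\dots,([p_k],L_k)$ with $L_1 = \emptyset$, the sequence $\bv_0,\dots,\bv_k$ with $\bv_0 = \vec{1}$ and $\bv_i$ equal to $\V_{[p_i]}(\bv_{i-1})$ or $\bv_{i-1}$ according as $([p_i],L_i)$ is demonic or angelic is computable in polynomial time: each $\V_{[p_i]}$ is computable in polynomial time by Proposition~\ref{prop-poly-init} (the component $[p_i]$ being a strongly connected demonic VASS), and every finite component of every $\bv_i$ stays bounded by $2^{d|Q|}$ — hence has polynomially many bits — by the discussion following Proposition~\ref{prop-upper-games}, so comparisons with the binary input $k$ are immediate.

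Finally, the deterministic algorithm evaluates the AND/OR branching directly on $\calG_\A$ by a single bottom-up pass over states $(v,\bw)$, where $v$ is a vertex of $\calG_\A$ and $\bw$ one of the at most $D$ vectors with which $v$ can be reached. At $(v,\bw)$ it forms $\bv := \V_{[p]}(\bw)$ if $v = ([p],L)$ is demonic and $\bv := \bw$ otherwise; if $v$ is a leaf it returns the truth value of $\bv(i) \le k$, if $v$ is demonic it returns the conjunction of the values of its successors taken with $\bv$, and if $v$ is angelic the disjunction; the answer to $\calC[c_i] \in \calO(n^k)$ is then the conjunction of the values of $(([p],\emptyset),\vec{1})$ over all roots $([p],\emptyset)$. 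The query $\calC[c_i] \in \Omega(n^k)$ is decided by the dual pass (roots and angelic vertices giving disjunctions, demonic vertices conjunctions, leaves returning the truth value of $\bv(i) \ge k$). By the previous two paragraphs this pass visits polynomially many states, each processed with polynomially bounded vector arithmetic, so it runs in polynomial time, the degree of the polynomial growing with $c$ and $D$. The only genuinely delicate point is bookkeeping the two hypotheses in tandem: bounding $|Q_A|$ is exactly what keeps $\calG_\A$ (and hence its leaf set) polynomially large, while bounding $\Degree$ on the leaves is what simultaneously keeps the number of maximal paths and the number of distinct vectors per vertex constant; together they collapse the $\AP = \PSPACE$ evaluation of the previous subsection to deterministic polynomial time, the correctness of the characterization via the vectors $\bv_k$ being inherited verbatim from Proposition~\ref{prop-upper-games} and the discussion after it.
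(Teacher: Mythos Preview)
Your approach is correct and is exactly the paper's: the paper merely observes that when $|Q_A|$ and the leaf degrees $\Degree([p],L)$ are bounded by constants, the alternating polynomial-time procedure described just before the theorem can be simulated by a \emph{deterministic} polynomial-time algorithm, and you have filled in the details (polynomial size of $\calG_\A$, the bound on the number of root-to-vertex paths, the bottom-up evaluation over pairs $(v,\bw)$) that the paper leaves implicit.

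One minor slip to correct: in your parenthetical description of the $\Omega(n^k)$ ``dual pass'' you interchanged the roles of demonic and angelic vertices. The correct dual of your $\calO$ pass has roots and \emph{demonic} vertices giving disjunctions (Demon chooses existentially to make the counter large) and \emph{angelic} vertices giving conjunctions (universal over Angel's locking choices), exactly as the paper states when it says that for $\Omega$ ``the successors of demonic/angelic vertices are chosen existentially/universally''. Your phrase ``dual pass'' already indicates the right intent; only the parenthesis needs fixing.
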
   

%In particular, note that for a class of programs where the number of controller decision points and the number of branching constructs that are not embedded within loops are bounded by fixed constants, the corresponding class of VASS game models satisfy the condition of Theorem~\ref{thm-games-tractable} (see also the discussion after Theorem~\ref{thm-demon-tractable}).

\section{Conclusions, future work}
\label{sec-concl}

We presented a precise complexity classification for the problems of polynomial asymptotic complexity of demonic VASS and VASS games. We also identified the structural parameters making these problems computationally hard, and we indicated that these parameters may actually stay reasonably small when dealing with VASS abstractions of computer programs. The actual applicability and scalability of the presented results to the problems of program analysis requires a more detailed study including experimental evaluation.

From a theoretical point of view, a natural question is whether the scope of effective asymptotic analysis can be extended from purely non-deterministic VASS to VASS with probabilistic transitions (i.e., VASS MDPs and VASS stochastic games). These problems are challenging and motivated by their applicability to probabilistic program analysis.
%\input{games}

%%
%% Bibliography
%%

\appendix\newpage
\section{The Existence of Equilibrium Value in VASS Games}
\label{app-determinacy}

In this section, we sketch a proof for the equalities
\begin{eqnarray*}
	\sup_{\pi} \ \Inf_{\sigma} \ \len(\Comp^{\sigma,\pi}(p\bv)) & = & \Inf_{\sigma} \ \sup_{\pi}\   \len(\Comp^{\sigma,\pi}(p\bv))\\[2ex]
	\sup_{\pi} \ \Inf_{\sigma} \ \mmax[c](\Comp^{\sigma,\pi}(p\bv)) & = & \Inf_{\sigma} \ \sup_{\pi}\   \mmax[c](\Comp^{\sigma,\pi}(p\bv))
\end{eqnarray*} 
used in Section~\ref{sec-prelim}. These results appear folklore and plausible. Still, they do not immediately follow from the standard determinacy results for Borel objectives because the reward function is not bounded. Due to the importance of these equalities, we believe they are worth a proof sketch. 

We prove the determinacy result for arbitrary finitely-branching games with countably many vertices. 
Consider a game $G = (V,{\rightarrow},c)$ where $V$ is a countably infinite set of vertices partitioned into the subsets $V_D$ and $V_A$ of demonic and angelic vertices,  ${\rightarrow} \subseteq V \times V$ is a finitely-branching transition relation (i.e., every vertex $v$ has only finitely many immediate successors), and $c : V \rightarrow \N$ is a cost function. Furthermore, we fix an initial vertex $\hat{v}$. For every maximal path $\alpha$ initiated in $\hat{v}$, let $\mmax[c](\alpha)$ be the supremum of the costs of all vertices visited by $\alpha$. We show that 
\begin{equation*}
	\sup_{\pi} \ \Inf_{\sigma} \ \mmax[c](\alpha^{\sigma,\pi}) \quad = \quad \Inf_{\sigma} \ \sup_{\pi}\   \mmax[c](\alpha^{\sigma,\pi})
\end{equation*}
where $\pi$ and $\sigma$ range over the strategies for Demon/Angel in $G$, and $\alpha^{\sigma,\pi}$ is the unique maximal path initiated in $\hat{v}$ determined by $\pi$ and $\sigma$.

Let $\Gamma : \N_{\infty}^V \rightarrow \N_{\infty}^V$ be a (Bellman) operator such that, for a given $f : V \rightarrow \N_\infty$, we have that $\Gamma(f) = g$, where $g : V \rightarrow \N_{\infty}$ is defined as follows:
\[
g(v) = \begin{cases}
	\max \big\{ c(v), \max\{f(u) \mid v \rightarrow u \} \big\}    & \mbox{if } v\in V_D\\
	\max \big\{ c(v), \min\{f(u) \mid v \rightarrow u \} \big\}   & \mbox{if } v\in V_A
\end{cases}
\]
%\[
%g(v) = \begin{cases}
%	\max \big\{ c(v), \max\{c(u) \mid v \rightarrow u \} \big\}    & \mbox{if } v\in V_D\\
%	\max \big\{ c(v), \min\{c(u) \mid v \rightarrow u \} \big\}   & \mbox{if } v\in V_A
%\end{cases}
%\]

Since $\Gamma$ is a continuous operator over the CPO of all functions $V \rightarrow \N_\infty$ with component-wise ordering, there is the least fixed-point $\mu\Gamma = \bigsqcup_{i=0}^{\infty} \Gamma^i(\perp)$ of $\Gamma$, where $\perp$ is the least element (i.e., a function assigning $0$ to every vertex). Consider two memoryless strategies $\pi^*$ and $\sigma^*$ for Demon and Angel such that
\begin{itemize}
	\item for every $v \in V_D$, the strategy $\pi^*$ selects a successor of $v$ with the maximal $\mu\Gamma$ value;
	\item for every $v \in V_A$, the strategy $\sigma^*$ selects a successor of $v$ with the minimal $\mu\Gamma$ value. 
\end{itemize}
Now, it suffices to show that
\begin{equation}
	\mu\Gamma(\hat{v}) \quad \leq \quad \sup_{\pi} \ \Inf_{\sigma} \ \mmax[c](\alpha^{\sigma,\pi}) \quad \leq  \quad \Inf_{\sigma} \ \sup_{\pi}\  \mmax[c](\alpha^{\sigma,\pi})  \quad \leq \quad \mu\Gamma(\hat{v})
	\label{ineq-value}
\end{equation}
The second inequality of~\eqref{ineq-value} holds trivially. For the first inequality, observe that 
\[
\Inf_{\sigma} \ \mmax[c](\alpha^{\sigma,\pi^*}) \quad \leq \quad \sup_{\pi} \ \Inf_{\sigma} \ \mmax[c](\alpha^{\sigma,\pi})
\]
Hence, it suffices to show $\mu\Gamma(\hat{v}) \leq \Inf_{\sigma}  \mmax[c](\alpha^{\sigma,\pi^*})$, which is achieved by demonstrating $\Gamma^i(\perp)(\hat{v}) \leq \Inf_{\sigma}  \mmax[c](\alpha^{\sigma,\pi^*})$ for every $i \in \N$ (by induction on~$i$). The last inequality in~\eqref{ineq-value} is proven similarly (using $\sigma^*$).

\section{Proofs for Section~\ref{sec-demonic}}
\label{app-demonic}

\subsection{A proof of Lemma~\ref{lem-hardness2}}

\LemDPhard*
\begin{proof}
	The structure of $\A_{\varphi,\psi}$ is given by the VASS Program~\ref{alg-Aphipsi}. The program starts by executing the first eight lines of the VASS Program~\ref{alg-Aphi} constructed for $\varphi$ and $k-1$, followed by the first eight lines of the same program constructed for $\psi$ and $k-1$ where all variables representing the counters are fresh and previously unused. According to the proof of Lemma~\ref{lem-hardness1}, the Demon can perform $\Theta(n^{k-1})$ transitions when executing the first two lines of the VASS Program~\ref{alg-Aphipsi} regardless whether the formulae $\varphi,\psi$ are satisfiable or not. Let $s(\A_\varphi)$ and $s(\A_\psi)$ be the counters of $\A_\varphi$ and $\A_\psi$ corresponding to the counter $s_m$ of the VASS Program~\ref{alg-Aphi}. According to Lemma~\ref{lem-hardness1}, we have the following:
	\begin{itemize}
		\item If $\varphi$ is satisfiable, then the counter $s(\A_\varphi)$ can be pumped to $\Theta(n^{k-1})$; otherwise, it can be pumped only to $\Theta(n)$.
		\item If $\psi$ is satisfiable, then the counter $s(\A_\psi)$ can be pumped to $\Theta(n^{k-1})$; otherwise, it can be pumped only to $\Theta(n)$.
	\end{itemize}
	The instructions at lines~\ref{line-multi} and~\ref{line-square} ensure that $\calL_{\A_{\varphi,\psi}} \in \Theta(n^k)$ iff the counter $s(\A_\varphi)$ can be pumped to $\Theta(n^{k-1})$ and the counter $s(\A_\psi)$ can be pumped only to $\Theta(n)$. More precisely, the instructions at line~\ref{line-multi} multiply the values of these two counters. Hence, if both values are $\Theta(n^{k-1})$, the multiplication gadget executes $\Theta(n^{2k-2})$ transitions, which is beyond $\Theta(n^{k})$. If one of the counter values is $\Theta(n)$ and the other is $\Theta(n^{k-1})$, the multiplication takes $\Theta(n^{k})$ transitions. Finally, if both values are $\Theta(n)$, the multiplication takes $\Theta(n^{2})$ transitions. The instructions at line~\ref{line-square} compute the square of the value stored in $s(\A_\psi)$, which takes either $\Theta(n^{2k-2})$ or $\Theta(n^{2})$ transitions, depending on whether the value of $s(\A_\psi)$ is $\Theta(n^{k-1})$ or $\Theta(n)$, respectively. So, the only case when these instructions produce a sequence of transitions of length $\Theta(n^{k})$ is when the value of $s(\A_\varphi)$ is $\Theta(n^{k-1})$ and the value of  $s(\A_\psi)$ is $\Theta(n)$.
\end{proof}

\SetAlgorithmName{VASS Program}{vassprog}{VASS Programs}
\SetAlFnt{\small}
\SetInd{0ex}{1.55ex}
\begin{algorithm}[tb]
	\SetAlgoLined
	\DontPrintSemicolon
	\SetKwInOut{Parameter}{parameter}\SetKwInOut{Input}{input}\SetKwInOut{Output}{output}
	\SetKwData{C}{C}\SetKwData{D}{D}\SetKwData{MX}{M}\SetKwData{f}{f}
	\SetKw{Choose}{choose:}	
	\SetKw{Or}{or}
	\SetKw{And}{and}
	%	\Input{A patrolling graph $G$, a regular strategy $\eta$}
	%	\Output{A regular strategy $\eta'$}
	\BlankLine
	% \tcp{shift the density function back}
	%	compute $\nabla \AU[v,m,t]$ for all $(v,m) \in \source(E_\sigma)$\;
	%   $i  \texttt{ = } 0$; \ \  $\sigma' \texttt{ = } \eta$ \; 
	lines~\ref{dk-init}--\ref{line-end-loop} of $\A_\varphi$ constructed for $k-1$; \;
	lines~\ref{dk-init}--\ref{line-end-loop} of $\A_\psi$ constructed for $k-1$; \quad/* all counters are fresh */ \;
	$a \leftarrow s(\A_\varphi);$ \ \ $b \leftarrow s(\A_\psi);$ \ \ $e \leftarrow a*b$; \quad\quad/* $a,b,e$ are fresh counters */\; \label{line-multi}
	$c \leftarrow s(\A_\psi);$ \ \ $d \leftarrow s(\A_\psi);$ \ \ $f \leftarrow c*d$; \ \, \quad/* $c,d,f$ are fresh counters */ \label{line-square}
	\caption{$\A_{\varphi,\psi}$}
	\label{alg-Aphipsi}
\end{algorithm}

\subsection{A proof of Lemma~\ref{lem-hardness3}}

\LemDPhardcounter*
\begin{proof}
	The structure of $\A_{\varphi,\psi}$ is given by the VASS Program~\ref{alg-AphipsiC}. The program starts by executing the first eight lines of the VASS Program~\ref{alg-Aphi} constructed for $\varphi$ and $k+1$, followed by the first eight lines of the same program constructed for $\psi$ and $k+1$ where all variables representing the counters are fresh and previously unused. Let $s(\A_\varphi)$ and $s(\A_\psi)$ be the counters of $\A_\varphi$ and $\A_\psi$ corresponding to the counter $s_m$ of the VASS Program~\ref{alg-Aphi}. According to Lemma~\ref{lem-hardness1}, we have the following:
	\begin{itemize}
		\item If $\varphi$ is satisfiable, then the counter $s(\A_\varphi)$ can be pumped to $\Theta(n^{k+1})$; otherwise, it can be pumped only to $\Theta(n)$.
		\item If $\psi$ is satisfiable, then the counter $s(\A_\psi)$ can be pumped to $\Theta(n^{k+1})$; otherwise, it can be pumped only to $\Theta(n)$.
	\end{itemize}
	The instructions at line~\ref{dk-init2} pump \(d_{k-1} \) to \(\Theta(n^{k-1}) \).	Thus the instructions at line~\ref{line-multiC} ensure if \(\varphi \) is unsatisfiable then \(c \) can be pumped at most to \(\Theta(n) \), and if \(\varphi \) is satisfiable then \(c \) can be pumped at most to \(\Theta(n^{k+1}) \) or \(\Theta(n^{k}) \) depending on whether \( \psi\) is satisfiable or unsatisfiable, respectively. Therefore $\calC_{\A_{\varphi,\psi}}[c] \in \Theta(n^k)$ iff \(\varphi \) is satisfiable and \(\psi \) is unsatisfiable.
\end{proof}

\SetAlgorithmName{VASS Program}{vassprog}{VASS Programs}
\SetAlFnt{\small}
\SetInd{0ex}{1.55ex}
\begin{algorithm}[tb]
	\SetAlgoLined
	\DontPrintSemicolon
	\SetKwInOut{Parameter}{parameter}\SetKwInOut{Input}{input}\SetKwInOut{Output}{output}
	\SetKwData{C}{C}\SetKwData{D}{D}\SetKwData{MX}{M}\SetKwData{f}{f}
	\SetKw{Choose}{choose:}	
	\SetKw{Or}{or}
	\SetKw{And}{and}
	%	\Input{A patrolling graph $G$, a regular strategy $\eta$}
	%	\Output{A regular strategy $\eta'$}
	\BlankLine
	% \tcp{shift the density function back}
	%	compute $\nabla \AU[v,m,t]$ for all $(v,m) \in \source(E_\sigma)$\;
	%   $i  \texttt{ = } 0$; \ \  $\sigma' \texttt{ = } \eta$ \; 
	lines~\ref{dk-init}--\ref{line-end-loop} of $\A_\varphi$ constructed for $k+1$; \;
	lines~\ref{dk-init}--\ref{line-end-loop} of $\A_\psi$ constructed for $k+1$; \quad/* all counters are fresh */ \;
	$d_2 \leftarrow d_1 * e_1;\ \ d_3 \leftarrow d_2 * e_2; \ \cdots \ ; \ \ d_{k-1} \leftarrow d_{k-2} * e_{k-2};$\label{dk-init2} \quad/* all counters are fresh */ \;
	$c \leftarrow s(\A_\varphi);$ \ \ $b \leftarrow s(\A_\psi) * d_{k-1};$ \ \ $c \leftarrow min(a,b)$; \quad/* $a,b,c$ are fresh counters */\; \label{line-multiC}
	\caption{$\A_{\varphi,\psi}$}
	\label{alg-AphipsiC}
\end{algorithm}

\subsection{A proof of Proposition~\ref{prop-poly-init}}

In this section we give a full proof of Proposition~\ref{prop-poly-init}. We start by recalling the special variant proven in \cite{Zuleger:VASS-polynomial}.

\begin{proposition}[see \cite{Zuleger:VASS-polynomial}]
	\label{prop-zuleger}
	Let $\A$ be a strongly connected demonic VASS with $d$ counters. For every counter $c$, we have that either $\C[c] \in \Theta(n^k)$ for some $1 \leq k \leq 2^d$, or $\C[c] \in \Omega(2^n)$. It is decidable in polynomial time which of the two possibilities holds. In the first case, the $k$ is computable in polynomial time. 
\end{proposition}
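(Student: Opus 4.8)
The plan is to prove Proposition~\ref{prop-zuleger} by revisiting the characterization of counter growth for a strongly connected demonic VASS developed in~\cite{Zuleger:VASS-polynomial}, which is what we are effectively asked to re-derive. Recall that in a strongly connected demonic VASS, the Demon controls everything, so $\Cval[c](p\vec{n})$ is simply the supremum of $c$ over all computations from $p\vec{n}$. The first step is to set up the \emph{ranking / iteration-structure} machinery: one analyzes how large the counters can become by repeatedly traversing cycles of the VASS. A counter can be pumped if there is a cycle whose effect on it is positive (a \emph{self-pumping} cycle), but only if the other counters needed to sustain that cycle are themselves available in sufficient quantity — which leads to a hierarchical dependency analysis among counters.

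The key steps, in order, are: (i) Classify counters into \emph{levels}. Level-$0$ counters are those bounded by $\calO(n)$ (equivalently, no reachable positive cycle can pump them beyond a linear amount). Having fixed the counters of level $<\ell$, a counter is of level $\ell$ if it can be pumped to $\Theta(n^{\ell+1})$ (or, at the top, to $\Omega(2^n)$) using cycles that consume only already-large counters. (ii) Show this classification is computable in polynomial time: checking whether a positive cycle for a given counter exists, given a set of ``free'' (already exponentially-large) counters and ``linearly-available'' counters, is a reachability/cycle-feasibility question expressible as a (integer) linear-programming feasibility test over the cycle space of the strongly connected graph, solvable in polynomial time; iterate $d$ times to stabilize the level assignment. (iii) Prove the \emph{dichotomy and the $2^d$ bound}: each time we add a new nonempty batch of counters to the ``large'' set, the maximal achievable polynomial degree of the newly-added counters can at most double relative to the previous batch (a cycle that multiplies availability of a degree-$k$ counter by a degree-$k$ counter yields degree $2k$; chaining $d$ such doublings starting from degree $1$ gives $2^d$), and once a counter's degree would exceed what any polynomial bound allows — i.e. once feedback between two already-unbounded counters arises — it is pumpable to $\Omega(2^n)$. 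Since there are only $d$ counters, there are at most $d$ batches, hence degree at most $2^d$ or else exponential. (iv) Prove matching lower bounds: from the witnessing cycles one explicitly constructs a computation realizing $\Theta(n^k)$ (respectively $\Omega(2^n)$), by executing the nested pumping cycles in the right order with the right iteration counts (roughly $n$ iterations of the outermost pump, each triggering $n^{k-1}$ of the next, etc.), and an upper-bound argument (a ranking function, or an induction on the level decomposition) showing no computation does better.

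The main obstacle I expect is step~(iii)/(iv): the matching upper bound. It is intuitively clear that the level decomposition captures the best the Demon can do, but turning ``the Demon cannot beat the level bound'' into a rigorous statement requires a ranking-function-style argument — essentially exhibiting, for each level $\ell$, a function of the configuration (polynomial of degree $\ell{+}1$ in the level-$\ell$ counters plus lower-order terms) that cannot increase along any transition, and arguing that the absence of pumping cycles at a given level forces the corresponding counters to be dominated by such a function on every reachable configuration. Handling the interaction with counters that \emph{do} grow polynomially (they may legitimately feed the bounded counters, but only polynomially often) and the bookkeeping of how the $\calO(n)$ initialization propagates through $d$ levels of nested pumping is where the technical weight lies. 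The exponential-blowup case requires separate care: one must verify that genuine two-way feedback between unbounded counters indeed yields $\Omega(2^n)$ and not merely super-polynomial-but-sub-exponential growth, which follows because the strong connectivity lets the Demon alternate the two mutually-pumping cycles, doubling (a lower bound on) the relevant counter value a linear number of times. Everything else — the polynomial-time cycle feasibility tests, and the fact that iterating the level assignment stabilizes in at most $d$ rounds — is routine once the structural characterization is in place.
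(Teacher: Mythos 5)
The first thing to note is that the paper does not prove this proposition at all: it is imported verbatim from \cite{Zuleger:VASS-polynomial} and used strictly as a black box (the paper's own contribution, Proposition~\ref{prop-poly-init}, is a reduction \emph{to} this result via the pumping VASS $\B$ and the state $\textit{in}/\textit{out}$ construction). So what you have written is an attempt to re-derive Zuleger's theorem, not the argument the paper relies on, and it should be judged as such.

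As a proof it has a genuine gap, and you have in fact named it yourself: the matching upper bound. Your steps (i)--(ii) and the lower-bound constructions in (iv) are the comparatively easy half — classifying counters by iterated cycle-feasibility tests (Farkas/LP over the cycle space, in the spirit of Kosaraju--Sullivan and \cite{Leroux:Polynomial-termination-VASS}) and exhibiting nested pumping computations realizing $\Omega(n^k)$ or $\Omega(2^n)$. The heart of \cite{Zuleger:VASS-polynomial} is the converse: that \emph{no} Demon strategy can exceed the degree assigned by the decomposition, that the achievable degrees are exactly the values $2^i$ arising from the doubling mechanism (whence $k \leq 2^d$), and that any counter escaping every polynomial bound is in fact $2^{\Theta(n)}$ rather than of some intermediate growth. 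This requires constructing, at each level, explicit quasi-ranking functions from LP duality and carrying out a delicate induction bounding how often polynomially-growing counters can feed the ones being bounded; your sketch asserts that such a ranking argument should exist but does not supply it, and the heuristic ``each batch at most doubles the degree'' is stated without the mechanism that makes it an upper bound rather than just a lower-bound pattern. Without that half, the dichotomy, the $2^d$ bound, and the correctness of the polynomial-time computation of $k$ are all unestablished. If your goal is to support the present paper, the economical route is the one the paper takes: cite \cite{Zuleger:VASS-polynomial} for the $\bv = \vec{1}$ case and prove only the extension of Proposition~\ref{prop-poly-init}; if your goal is genuinely to reprove Zuleger's theorem, the ranking-function upper bound is the part that must be written out in full.
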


\noindent
Our proof of Proposition~\ref{prop-poly-init} is obtained by modifying a given demonic VASS $\A$ into another demonic VASS $\hat{\A}$ and applying Proposition~\ref{prop-zuleger} to $\hat{\A}$. 
\PropPoly*
\begin{proof}
	Let $c_1,\ldots,c_d$ be the counters of $\A$. We start by constructing a VASS $\B$ that ``pumps'' every $c_i$ to $n^{\bv(i)}$ from an initial configuration with all counters set to $n$ (see VASS Program~\ref{alg-pumping}).  Since the components of $\bv$ can be exponential in $d$, the 
	trivial technique used in line~\ref{dk-init} of VASS Program~\ref{alg-Aphi} is not applicable because it requires $\Omega(2^{j\cdot d})$ new counters and control states. Hence, $\B$ needs to be constructed more carefully using repeated squaring. Let $\max_\bv$ be the maximal component of $\bv$, and let $\ell = \lfloor \log(\max_\bv) \rfloor$. Then, for every $1 \leq i \leq d$, there is a vector $\vec{t}_i \in \{0,1\}^{\ell+1}$ computable in time polynomial in $|\A|$ such that $\bv(i) = \vec{t}_i * (2^0,\ldots,2^{\ell})$. Let $f_i$ be the least index $k$ such that $\vec{t}_i(k)=1$. At line~\ref{line-mult}, the counter $m_j$ is pumped to $\Theta(n^{2^j})$.
	Note that when maximizing the value of $m_j$, the corresponding gadget (see Fig.~\ref{fig-gadgets}) leaves $0$ either in $m_{j-1}$ or in $\alpha$, and the value of both counters is at most~$1$. The nested loop at lines~\ref{line-pump-start}--\ref{line-pump-end} increase the counter $s_i$ by $n^{\vec{t}_i * (2^0,\ldots,2^{\ell})}$ for every $i = 1,\ldots,d$ using the $m_j$ counters. Note that every counter is initialized to $n$, so no computation is needed for $j=0$. Also note that the \textbf{if} statements at  
	lines~\ref{line-if1}--\ref{line-if3} are purely symbolic---the conditions $\vec{t}_i(j) = 1$, $f_i=j$ and $f_i\neq j$ are constants denoting either true or false, and the instructions following \textbf{then} are either included into the code of $\B$ or not. At lines~\ref{line-copy-start}--\ref{line-copy-end}, the values stored in $s_1,\ldots,s_d$ are added to the counters $c_1,\ldots,c_d$. The gadget for the instruction $x \leftarrow [y]$ ``reads'' the value of $y$ destructively, i.e., the counter $y$ is \emph{not} restored to its original value (cf.\ the gadget for  $x \leftarrow y$ of Fig.~\ref{fig-gadgets}).

	Note that the VASS $\B$ has two distinguished control states $\textit{in}$ and $\textit{out}$, and uses $\kappa$ auxiliary counters different from $c_1,\ldots,c_d$. Hence, the total number of counters of $\B$ is $d+\kappa$. The transition update vectors of $\B$ are constructed so that the first $d$ components specify the updates for $c_1,\ldots,c_d$. An important observation is that even if we extended $\B$ with a transition $(\textit{out},\vec{0},\textit{in})$ allowing for ``restarting'' the computation of $\B$, this extra transition would be of no use when maximizing the values of $c_1,\ldots,c_\ell$. The best the Demon could do is to maximize the value of all $m_j$, then maximize all $s_i$, and then empty $s_i$ by transferring its content to~$c_i$. If the Demon violated from this scenario, leaving some positive values in the auxiliary counters of $\B$ and then ``restarting'' the computation of $\B$ using the transition $(\textit{out},\vec{0},\textit{in})$, the resulting value of $c_1,\ldots,c_\ell$ would be only smaller.
	
	Now we construct a VASS $\U$ with $d+\kappa$ counters by taking the union of $\A$ and $\B$, where the transition update vectors of $\A$ are extended so that they do not modify the extra $\kappa$ counters of $\B$. Furthermore, for every control state $p$ of $\A$, we add to $\U$ the transitions $(p,-\vec{m},\textit{in})$ and $(\textit{out},-\vec{m},p)$. Here, $m = |Q|\cdot M$, where $M$ is the maximal absolute value of an update vector component in $\A$. Observe that $\U$ is strongly connected and its size is polynomial in the size of $\A$. 
	
	We show that for every $c_i$, where $i \in \{1,\ldots,d\}$, the asymptotic growth of $\C[c_i]$ in $\U$ is the same as the asymptotic growth of $\C[c_i,\bv]$ in $\A$. Hence, it suffices to apply  Proposition~\ref{prop-zuleger} to $\U$.
	
	Clearly, $\C_{\A}[c_i,\bv] \in \calO(\C_{\U}[c_i])$ because $\U$ can use the sub-VASS $\B$ to pump every $c_j$ to $\Theta(n^{\bv(j)})$ and then simulate a computation of $\A$. It remains to show $\C_{\U}[c_i] \in \calO(\C_{\A}[c_i,\bv])$. To see this, it suffices to verify that the best strategy for the Demon who aims at maximizing the value of $c_i$ in a computation of $\U$ initiated in a configuration with all counters set to $n$ is to start in the $\textit{in}$ state of the sub-VASS~$\B$ and pump all $c_1,\ldots,c_d$ to their maximal values, and then continue by simulating $\A$ without ever returning to the $\textit{in}$ state of the sub-VASS~$\B$. Such a computation can be ``simulated'' by $\A$ from an initial configuration where every $c_j$ is set to $n^{\bv(j)}$, which proves our claim.
	
	Obviously, an extra ``detour'' to $\B$ is of no use if the counters $c_1,\ldots,c_d$ have previously been pumped to their maximal values. As noted above, the Demon cannot gain anything by deviating from the scenario when $c_1,\ldots,c_d$ are pumped just once at the beginning because the total increase in the values of $c_1,\ldots,c_d$ brought by running the sub-VASS $\B$ could be only smaller. So, the only reason why Demon might still wish to re-visit $\B$ by entering \textit{in} from a control state $p$ of the sub-VASS $\A$ is a ``shorter'' path to some other control state $q$ of the sub-VASS $\A$ passing through the sub-VASS~$\B$. However, since $\A$ is strongly connected, the Demon can always move from $p$ to $q$ via the control states of $\A$ and the total decrease in every counter will be only smaller than the decrease caused by the transitions  $(p,-\vec{m},\textit{in})$ and $(\textit{out},-\vec{m},q)$. To sum up, the optimal strategy for Demon is to use the sub-VASS $\B$ only at the beginning to pump all $c_1,\ldots,c_d$ to their maximal values, and then schedule an appropriate computation of the sub-VASS~$\A$. 
\end{proof}

\SetAlgorithmName{VASS Program}{vassprog}{VASS Programs}
\SetAlFnt{\small}
\SetInd{0ex}{3ex}
\begin{algorithm}[tb]
	\SetAlgoLined
	\DontPrintSemicolon
	\SetKwInOut{Parameter}{parameter}\SetKwInOut{Input}{input}\SetKwInOut{Output}{output}
	\SetKwData{C}{C}\SetKwData{D}{D}\SetKwData{MX}{M}\SetKwData{f}{f}
	\SetKw{Choose}{choose:}	
	\SetKw{Where}{where}	
	\SetKw{Or}{or}
	\SetKw{And}{and}
	%	\Input{A patrolling graph $G$, a regular strategy $\eta$}
	%	\Output{A regular strategy $\eta'$}
	\BlankLine
	% \tcp{shift the density function back}
	%	compute $\nabla \AU[v,m,t]$ for all $(v,m) \in \source(E_\sigma)$\;
	%   $i  \texttt{ = } 0$; \ \  $\sigma' \texttt{ = } \eta$ \; 
	\ForEach{$j = 1,\ldots,\ell$\label{compute-m-start}}{
		$m_j \leftarrow m_{j-1} * m_{j-1};$ \;\label{line-mult}
		\ForEach{$i = 1,\ldots,d$\label{line-pump-start}}{
			\If{$\vec{t}_i(j) = 1$\label{line-if1}}{
				\lIf{$f_i = j$\label{line-if2}}{
					$s_i \leftarrow m_j;$
				}
				\lIf{$f_i \neq j$\label{line-if3}}{
					$a(i,j) \leftarrow s_i;\ \ s_i 	\leftarrow m_j * a(i,j);$
				}
			}
			%        	\lIf{$\vec{t}_i(j) = 1$\label{line-if}}{
			%        		$a(i,j) \leftarrow s_j;\ \ s_j \leftarrow m_j * a(i,j);$
			%        	}
		}\label{line-pump-end}
	}	 
	
	\ForEach{$i = 1,\ldots,d$\label{line-copy-start}}{
		$c_i \leftarrow [s_i];$
		\smash{\hspace*{25em}%
			\begin{tikzpicture}[scale=1, x=2cm, y=2cm, font=\footnotesize]
				\node[stoch] (n1) at (0,0) {in};
				\node[stoch] (n2) at (1,0) {out};
				\draw [tran,loop above] (n1) to node[above] {\parbox{3em}{\centering -- $y$\\+ $x$}} (n1); 
				\draw [tran] (n1) -- (n2);
				\node at (.5,-.4) {$x \leftarrow [y]$};
			\end{tikzpicture}	  
		}	
	}\label{line-copy-end}			
	\caption{The ``pumping'' VASS $\B$}
	\label{alg-pumping}
\end{algorithm}

\subsection{A Proof of Lemma~\ref{lem-simult}}
\label{sec-simult-proof}

First, let us restate Lemma~\ref{lem-simult}.

\LemSimult*
\bigskip 

Let $\A$ be a strongly connected VASS. For every cycle $\gamma$ of $\A$, let $\eff{\gamma}$ be the sum of the counter update vectors of the transitions executed along $\gamma$. 

First we will need the following result of \cite{Leroux:Polynomial-termination-VASS}. Let $\calE$ be the set of all counters $c$ such that $\C_{\A}[c] \in 2^{\Omega(n)}$. Then there exists an \emph{iteration scheme for $\calE$}, i.e., a sequence of cycles $\gamma_1,\ldots,\gamma_k$ such that every counter strictly decremented by some $\eff{\gamma_i}$ is strictly incremented by  $\sum_{i=1}^k \eff{\gamma_i}$. Furthermore, $\calE$ is precisely the set of all counters strictly incremented by $\sum_{i=1}^k \eff{\gamma_i}$. In \cite{Leroux:Polynomial-termination-VASS}, it is shown that an iteration scheme can be ``iterated exponentially many times'', producing a computation of length $2^{\calO(n)}$ such that all counters of $\calE$ are simultaneously pumped to the value $2^{\Omega(n)}$ (see Lemma~10 in \cite{Leroux:Polynomial-termination-VASS}).

Second, every counter $c$ such that $\C_{\A}[c] \in \calO(n^k)$ can be pumped to the value $\Omega(n^k)$ by a computation of polynomial length (using the results of \cite{Zuleger:VASS-polynomial}, it is easy to show that the length of the computation can be bounded by $\calO(n^{k+1})$; for our purposes, even a singly exponential bound is sufficient, so there is no need to go into the details). 

%A proof of Lemma~\ref{lem-simult} uses the following result of \cite{Leroux:Polynomial-termination-VASS}. Let $\calE$ be the set of all counters $c$ such that $\C_{\A}[c] \in 2^{\Omega(n)}$. Then there exists an \emph{iteration scheme for $\calE$}, i.e., a sequence of cycles $\gamma_1,\ldots,\gamma_k$ such that every counter strictly decremented by some $\eff{\gamma_i}$ is strictly incremented by  $\sum_{i=1}^k \eff{\gamma_i}$. Furthermore, $\calE$ is precisely the set of all counters strictly incremented by $\sum_{i=1}^k \eff{\gamma_i}$. In \cite{Leroux:Polynomial-termination-VASS}, it is shown that an iteration scheme can be ``iterated exponentially many times'', producing a computation of length $2^{\calO(n)}$ such that all counters of $\calE$ are simultaneously pumped to the value $2^{\Omega(n)}$ (see Lemma~10 in \cite{Leroux:Polynomial-termination-VASS}). Furthermore, every counter $c$ such that $\C_{\A}[c] \in \calO(n^k)$ can be pumped to the value $\Omega(n^k)$ by a computation of polynomial length (using the results of \cite{Zuleger:VASS-polynomial}, it is easy to show that the length of the computation can be bounded by $\calO(n^{k+1})$; for our purposes, even a singly exponential bound is sufficient, so there is no need to go into the details). 

The above mentioned results assume that the vector of initial counter values is $\vec{n}$. Now let $\bv \in \N_{\infty}^d$.
%and let $F : \N \rightarrow \N^d$ be a $\bv$-consistent function.
Consider a VASS $\hat{\A}_{\bv}$ obtained by first modifying $\A$ into $\A_\bv$ (i.e., for every $c_i$ such that $\bv_i = \infty$, every counter update vector $\bu$ of $\A$ is modified so that $\bu_i = 0$, see Section~\ref{sec-upper}), and then extending $\A_\bv$ into $\hat{\A}_{\bv}$ by the construction of Proposition~\ref{prop-poly-init}. That is, $\hat{\A}_{\bv}$ is obtained from $\A_\bv$ by adding the gadget   pumping every counter $c_i$ such the $\bv_i = k <\infty$ to $\Theta(n^k)$ from an initial configuration where all counters are set to~$n$. Now, the above results are applicable to $\hat{\A}_{\bv}$. That is, for every counter $c_i$ such that $\bv_i \neq \infty$, there exist 

\begin{itemize}
	\item a control state $p_i \in Q$,
	\item an infinite family  $\tau^i_1,\tau^i_2,\ldots$ of strategies for player Demon in $\hat{\A}_{\bv}$,
	\item a function $S_i : \N \rightarrow \N$ such that $S_i \in 2^{\calO(n)}$, 
\end{itemize}

\noindent
such that for every $n \in \N$, the value of $c_i$ in the configuration reached by applying $\tau^i_n$ 
from $p_i \vec{n}$ until a maximal computation is obtained or $S_i(n)$ transitions are executed, is 

\begin{itemize}
	\item $\Theta(n^k)$ if $\C_{\hat{A}_\bv}[c] \in \Theta(n^k)$,
	\item $2^{\Omega(n)}$ if $\C_{\hat{A}_\bv}[c] \in 2^{\Omega(n)}$.
\end{itemize}

By applying the observations of Proposition~\ref{prop-poly-init}, we can safely assume that $p_i = \textit{in}$ where $\textit{in}$ is the distinguished starting state of the pumping gadget, and the above computations never revisit the control state $\textit{in}$ after passing through the control state $\textit{out}$ of the gadget. Let $P_i(n)$ be the length of the computation from configuration $\textit{in}\thinspace\vec{n}$ under strategy $\tau^i_n$ until the control state $\textit{out}$ is reached (and $\infty$ if $\textit{out}$ is never reached), and let $T_i : \N \rightarrow \N$ be such that $T_i(n)=\max (0,S_i(n)-P_i(n)-1)$. Note that $T_i\in 2^{\calO(n)}$.

%By applying the observations of Proposition~\ref{prop-poly-init}, we can safely assume that $p_i = \textit{in}$ where $\textit{in}$ is the distinguished starting state of the pumping gadget, and the above computations never revisit the control state $\textit{in}$ after passing through the control state $\textit{out}$ of the gadget.

For every $i \in \{1,\ldots,d\}$, let $\Gad_i(n)$ be the maximal value of $c_i$ achievable by running the pumping gadget of $\hat{A}_\bv$ from the control state $\textit{in}$ where all counters are initialized to $n$ (cf.~VASS~Program~\ref{alg-pumping}). Recall that $\Gad_i(n) \in \Theta(n^{\bv_i})$. Also let $\Gad(n)=(\Gad_1(n),\dots,\Gad_d(n))$. 

Let $F$ be a $\bv$-consistent function. We define a function $R : \N \rightarrow \N$ such that $R(n)$ is the largest $m$ satisfying the following condition:

\begin{itemize}
	\item $\lfloor F_i(n)/d \rfloor \geq \Gad_i(m) + |Q| \cdot M$ for every $i \in \{1,\ldots,d\}$, where $Q$ are the control states of $\A$ and $M$ is the maximal absolute value of an update vector component of $\A$;
	%\item $S_i(m)\cdot M \leq \lfloor F_i(n)/(d{+}1) \rfloor$.
\end{itemize}

\noindent
Observe that $R \in \Theta(n)$. For every $n \in \N$ and $i \in \{1,\ldots,d\}$ such that $\bv_i \neq \infty$, consider a Demon's strategy $\varrho^i_n$ in $\A$ defined as follows: Let $q$ be the control state visited by the strategy $\tau^i_{R(n)}$ right after leaving the control state $\textit{out}$ of the pumping gadget of $\hat{A}_\bv$. The strategy $\varrho^i_n$ takes the shortest path to $q$ and then starts to behave exactly like $\tau^i_{R(n)}$. Note that $\varrho^i_n$ can faithfully simulate $\tau^i_{R(n)}$ either until $\tau^i_{R(n)}$ produces a maximal computation (for initial configuration $q\thinspace\Gad(R(n))$) or for at least $T_i(R(n))$ steps, which is sufficient to pump the counter $c_i$ to $\Theta(n^k)$ if $\C_\bv[c_i] \in \Theta(n^k)$, or to $2^{\Omega(n)}$ if $\C_\bv[c_i] \in 2^{\Omega(n)}$. The strategy $\pi_n$ is obtained by ``concatenating'' all $\varrho^i_n$ (for all counters $c_i$ such that $\bv_i \neq \infty$), where $\varrho^i_n$ is executed either for $T_i(R(n))$ steps since initiating the simulation of $\tau^i_{R(n)}$ or until $\tau^i_{R(n)}$ produces a maximal computation. Note that each simulation of   $\tau^i_{R(n)}$ "assumes" that the counters are set to $\Gad(R(n))$, and therefore the most any $\varrho^i_n$ can lower any counter $c_j$ is by $\Gad_j(R(n))+|Q| \cdot M$. Therefore we can assume that $\pi_n$ "splits the counters into $d$ boxes" of size $\lfloor F(n)/d \rfloor$, and performs each simulation in it's own box. Hence, no simulation can "undo" the effect of the previous simulations. Also let $S(n)$ be the maximal number of steps $\pi_n$ takes until the last simulation $\varrho^i_n$ is finished from initial configuration $pF(n)$ for any $p\in Q$. The total length of the whole simulation is bounded by $S(n) \leq \sum_{i, \bv_i \neq \infty} \left( |Q| + T_i(R(n) \right)$. Observe that $S(n) \in 2^{\calO(n)}$, and for every counter $c_i$ such that $\bv_i = \infty$ we have that the value of $c_i$ after performing $\pi_n$ in the above indicated way is at least  $\lfloor F_i(n)/d\rfloor$, which is $2^{\Omega(n)}$, since the $i$-th box "remains untouched".

%Observe that $R \in \Theta(n)$. For every $n \in \N$ and $i \in \{1,\ldots,d\}$ such that $\bv_i \neq \infty$, consider a Demon's strategy $\varrho^i_n$ in $\A$ defined as follows: Let $q$ be the control state visited by the strategy $\tau^i_{R(n)}$ right after leaving the control state $\textit{out}$ of the pumping gadget of $\hat{A}_\bv$. The strategy $\varrho^i_n$ takes the shortest path to $q$ and then starts to behave exactly like $\tau^i_{R(n)}$. Note that $\varrho^i_n$ can faithfully simulate $\tau^i_{R(n)}$ for at least $S_i(R(n))$ steps, which is sufficient to pump the counter $c_i$ to $\Theta(n^k)$ if $\C_\bv[c_i] \in \Theta(n^k)$, or to $2^{\Omega(n)}$ if $\C_\bv[c_i] \in 2^{\Omega(n)}$. The strategy $\pi_n$ is obtained by ``concatenating'' all $\varrho^i_n$ (for all counters $c_i$ such that $\bv_i \neq \infty$), where $\varrho^i_n$ is executed for $S_i(R(n))$ steps since initiating the simulation of $\tau^i_{R(n)}$ (if $\tau^i_{R(n)}$ produces a maximal computation, the simulation of $\tau^i_{R(n)}$ by $\varrho^i_n$ terminates immediately). Hence, the total length of the whole simulation is bounded by $S(n) = \sum_{i, \bv_i \neq \infty} \left( |Q| + S_i(R(n) \right)$. Observe that $S(n) \in 2^{\calO(n)}$, and for every counter $c_i$ such that $\bv_i = \infty$ we have that the value of $c_i$ after performing $\pi_n$ in the above indicated way is at least  $\lfloor F_i(n)/(d{+}1)\rfloor$, which is $2^{\Omega(n)}$.

\subsection{An Algorithm Computing $\Vect_{\A}$}
\label{app-alg-vect}

In this section, we present an algorithm computing the set $\Vect_{\A}(\eta)$ for every SCC $\eta$ of $\D_\A$. Recall that $\Vect_\A(\eta)$ consists of \emph{all} $\bu$ such that there is a path $\eta_1,\ldots,\eta_m$ where $\eta_1$ is a root of $\D_\A$, $\eta_m = \eta$, and $\bu = \bv_m$. 

The sets $\Vect_{\A}(\eta)$ are computed by Algorithm~\ref{alg-vectors}. In particular, at lines~\ref{root-start}--\ref{root-end},  $\Vect_\A(\eta)$ is set to $\V_{\eta}(\vec{1})$ for every root $\eta$ of $\D(\A)$. The algorithm then follows the top-down acyclic structure of $\D(\A)$ and computes $\Vect_\A(\eta)$ for the remaining components. 

\SetAlgorithmName{Algorithm}{algo}{Algorithms}
\SetAlFnt{\small}
\SetInd{0ex}{3ex}
\begin{algorithm}[tb]
	\SetAlgoLined
	\DontPrintSemicolon
	\SetKwInOut{Parameter}{parameter}\SetKwInOut{Input}{input}\SetKwInOut{Output}{output}
	\SetKwData{C}{C}\SetKwData{D}{D}\SetKwData{MX}{M}\SetKwData{f}{f}
	\SetKw{Or}{or}
	\SetKw{And}{and}
	\SetKw{Where}{where}
	\Input{A Demonic decomposition $\D(\A)$ of a demonic VASS $\A$} 
	\Output{The function $\Vect_\A$}
	\BlankLine
	\ForEach{vertex $\mu$ of $\D(\A)$\label{empty-start}}{
		$\Vect_\A(\mu) := \emptyset$\;
	}\label{empty-end}
	$\Aux := $ the set of all vertices of $\D(\A)$\;
	\ForEach{$\eta \in \Aux$ \Where $\eta$ is a root of $\D(\A)$\label{root-start}}{
		$\Vect_\A(\eta) := \{\V_{\eta}(\vec{1})\}$\;
		$\Aux := \Aux \smallsetminus \{\eta\}$ \;
	}\label{root-end}    
	\While{$\Aux \neq \emptyset$}{
		$\eta := $ an element of $\Aux$ such that $\Pre(\eta) \cap \Aux = \emptyset$ \;
		$\Aux := \Aux \smallsetminus \{\eta\}$ \;
		\ForEach{$\mu \in \Pre(\eta)$}{
			\ForEach{$\bv \in \Vect_\A(\mu)$}{
				$\Vect_\A(\eta) := \Vect_\A(\eta) \cup \{\V_{\eta}(\bv)\}$
			}
		}
		
	}
	\caption{Computing the function $\Vect_\A$}
	\label{alg-vectors}
\end{algorithm}

\section{Proofs for Section~\ref{sec-VASS-games}}
\label{app-games-lower}

\subsection{$\PSPACE$ lower bounds of Theorem~\ref{thm-main-games}}

In this section, we prove the $\PSPACE$ lower bounds of Theorem~\ref{thm-main-games}. We use a modified  construction of Lemma~\ref{lem-hardness1} to obtain a reduction from the QBF problem. Intuitively, the only change is that Angel determines the assignment for universally quantified propositional variables. Let us note that in the original construction of \cite{KLV:VASS-Grzegorczyk}, Angel also selected the clause to be checked. Here, we use the construction of Lemma~\ref{lem-hardness1} based on the $\min$ gadgets.
Let 
\[
\psi \quad \equiv \quad \forall x_1 \, \exists y_1 \, \forall x_2 \ldots \exists y_v
\ C_1\wedge \ldots \wedge C_m
\]
be a quantified Boolean formula where each clause $C_i$ contains precisely three literals (recall that the (in)validity of a given quantified Boolean formula is a $\PSPACE$ complete problem).

Consider the VASS game $\A_\psi$ defined by the VASS~Program~\ref{alg-Apsi}, where $k \geq 2$ is a constant. Note that the only difference between $\A_\psi$ and the demonic VASS $\A_\varphi$ (cf.\ the VASS~Program~\ref{alg-Aphi}) is that the valuation for the universally quantified variables is chosen by Angel. Formally, the gadget for D-\textbf{choose} is the same as the one for \textbf{choose} (see (Fig.~\ref{fig-gadgets}), and the gadget for A-\textbf{choose} is also the same except that the newly added \textit{in} state of the gadget is angelic. Note that \emph{all other states} of $\A_\psi$, including the states in gadgets pumping the $x_i,y_i$ counters, are demonic. Using the observations of Lemma~\ref{lem-hardness1}, it is easy to see that 
\begin{itemize}
	\item if $\psi$ is valid, then $\calL_{\A_\psi} \in \Theta(n^{k+1})$ and \(\calC_{\A_\psi}[s_m] \in \Theta(n^{k}) \);
	\item if $\psi$ is not valid, then $\calL_{\A_\psi} \in \Theta(n^{k})$ and \(\calC_{\A_\psi}[s_m] \in \Theta(n) \).
\end{itemize}
From this we immediately obtain the $\PSPACE$ lower bounds of Theorem~\ref{thm-main-games}.

\SetAlgorithmName{VASS Program}{vassprog}{VASS Programs}
\SetAlFnt{\small}
\SetInd{0ex}{1.55ex}
\begin{algorithm}[tb]
	\SetAlgoLined
	\DontPrintSemicolon
	\SetKwInOut{Parameter}{parameter}\SetKwInOut{Input}{input}\SetKwInOut{Output}{output}
	\SetKwData{C}{C}\SetKwData{D}{D}\SetKwData{MX}{M}\SetKwData{f}{f}
	\SetKw{Choose}{choose:}	
	\SetKw{Or}{or}
	\SetKw{And}{and}
	%	\Input{A patrolling graph $G$, a regular strategy $\eta$}
	%	\Output{A regular strategy $\eta'$}
	\BlankLine
	% \tcp{shift the density function back}
	%	compute $\nabla \AU[v,m,t]$ for all $(v,m) \in \source(E_\sigma)$\;
	%   $i  \texttt{ = } 0$; \ \  $\sigma' \texttt{ = } \eta$ \; 
	$d_2 \leftarrow d_1 * e_1;\ \ d_3 \leftarrow d_2 * e_2; \ \cdots \ ; \ \ d_k \leftarrow d_{k-1} * e_{k-1};$ \;
	\ForEach{$i = 1,\ldots,v$}{
		A-$\Choose$\ \ $x_i  \leftarrow d_k$ \Or  $\bar{x}_i  \leftarrow d_k;$\;
		D-$\Choose$\ \ $y_i  \leftarrow d_k$ \Or  $\bar{y}_i  \leftarrow d_k;$\;
	}
	$s_0  \leftarrow d_k;$\;
	\ForEach{$i = 1,\ldots,m$}{
		D-$\Choose$\ \ $s_i \leftarrow \min(\ell^i_1,s_{i-1})$
		\Or $s_i \leftarrow \min(\ell^i_2,s_{i-1})$
		\Or $s_i \leftarrow \min(\ell^i_3,s_{i-1});$\;
	}
	$f \leftarrow s_m * n$
	\caption{$\A_\psi$}
	\label{alg-Apsi}
\end{algorithm}

\subsection{A Proof of Proposition~\ref{prop-upper-games}}
\label{app-games-upper}
First, let us restate the proposition.

\PropPolyGames*
\bigskip

\noindent
\textbf{Convention.} For notation simplification, we assume that the counter update vector $\bu$ in every transition $(p,\bu,q)$ where $p$ is angelic satisfies $\bu = \vec{0}$ (a transition $(p,\bu,q)$ where $\bu \neq \vec{0}$ can be split into $(p,\vec{0},q')$, $(q',\bu,q)$ where $q'$ is a fresh demonic state).
\smallskip

Let us fix a vertex $([p],L)$ of $\calG_\A$, a vector $\bv \in \N^d_\infty$, and a counter $c_i$ such that $\bv_i \neq \infty$. First, consider the case when $([p],L)$ is a leaf of $\calG_\A$. Then $([p],L)$ is demonic, and can be seen as a strongly connected demonic VASS. The claim follows trivially by applying Lemma~\ref{lem-simult} to $([p],L)$ and $\bv$.

Now suppose that $([p],L)$ is a demonic vertex of $\calG_\A$ with successors $([q_1],L),\ldots,([q_m],L)$.
Note that $([p],L)$ can be seen as a strongly connected demonic VASS after deleting all transitions from/to the states outside $([p],L)$. Let $F$ be a $\bv$-consistent function. By applying Lemma~\ref{lem-simult} to $([p],L)$, the vector $\bv$, and $F$, we obtain an infinite family of Demon's strategies $\Pi$ and a function $S \in 2^{\calO(n)}$ such that the vector $G(n)$ of counter values in the last configuration of $\beta_n[\bv,p,F,\Pi,S]$ satisfies the following for every $\ell \in \{1,\ldots,d\}$:
\begin{itemize}
	\item $G_\ell(n) \in \Theta(n^k)$ if $\bv_\ell \neq \infty$ and $\C^p_{([p],L)}[c_\ell,\bv] \in \Theta(n^k)$;
	\item $G_\ell(n) \in 2^{\Omega(n)}$ if $\bv_\ell = \infty$ or $\C^p_{([p],L)}[c_\ell,\bv] \in 2^{\Omega(n)}$.
\end{itemize}
%We may also safely assume that the last control state of $\beta_n[\bv,p,F,\Pi,S]$ is $q_j$ (for all sufficiently large $n$).
%
Let $\bu = \V_{([p],L)}(\bv)$, where $\V$ is the function defined in Section~\ref{sec-upper}. Note that $G$ is a \mbox{$\bu$-consistent} function. If $\bu_i = \infty$, we define $\pi_\bv$ as the strategy which behaves like the strategy $\pi_n$ of $\Pi$ for every computation of $\A_L$ initiated in $p\,F(n)$. Clearly, $\mmax[c_i](\Comp_{\A_L}^{\sigma,\pi_\bv}(p\,F(n))) \in 2^{\Omega(n)}$ for every Angel's strategy $\sigma$. If $\bu_i \neq \infty$, for every $j \in \{1,\ldots,m\}$ we apply the induction hypothesis to $([q_j],L)$, the vector $\bu$, the counter $c_i$, and the $\bu$-consistent function $G$. Thus, we obtain that for every $j \in \{1,\ldots,m\}$ one of the following possibilities holds:

\begin{itemize}
	\item There exists a $k_j \in \N$, a simple locking Angel's strategy $\sigma^j_\bu$ in $\A_{L}$ and a Demon's strategy $\pi^j_\bu$ in $\A_{L}$ such that $\sigma^j_\bu$ is independent of $G$ and
	\begin{itemize}
		\item for every Demon's strategy $\pi$ in $\A_{L}$, we have $\mmax[c_i](\Comp_{\A_{L}}^{\sigma^j_\bu,\pi}(q_j\,G(n))) \in \calO(n^{k_j})$;
		\item for every Angel's strategy $\sigma$ in $\A_{L}$, we have
		$\mmax[c_i](\Comp_{\A_{L}}^{\sigma,\pi^j_\bu}(q_j\,G(n))) \in \Omega(n^{k_j})$.  
	\end{itemize}
	\item There is a Demon's strategy $\pi^j_\bu$ in $\A_{L}$ such that for every Angel's strategy $\sigma$ in $\A_{L}$ we have that 
	$\mmax[c_i](\Comp_{\A_{L}}^{\sigma,\pi^j_\bu}(q_j\,G(n))) \in 2^{\Omega(n)}$.
\end{itemize}

\noindent
We distinguish two cases:

\begin{itemize}
	\item There is $j \in \{1,\ldots,m\}$ such that the second possibility holds. Let $\pi_\bv$ be a Demon's strategy such that, for an initial configuration $p\,F(n)$, $\pi_\bv$ starts by simulating the strategy $\pi_n$ of $\Pi$ until a configuration with the $G(n)$ vector of counter values is reached. Then, $\pi_\bv$ takes the shortest path to a configuration $q_j \bw$, and then switches to simulating $\pi^j_\bu$ for the initial configuration $q_j G(n')$ where $n' \in \N$ is the largest number such that $G(n') \leq \bw$.    
	The properties of $\pi^j_\bu$ (see above) imply that 
	$\mmax[c_i](\Comp_{\A_{L}}^{\sigma,\pi_\bv}(p\,F(n))) \in 2^{\Omega(n)}$ for an arbitrary Angel's strategy $\sigma$.
	\item For all $j \in \{1,\ldots,m\}$, the first possibility holds. Let $k$ be the maximum of all $k_j$ for $j \in \{1,\ldots,m\}$. Furthermore, let $\sigma_\bv$ be a simple locking strategy which for a computation initiated in $p\, F(n)$ behaves like the simple locking strategy $\sigma_{\bu}^j$ when the computation enters a control state of $([q_j],L)$. Note that a computation initiated in $p\, F(n)$ cannot visit an angelic control state before leaving the component $([p],L)$, and the decisions taken by simple locking strategies may depend on the sequence of previously visited components of $\calG_\A$. Hence, $\sigma_\bv$ is a simple locking strategy. For every Demon's strategy $\pi$, we have that $\mmax[c_i](\Comp_{\A_{L}}^{\sigma_\bv,\pi}(p\,F(n))) \in \calO(n^{k})$ by our choice of $k$ and the properties of $\sigma_{\bu}^j$ strategies (see above). Now consider the Demon's strategy $\pi_\bv$ defined in the same way as in the previous case, where $j$ is the index such that $k = k_j$. Then, the properties of $\pi^j_\bu$ imply that 
	$\mmax[c_i](\Comp_{\A_{L}}^{\sigma,\pi_\bv}(p\,F(n))) \in \Omega(n^k)$ for an arbitrary Angel's strategy $\sigma$.	 
\end{itemize}

Finally, suppose that $([p],L)$ is angelic. Then $[p] = \{p\}$, and let $(p,\vec{0},q_1),\ldots,(p,\vec{0},q_m)$ be the outgoing transitions of $p$ (see the Convention above). For every $j \in \{1,\ldots,m\}$, let $L_j = L \cup \{(p,\vec{0},q_j)\}$. 
Let $F$ be a $\bv$-consistent function. By induction hypothesis, for every $j \in \{1,\ldots,m\}$, one of the following possibilities holds:
\begin{itemize}
	\item there exists a $k_j \in \N$, a simple locking Angel's strategy $\sigma^j_\bv$ in $\A_{L_j}$ and a Demon's strategy $\pi^j_\bv$ in $\A_{L_j}$ such that $\sigma^j_\bv$ is independent of $F$ and
	\begin{itemize}
		\item for every Demon's strategy $\pi^j$ in $\A_{L_j}$, we have $\mmax[c_i](\Comp_{\A_{L_j}}^{\sigma^j_\bv,\pi^j}(p\,F(n))) \in \calO(n^{k_j})$;
		\item for every Angel's strategy $\sigma^j$ in $\A_{L_j}$, we have
		$\mmax[c_i](\Comp_{\A_{L_j}}^{\sigma^j,\pi^j_\bv}(p\,F(n))) \in \Omega(n^{k_j})$.  
	\end{itemize}
	\item there is a Demon's strategy $\pi^j_\bv$ in $\A_{L_j}$ such that for every Angel's strategy $\sigma^j$ in $\A_{L_j}$ we have that 
	$\mmax[c_i](\Comp_{\A_{L_j}}^{\sigma^j,\pi^j_\bv}(p\,F(n))) \in 2^{\Omega(n)}$.
\end{itemize}
Strictly speaking, the induction hypothesis applies to the initial configurations $q_j\, F(n)$, but  since $p$ is a control state of $A_{L_j}$ and $(p,\vec{0},q_j)$ is the only out-going transition of $p$ in $\A_{L_j}$, the above claim follows immediately.

Let us first assume that there exists $j$ such that the first possibility holds. Then, we fix a $j$ such that $k_j$ is minimal, and we put $k = k_j$. Consider a simple locking strategy $\sigma_\bv$ in $\A_L$ which starts by locking the transition $(p,\vec{0},q_j)$ and then proceeds by simulating the simple locking strategy $\sigma^j_{\bv}$. For every Demon's strategy $\pi$, we have that $\Comp_{\A_L}^{\sigma_\bv,\pi}(p\,F(n))$ is the \emph{same} computation as $\Comp_{\A_{L_j}}^{\sigma^j_{\bv},\pi}(p\,F(n))$. Hence, $\mmax[c_i](\Comp_{\A_L}^{\sigma_\bv,\pi}(p\,F(n))) \in \calO(n^k)$ by induction hypothesis.

Now consider a Demon's strategy $\pi_\bv$ in $\A_L$ defined as follows. Let $q_1 \bv_1,\ldots,q_\ell \bv_\ell$ be a computation in $\A_L$ initiated in a configuration $p\, F(n)$ such that $q_\ell$ is demonic. Furthermore, let $\alpha \equiv q_1 \tran{\bu_1} q_2 \tran{\bu_2} \cdots \tran{\bu_{\ell-1}} q_\ell$ be the associated path in $\A_L$. The \emph{mode} of $\alpha$ is the $j$ such that $(p,\vec{0},q_j)$ is the \emph{last} outgoing transition of $p$ occurring in $\alpha$. The \emph{$j$-th} projection of $\pi$, denoted by $\pi_j$, is then obtained by concatenating all subsequences of $\alpha$ initiated by the transition $(p,\vec{0},q_j)$ and terminated either by the next occurrence of $p$ or by $q_\ell$. Consider the computation $\gamma_j$ in $\A_{L_j}$ obtained by performing $\pi_j$ from the initial configuration $p\, \lfloor F(n)/m \rfloor$, where $\lfloor F(n)/m \rfloor(i) = \lfloor F_i(n)/m \rfloor$. The transition selected by $\pi_\bv$ after performing the computation $q_1 \bv_1,\ldots,q_\ell \bv_\ell$ is the transition selected by $\pi^j_\bv$, after performing~$\gamma_j$. Intuitively, $\pi_\bv$ ``switches'' among the strategies $\pi^1_\bv,\ldots,\pi^m_\bv$ according to the current mode, and thus simulates computations of $\A_{L_1},\ldots,\A_{L_m}$ initiated in $p\, \lfloor F(n)/m \rfloor$. The computation first reaching a terminal configuration is simulated \emph{completely}. Hence, there exists $j$ such that 
$\Comp_{\A_{L}}^{\sigma,\pi_\bv}(p\,F(n))$ ``subsumes''  $\Comp_{\A_{L}}^{\sigma^j,\pi^j_\bv}(p\,\lfloor F(n)/m \rfloor)$, where $\sigma^j$ is the ``projection'' of $\sigma$ into $\A_{L_j}$. This implies $\mmax[c_i](\Comp_{\A_{L}}^{\sigma,\pi_\bv}(p\,F(n))) \in \Omega(n^{k})$ by induction hypothesis and our choice of~$k$.

Finally, assume that the second possibility holds for \emph{all} $j \in \{1,\ldots,m\}$. Then, we construct a Demon's strategy $\pi_\bv$ in the same way as above, and conclude (by the same reasoning) that 
$\mmax[c_i](\Comp_{\A_{L}}^{\sigma,\pi_\bv}(p\,F(n))) \in 2^{\Omega(n)}$ for an arbitrary Angel's strategy $\sigma$.


\begin{thebibliography}{10}
	
	\bibitem{DBLP:journals/entcs/AlbertAGGPRRZ09}
	E.~Albert, P.~Arenas, S.~Genaim, M.~G{\'{o}}mez{-}Zamalloa, G.~Puebla, D.~V.
	Ram{\'{\i}}rez{-}Deantes, G.~Rom{\'{a}}n{-}D{\'{\i}}ez, and D.~Zanardini.
	\newblock Termination and cost analysis with {COSTA} and its user interfaces.
	\newblock {\em ENTCS}, 258(1):109--121, 2009.
	
	\bibitem{ADFG10:lexicographic-ranking-flowcharts}
	C.~Alias, A.~Darte, P.~Feautrier, and L.~Gonnord.
	\newblock Multi-dimensional rankings, program termination, and complexity
	bounds of flowchart programs.
	\newblock In {\em Proceedings of {SAS} 2010}, pages 117--133, 2010.
	
	\bibitem{BCKNV:probVASS-linear-termination}
	T.~Br\'{a}zdil, K.~Chatterjee, A.~Ku{\v{c}}era, P.~Novotn{\'{y}}, and D.~Velan.
	\newblock Deciding fast termination for probabilistic {VASS} with
	nondeterminism.
	\newblock In {\em Proceedings of {ATVA 2019}}, volume 11781 of {\em LNCS},
	pages 462--478. Springer, 2019.
	
	\bibitem{BCKNVZ:VASS-linear-termination}
	T.~Br\'{a}zdil, K.~Chatterjee, A.~Ku{\v{c}}era, P.~Novotn{\'{y}}, D.~Velan, and
	F.~Zuleger.
	\newblock Efficient algorithms for asymptotic bounds on termination time in
	{VASS}.
	\newblock In {\em Proceedings of LICS 2018}, pages 185--194. ACM Press, 2018.
	
	\bibitem{BW:nondet-languages}
	M.~Broy and M.~Wirsing.
	\newblock On the algebraic specification of nondeterministic programming
	languages.
	\newblock In {\em Proceedings of CAAP'81}, volume 112 of {\em LNCS}, pages
	162--179. Springer, 1981.
	
	\bibitem{conf/cav/Carbonneaux0RS17}
	Q.~Carbonneaux, J.~Hoffmann, T.~W. Reps, and Z.~Shao.
	\newblock Automated resource analysis with coq proof objects.
	\newblock In {\em Proceedings of {CAV} 2017}, pages 64--85, 2017.
	
	\bibitem{CLLLM:VASS-reach-nonelem}
	W.~Czerwi{\'{n}}ski, S.~Lasota, R.~Lazi{\'{c}}, J.~Leroux, and F.~Mazowiecki.
	\newblock The reachability problem for {Petri} nets is not elementary.
	\newblock In {\em Proceedings of STOC 2019}, pages 24--33. ACM Press, 2019.
	
	\bibitem{conf/aplas/Flores-MontoyaH14}
	A.~F.{-}Montoya and R.~H{\"{a}}hnle.
	\newblock Resource analysis of complex programs with cost equations.
	\newblock In {\em Proceedings of {APLAS} 2014}, pages 275--295, 2014.
	
	\bibitem{journals/jar/GieslABEFFHOPSS17}
	J.~Giesl, C.~Aschermann, M.~Brockschmidt, F.~Emmes, F.~Frohn, C.~Fuhs,
	J.~Hensel, C.~Otto, M.~Pl{\"{u}}cker, P.~Schneider{-}Kamp, T.~Str{\"{o}}der,
	S.~Swiderski, and R.~Thiemann.
	\newblock Analyzing program termination and complexity automatically with
	aprove.
	\newblock {\em J. Autom. Reasoning}, 58(1):3--31, 2017.
	
	\bibitem{SPEED2}
	S.~Gulwani, K.~K. Mehra, and T.~M. Chilimbi.
	\newblock {SPEED:} precise and efficient static estimation of program
	computational complexity.
	\newblock In {\em Proceedings of {POPL} 2009}, pages 127--139, 2009.
	
	\bibitem{journals/toplas/0002AH12}
	J.~Hoffmann, K.~Aehlig, and M.~Hofmann.
	\newblock Multivariate amortized resource analysis.
	\newblock {\em {ACM} Trans. Program. Lang. Syst.}, 34(3):14:1--14:62, 2012.
	
	\bibitem{KLV:VASS-Grzegorczyk}
	A.~Ku{\v{c}}era, J.~Leroux, and D.~Velan.
	\newblock Efficient analysis of {VASS} termination complexity.
	\newblock In {\em Proceedings of LICS 2020}, pages 676--688. ACM Press, 2020.
	
	\bibitem{Leroux:Polynomial-termination-VASS}
	J.~Leroux.
	\newblock Polynomial vector addition systems with states.
	\newblock In {\em Proceedings of ICALP 2018}, volume 107 of {\em LIPIcs}, pages
	134:1--134:13. Schloss Dagstuhl, 2018.
	
	\bibitem{Lipton:PN-Reachability}
	R.~Lipton.
	\newblock The reachability problem requires exponential space.
	\newblock Technical report~62, Yale University, 1976.
	
	\bibitem{MM:containment-Petri}
	E.W.{} Mayr and A.R.{} Meyer.
	\newblock The complexity of the finite containment problem for {Petri} nets.
	\newblock {\em JACM}, 28(3):561--576, 1981.
	
	\bibitem{Papadimitriou:book}
	Ch.~Papadimitriou.
	\newblock Computational Complexity.
	\newblock Addison, 1994.
	
	\bibitem{SZV14}
	M.~Sinn, F.~Zuleger, and H.~Veith.
	\newblock A simple and scalable static analysis for bound analysis and
	amortized complexity analysis.
	\newblock In {\em Proccedings of CAV 2014}, pages 745--761, 2014.
	
	\bibitem{journals/jar/SinnZV17}
	M.~Sinn, F.~Zuleger, and H.~Veith.
	\newblock Complexity and resource bound analysis of imperative programs using
	difference constraints.
	\newblock {\em J. Autom. Reasoning}, 59(1):3--45, 2017.
	
	\bibitem{Zuleger:VASS-polynomial}
	F.~Zuleger.
	\newblock The polynomial complexity of vector addition systems with states.
	\newblock In {\em Proceedings of {FoSSaCS 2020}}, volume 12077 of {\em LNCS},
	pages 622--641. Springer, 2020.
	
\end{thebibliography}
\end{document}